\newtheorem{theorem}{Theorem}
\newtheorem{lemma}[theorem]{Lemma}
\newtheorem{assumption}{Assumption~A-\kern-0pt}
\newtheorem{proposition}[theorem]{Proposition}
\newtheorem{corollary}[theorem]{Corollary}
\newtheorem*{remark*}{Remark}
\DeclareMathOperator{\tr}{tr}
\DeclareMathOperator*{\argmax}{\arg\!\max}
\newcommand{\asto}{\overset{\rm a.s.}{\longrightarrow}}
\title{Optimal Design of the  Adaptive Normalized Matched Filter Detector}
\author{Abla Kammoun, Romain Couillet, Fr\'ed\'eric Pascal, Mohamed-Slim Alouini}
\begin{document}
\maketitle
\begin{abstract}
This article addresses improvements on the design of the adaptive normalized matched filter (ANMF)  for radar detection. It is well-acknowledged that the estimation of the noise-clutter covariance matrix  is a fundamental step in adaptive radar detection. In this paper, we consider regularized estimation methods which force by construction the eigenvalues of the  scatter estimates to be greater than a positive regularization parameter $\rho$. This makes them more suitable for high dimensional problems with a limited number of secondary data samples than traditional sample covariance estimates.
% by construction, the resulting regularized covariance estimators forces  the eigenvalues to be greater than the regularization parameter $\rho$. 
While an increase of $\rho$ seems to improve the conditioning of the estimate, it might however cause it to significantly deviate  from the true covariance matrix. The setting of the optimal regularization parameter is a difficult question for which no convincing answers have thus far been provided. This constitutes the major motivation behind our work. More specifically, we consider the design of the ANMF detector for two kinds of regularized estimators, namely  the regularized sample covariance matrix (RSCM), appropriate when the clutter follows a Gaussian distribution and the  regularized Tyler estimator (RTE) for non-Gaussian spherically invariant distributed clutters. The rationale behind this choice is that the RTE is efficient in mitigating the degradation caused by the presence of  impulsive noises while inducing little loss when the noise is Gaussian. 

Based on recent random matrix theory results studying the asymptotic fluctuations of the statistics of the ANMF detector when the number of samples and their dimension grow together to infinity, we propose a design for the regularization parameter that maximizes the detection probability under constant  false alarm rates. Simulation results which support the efficiency of the proposed method are  provided in order to illustrate the gain of the proposed optimal design over conventional settings of the regularization parameter.  

%  known for its efficiency in cancelling out the degradation caused by the presence of  impulsive noises.
% which is known to be efficient in cancelling out the degradation caused by outliers. 

%Depending on the nature of the clutter, we consider two kinds of regularized estimators: the regularized sample covariance matrix (RSCM) when the clutter has a Gaussian distribution and in the other cases, the regularized Tyler estimation (RTE) which is known to be efficient  in cancelling out the degradation caused by outliers.

\end{abstract}
\begin{keywords}
	Regularized Tyler's estimator, Adaptive Normalized Mached Filter, robust detection, Random Matrix Theory, Optimal design.
\end{keywords}
\section{Introduction}
The estimation of scatter matrices is of fundamental importance for space-time adaptive processing (STAP) which underlies the design  of radar systems \cite{ward-94,klemm-02}. In radar detection for instance, it is well-acknowledged  that a sufficiently accurate scatter matrix is key to enhancing the detection performance (see \cite{esa-12,mahot-13} and references therein). In order to support a possible deficiency in samples (number of samples less than their dimensions), regularized covariance matrix estimation methods have  been proposed \cite{abramovich-81}. One regularized estimation method is given by the use of the regularized sample covariance matrix (RSCM). The RSCM fundamentally originates from the diagonal loading approach which can be traced back to the works of Abramovich and Carlson \cite{abramovich-81,carlson-88}.
%Depending on the nature of the clutter, we distinguish two regularized estimation methods which are the regularized sample covariance matrix (RSCM) and the regularized Tyler estimator (RTE). The RSCM fundamentally originates from the diagonal loading approach which can be traced back to the works of Abramovich and Carlson \cite{abramovich-81,carlson-88}.
% It was then taken up by Ledoit et al \cite{wolf} and \cite{Mestre-06} who raised the question of how to properly set the regularization parameter. 
As a derivative of the sample covariance matrix (SCM), the RSCM inherits its main major limitation of exhibiting poor performances when observations contain outliers. This latter scenario is often modeled by assuming that observations are drawn from complex elliptical symmetric distributions (CES), originally introduced by Kelker \cite{kelker}. The inherent nature of these distributions to produce atypical observations makes the task of estimating the covariance matrix much more challenging. To tackle this  issue, a class of covariance  estimators coined robust estimators of scatter matrices have been proposed by Huber, Hampel, and Maronna \cite{huber1964robust,Huber72,Maronna76}, and extended more recently by Ollila to the complex case \cite{esa-12,mahot-13,pascal2008covariance}.    Similar to the Gaussian case, the regularization technique has been applied to the robust Tyler estimator \cite{tyler},  yielding the so-called regularized Tyler estimator (RTE). While conventional robust covariance methods  are undefined for too few samples (number of samples less than their dimensions), the existence of the RTE as well as the convergence of the associated recursive algorithm are two major findings which have recently been established in several works \cite{chitour-pascal-08,Pascal-2013,ollila-tyler,Palomar-14}. Unlike the RSCM, the RTE, as a derivative of the robust Tyler's estimator, is resilient to the presence of outliers, thereby making it more suitable to radar applications, for which experimental evidence rules out Gaussian models for the clutter \cite{Ward81,Watts85,Nohara91,Billingsley93}. 

As far as regularized estimation methods are concerned, it is essential to determine a clever way of setting the regularization parameter. This question has essentially been investigated in  \cite{wolf,chen-10} for the RSCM and in \cite{ollila-tyler,chen-11} for the RTE. Although yielding different expressions, these works have the common denominator of being merely based on a distance minimization between the RTE or the SCM and the true covariance matrix. It is thus not clear whether these choices will allow for good performances when applied to detection problems.  We consider in this work the design of the adaptive normalized matched filter (ANMF) for radar detection.
Firstly introduced by \cite{conte-95} and analyzed in \cite{liu2011acfar,kraut1999cfar,Pascal04-1}, this scheme was shown to enjoy the interesting features of constant false alarm property with respect to the clutter power and covariance matrix.
This detector is obtained by replacing in the statistic of the normalized matched filter (NMF) the covariance matrix by a given estimate \cite{kraut1999cfar}, which is computed based on secondary data observations, i.e., $n$ signal free independent and identically distributed (i.i.d.) observations. Of interest in this work are the cases where the RSCM or the RTE are used in place of the unknown covariance matrix. We will consider first the scenario where the detector operates over Gaussian correlated clutters and thus uses the RSCM as a replacement for the unknown covariance matrix, a scheme which  will be referred to as ANMF-RSCM. 
In order to come up with an appropriate design for the ANMF detector, it is essential to characterize the behaviour of its corresponding  false alarm and detection probabilities.  

Under the assumption of fixed dimensions, such a task seems to be out of reach.
This has led us to consider in \cite{couillet-kammoun-14} the regime wherein the
  number of secondary data samples and their dimensions grow simultaneously to infinity, thereby allowing for the use of advanced tools from random matrix theory. The asymptotic false alarm probability for the ANMF-RTE was in particular derived in \cite{couillet-kammoun-14} (yet only as a mere application example of the main mathematical result.). 
In order to allow for an optimal choice of the regularization parameter, these results need to be augmented with an asymptotic analysis of the detection probability. This constitutes the main contribution of our work.
In particular, we extend the results of \cite{couillet-kammoun-14} to the ANMF-RSCM detector by showing that its corresponding statistic flucutates as a Rayleigh distribution when no target is present and, additionally, establish that it behaves like a Rice distributed random variable otherwise.  Based on the asymptotic characterization of these distributions, we  propose an optimal setting of the regularization parameter that maximizes the asymptotic probability of detection for any given false alarm probability.

%\textcolor{red}{
%In order to come up with an appropriate design for the ANMF detector, it is essential to characterize the behaviour of its statistic 
%}
%One major result of our work is that, in this asymptotic regime,  the statistic of the ANMF-RSCM detector fluctuates as a Rayleigh distribution when no target signal is present whereas it behaves like a Rice distributed random variable otherwise. Based on the asymptotic characterization of these distributions, we  propose an optimal setting of the regularization parameter that maximizes the asymptotic probability of detection for any given false alarm probability.
In a second part, we  consider the case where the clutter is drawn from heavy tailed distributions. It is thus natural to assume that the detector uses the RTE, since the RSCM is vulnerable to the presence of outliers and may provide poor performances. This scheme will be coined  ANMF-RTE.
By exploiting recent results on the asymptotic behavior of the RTE estimator \cite{couillet-kammoun-14,couillet-13}, we prove that, up to a certain change of variable, the  ANMF-RTE is asymptotically  equivalent to the ANMF-RSCM when operating over Gaussian clutters. 
%when operating over Gaussian clutters.
%what we expect to obtain by the ANMF-RSCM when operating over Gaussian clutters.
This argues in favor of the role of the RTE to retrieve the Gaussian  performances while operating over heavy distributed clutters. 
Finally, we prove through simulations the  
superiority of our design  to some of the adhoc recent settings of the regularization parameter that have recently been proposed.  The  gain of our design likely  owes to the high accuracy of the derived asymptotic  results in predicting the detection probability of the ANMF schemes.

%\textcolor{red}{For both schemes (ANMF-RSCM and ANMF-RTE),  with other proposed settings of the regularization parameter that has been proposed recently.    }

The remainder of the paper is organized as follows. In the first section, we introduce the considered problem. Then, we propose an optimal design approach for  the ANMF-RTE and the ANMF-RSCM.  Finally, we illustrate using simulations the gain of the proposed design method over conventional settings of the regularization parameter.

{\it Notations:} Throughout this paper, we depict vectors in lowercase boldface letters and matrices in uppercase boldface letters. The notation $(.)^*$ stands for the transpose conjugate  while $\tr(.)$ and $(.)^{-1}$ are the trace and inverse operators.  The notation $\|.\|$ stands for the Euclidean norm for vectors and for spectral norm for matrices. The arrow $\asto$ designates almost sure convergence. The statement $X\triangleq Y$ defines the new notation $X$ as being equal to $Y$.  
\section{Problem Statement}
We consider the problem of detecting a complex signal vector ${\bf p}$ corrupted by an additive noise as:
$$
{\bf y}=\alpha {\bf p}+{\bf x}
$$
where ${\bf y}\in\mathbb{C}^{N}$ represents the vector received by an  $N$-dimensional array of sensors, ${\bf x}$ stands for the noise clutter and $\alpha$ is a complex scalar modeling the unknown target amplitude. The signal detection problem is phrased as the following binary hypothesis test:
\begin{equation}
\left\{
\begin{array}{ll}
H_1:&{\bf y}=\alpha {\bf p} +{\bf x} \\
H_0:& {\bf y}={\bf x}.
\end{array}
\right.
\label{eq:hypothesis}
\end{equation}
Several models for the clutter ${\bf x}$ have been proposed.  Among them, we distinguish the class of  CES random variates which encompass most of the commonly encountered random models, including the standard Gaussian distribution, the K-distribution, the Weibull distribution and many others \cite{esa-12}. 
A CES distributed random variable is given by:
$$
{\bf x}=\sqrt{\tau}{\bf C}_N^{\frac{1}{2}}\tilde{\bf w}
$$
where ${\tau}$ is a positive scalar random variable called the {\it texture}, ${\bf C}_N$ is the covariance matrix\footnote{Note that when the second order statistics exist, the scatter matrix is equal to the covariance matrix (up to a constant).} and $\tilde{\bf w}$ is an $N$-dimensional vector independent of $\tau$, zero-mean unitarily invariant  with norm $\|\tilde{\bf w}\|=\sqrt{N}$. The quantity ${\bf C}_N^{\frac{1}{2}}\tilde{\bf w}$  is referred to as {\it speckle}.  The design of an appropriate statistic to the above hypothesis test depends on the amount of knowledge that is available to the detector. 
If the clutter is Gaussian with known covariance matrix ${\bf C}_N$   while $\alpha$ is unknown, the Generalized Likelihood Ratio (GLRT) for the detection problem in \eqref{eq:hypothesis} results in the following test statistic:
$$
T_N=\frac{\left|{\bf y}^{*}{\bf C}_N^{-1}{\bf p}\right|}{\sqrt{{\bf y}^{*}{\bf C}_N^{-1}{\bf y}{\bf p}^{*}{\bf C}_N^{-1}{\bf p}}} 
$$
which corresponds to the square-root statistic of the ANMF detector. 
The statistic $T_N$ has been derived independently by several works, thereby leading the corresponding detector to have  many alternative names: the constant false alarm (CFAR) matched subspace detector (MSD) \cite{scharf-94}, the normalized matched filter (NMF) \cite{conte-02}, or the Linear Quadratic GLRT (LQ-GLRT) \cite{Gini-02}. If the clutter is elliptically distributed, optimal detection procedures based on the GLRT principle lead to statistics that depend on the distribution of the texture $\tau$.
Nevertheless, a complete knowledge of the statistics of the signal and noise cannot be acquired in practice. A reasonable hypothesis, largely used  in radar detection, is to assume that only ${\bf p}$ is known while $\alpha$ and the statistics of the noise are
ignored. To handle this case, the use of $T_N$ whose optimality has only been shown in the Gaussian setting has been advocated as a good detection technique. Such a choice has particularly been motivated by the result of \cite{conte-95} showing the asymptotic optimality of $T_N$, when $N$ becomes increasingly large, under the setting of compound-Gaussian distributed clutters. From the expression of $T_N$, it can be seen that the detector is only required to know ${\bf C}_N$ up to a scale factor, which is much less restrictive than the requirement of optimal detection strategies.

Since the covariance matrix ${\bf C}_N$ is unknown in practice, a popular approach consists in replacing in ${ T}_N$ the unknown covariance matrix ${\bf C}_N$ by an estimate built on signal free i.i.d. observations ${\bf x}_1,\cdots,{\bf x}_n$, termed  secondary data. The resulting detector is called  the adaptive normalized matched filter (ANMF). Several concurrent estimators of ${\bf C}_N$ can be used. The most popular one is the traditional sample covariance matrix (SCM) given by:
$$
\widehat{\bf R}_N=\frac{1}{n}\sum_{i=1}^n {\bf x}_i{\bf x}_i^*
$$
which corresponds to the Maximum-Likelihood estimator (MLE) if the clutter is Gaussian distributed. However, in some scenarios where the available number of observations $n$ is of the same order or smaller than $N$,  the SCM, being ill-conditioned, will not lead to accurate detection results\footnote{Traditionally, it is assumed that $2N$ observations are required to ensure good performances of the sub-optimal filtering, i.e., a $3$ dB loss of the output SNR compared to optimal filtering \cite{Reed-74}.}. A practical  approach that has received considerable attention is to regularize the SCM, thereby yielding the regularized SCM (RSCM) given by:
 \begin{equation}
\widehat{\bf R}_N(\rho)=(1-\rho)\widehat{\bf R}_N+\rho{\bf I}_N,
\label{eq:R_scm}
\end{equation} where the parameter $\rho\in\left[0,1\right]$ serves to give more or less importance to the sample covariance matrix $\widehat{\bf R}_N$ depending on the available number of samples. The ANMF that uses the RSCM as a plug-in estimator of ${\bf C}_N$ will be referred to as ANMF-RSCM.

While this regularization artifice has revealed  efficient in handling the scarcity of the available samples, it has the serious drawback of fundamentally relying on the SCM. In effect, the SCM, even though suitable for Gaussian settings, is known to be vulnerable to outliers and thus leads to highly inefficient estimators when the samples are drawn from heavy tailed non-Gaussian distributions. 
 A standard alternative to conventional sample covariance estimates is constituted by the class of robust-scatter estimators, known for their resilience to atypical observations. The robust estimator that will be considered in this work was defined in \cite{Pascal-2013} as the unique solution $\hat{\bf C}_N(\rho)$ to:
 \begin{equation}
\hat{\bf C}_N(\rho)=(1-\rho)\frac{1}{n}\sum_{i=1}^n \frac{{\bf x}_i{\bf x}_i^*}{\frac{1}{N}{\bf x}_i^*\hat{\bf C}_N^{-1}(\rho){\bf x}_i} + \rho {\bf I}_N.
\label{eq:robust}
\end{equation}
with $\rho\in\left(\max\left(0,1-\frac{n}{N}\right),1\right]$. 
This estimator corresponds to a hybrid robust-shrinkage estimator reminding Tyler's M-estimator of scale \cite{tyler} and Ledoit-Wolf's shrinkage estimator \cite{wolf}. We will thus refer to it as the Regularized-Tyler Estimator (RTE). Besides its robustness, the RTE has many interesting features. First,  it is well-suited to situations where $c_N\triangleq\frac{N}{n}$ is large while standard robust scatter estimates are  ill-conditioned or even undefined if $N> n$. By varying the regularization parameter $\rho$, one can move from the  unbiased Tyler-estimator \cite{Pascal-08} ($\rho=0$) to the identity matrix $(\rho=1)$ which represents a crude guess for the unknown covariance ${\bf C}_N$. Its relation to the Tyler's estimator has recently been reported in \cite{ollila-tyler} by viewing it as the solution of a penalized $M$-estimation cost function. We will denote by ANMF-RTE the ANMF detector that uses the RTE instead of the unknown covariance matrix. 

Upon replacing in  $T_N$ the unknown covariance matrix by a regularized estimate, be it the SCM or the RTE, the question of how should the regularization parameter $\rho$ be set naturally arises. Recent previous works dealing with this issue propose to set $\rho$ in such a way as to minimize a certain mean-squared-error  between $\hat{\bf C}_N$ and ${\bf C}_N$ \cite{ollila-tyler,abramovich-13}. While easy-to-compute estimates of these values of $\rho$ were provided, one of the major criticism to these choices is that they are performed regardless of the application under consideration. In particular, a more relevant choice to the application under study consists in selecting the values of $\rho$ that maximize the probability of detection while keeping fixed the false alarm probabilities. These values will be considered as optimal in regards of radar detection applications.

To this end, one needs to characterize the distribution of $\widehat{T}_N^{\rm RSCM}(\rho)$ and $\widehat{T}_N^{\rm RTE}(\rho)$ given by:
\begin{align}
\widehat{T}_N^{\rm RSCM}(\rho)&=\frac{\left|{\bf y}^{*}\widehat{\bf R}_N^{-1}(\rho){\bf p}\right|}{\sqrt{{\bf y}^{*}\widehat{\bf R}_N^{-1}(\rho){\bf y}}\sqrt{{\bf p}^{*}\widehat{\bf R}_N^{-1}(\rho){\bf p}}} \label{eq:scm}\\
\widehat{T}_N^{\rm RTE}(\rho)&=\frac{\left|{\bf y}^{*}\widehat{\bf C}_N^{-1}(\rho){\bf p}\right|}{\sqrt{{\bf y}^{*}\widehat{\bf C}_N^{-1}(\rho){\bf y}}\sqrt{{\bf p}^{*}\widehat{\bf C}_N^{-1}(\rho){\bf p}}} \label{eq:rte}
\end{align}
under hypotheses $H_0$ and $H_1$.
For fixed $N$ and $n$, this is not an easy task and in our opinion would not lead, if ever feasible, to easy-to-compute expressions for the optimal values of $\rho$. In this paper, we relax this restrictive assumption by considering the case where $N$ and $n$ go to infinity with $\frac{N}{n}\to c\in(0,\infty)$. This in particular enables  leveraging the recent results of \cite{couillet-kammoun-14} that will be reviewed in Section \ref{sec:background}.

%In order to open up our work to practical considerations, we address the case where the vector ${\bf p}$ is not properly known, in which case, the actual received vector is given by:
%\begin{equation}
%{\bf y}=\alpha {\bf v}+{\bf x}
%\label{eq:mismatch}
%\end{equation}
%where ${\bf v}$ is not necessarily aligned along the assumed steering vector ${\bf p}$. The corresponding statistics remain the same as in \eqref{eq:scm} and \eqref{eq:rte} but with ${\bf y}$ now given by \eqref{eq:mismatch}.
\section{Optimal design of the ANMF-RSCM detector: Gaussian clutter case}
In this section, we consider the case of a Gaussian clutter. In other words, we assume that all the secondary data ${\bf x}_1,\cdots,{\bf x}_n$ are drawn from Gaussian distribution with zero-mean and covariance ${\bf C}_N$. For $\rho\in\left(0,1\right]$, we define the RSCM as in \eqref{eq:R_scm} and corresponding statistic $\widehat{T}_N^{RSCM}$. In order to pave the way towards an optimal setting of the regularization coefficient $\rho$, we need to characterize the asymptotic false alarm and detection probabilities under the assumptions that $c_N\triangleq \frac{N}{n}\to c$. That is, provided $H_0$ or $H_1$ is the actual scenario, $({\bf y}={\bf x}$ or ${\bf y}=\alpha{\bf p}+{\bf x}$), we shall evaluate the probabilities $\mathbb{P}\left[\widehat{T}_N^{\rm RSCM}> \Gamma| H_0\right]$ and  $\mathbb{P}\left[\widehat{T}_N^{\rm RSCM}> \Gamma| H_1\right]$ for $\Gamma >0$. Before going further, we need to stress that some extra assumptions on the order of magnitude of $\alpha$ and $\Gamma$ with respect to $N$ should be made to avoid getting trivial results. Indeed, it appears that under $H_0$, the random quantities $\frac{1}{\sqrt{N}}{\bf y}^*\widehat{\bf R}_N^{-1}(\rho)\frac{\bf p}{\|{\bf p}\|}$, $\frac{1}{N}{\bf y}^{*}\widehat{\bf R}_N^{-1}(\rho){\bf y}$, and ${\bf p}^*\widehat{\bf R}_N^{-1}(\rho)\frac{\bf p}{\|{\bf p}\|^2}$ are standard objects in random matrix theory, which converge almost surely to their  means when both $N$ and $n$ grow to infinity with the same pace\cite{WAG10}. As a result, since $\frac{1}{\sqrt{N}}{\bf y}^*\widehat{\bf R}_N^{-1}(\rho){\bf p}\asto 0$,   $\widehat{T}_N^{\rm RSCM}\asto 0$ for all $\Gamma >0$, which does not allow to infer much information about the false alarm probability. 
It turns out that the proper scaling of $\Gamma$ should be $\Gamma=N^{-\frac{1}{2}} r$ for some fixed $r>0$, an assumption already considered in \cite{couillet-kammoun-14}. Similarly, one can see that under $H_1$, the presence of a signal component in ${\bf y}$ causes  $\widehat{T}_N^{\rm RSCM}$ to converge almost surely to some positive constant if $\alpha$ does not vary with $N$. Therefore, for $\Gamma=N^{-\frac{1}{2}} r$, $\mathbb{P}\left[\widehat{T}_N^{\rm RSCM}> \Gamma| H_1\right]\to 1$. In order to avoid this trivial statement, we shall assume that $\alpha = N^{-\frac{1}{2}}a$ for some fixed $a>0$ with $\|{\bf p}\|=N$. In practice, this means that the dimension of the array is sufficiently large to enable working in low-SNR regimes.

Prior to introducing the results about the false alarm and detection probabilities, we shall introduce the following assumptions and notations:
\begin{assumption}
For $i\in\left\{1,\cdots,n\right\}$, ${\bf x}_i={\bf C}_N^{\frac{1}{2}}{\bf w}_i$ with:
\begin{itemize}
\item ${\bf w}_1,\cdots,{\bf w}_n$ are $N\times 1$ independent standard Gaussian random vectors with zero-mean and covariance ${\bf I}_N$,
\item ${\bf C}_N \in\mathbb{C}^{N\times N}$ is such that %$\nu_N\triangleq \frac{1}{N}\sum_{i=1}^N \delta_{\lambda_i({\bf C}_N)}\to \nu$ weakly with 
$\lim\sup \|{\bf C}_N\| < \infty$ and $\frac{1}{N}\tr{\bf C}_N=1$,
\item %We also assume that:
$
\lim\inf_N \frac{1}{N}{\bf p}^*{\bf C}_N{\bf p} >0.
$
\end{itemize}
\label{ass:gaussian}
\end{assumption}
%The convergence of the spectral measure of ${\bf C}_N$ to 
Note that the normalization $\frac{1}{N}\tr{\bf C}_N=1$ is not a restricting constraint since the statistics under study are invariant to the scaling of ${\bf C}_N$. The last item in Assumption \ref{ass:gaussian} is required for technical purposes in order to ensure that the considered statistic exhibits fluctuations under $H_0$ and $H_1$. In practice, this assumption implies that the steering vector does not lie in the null space of the  covariance matrix ${\bf C}_N$.

Denote for $z\in\mathbb{C}\backslash\mathbb{R}_{+}$ by $m_N(z)$ the unique complex solution  to:
\begin{align*}
m_N(z)&=\left(-z+c_N(1-\rho)\right.\\
&\left.\times\frac{1}{N}\tr {\bf C}_N\left({\bf I}_N+(1-\rho)m_N(z){\bf C}_N\right)^{-1} \right)^{-1}
%\left(-z+c\int\frac{(1-\rho)t}{1+(1-\rho)tm(z)}\nu(dt)\right)^{-1}
\end{align*}
that satisfies $\Im(z)\Im(m_N(z))\geq 0$ or unique positive if $z<0$.
The existence and uniqueness of $m_N(z)$ follows from standard results of random matrix theory \cite{SIL95}. 
It is a deterministic quantity, which can be computed easily for each $z$ using fixed-point iterations. In our case, it helps characterize the asymptotic behavior of the empirical spectral measure of the random matrix ${(1-\rho)}\frac{1}{n}\sum_{i=1}^n {\bf x}_i{\bf x}_i^*$\footnote{Let $\hat{\nu}_N=\frac{1}{N}\sum_{i=1}^N \delta_{\lambda_i}$ be the empirical spectral measure of the random matrix  $\frac{1-\rho}{n}\sum_{i=1}^n {\bf x}_i{\bf x}_i^*$ with $\lambda_1,\cdots,\lambda_N$  the eigenvalues of $\frac{1-\rho}{n}\sum_{i=1}^n {\bf x}_i{\bf x}_i^*$. 
Denote by $\hat{m}_N(z)$ its Stieltjes transform given by $\hat{m}_N(z)=\int (t-z)^{-1}\hat{\nu}_N(dt)=\frac{1}{N}\sum_{i=1}^N \frac{1}{\lambda_i-z}$.
Then, quantity $m_N(z)$ is the Stieljes transform of a certain deterministic measure $\mu_N$, (i.e, $m_N(z)=\int (t-z)^{-1}\mu_N(dt)$) which approximates in the almost sure sense $\hat{m}_N(z)$ (i.e., $\hat{m}_N(z)-m_N(z)\asto 0.$).}.

Define also for $\kappa>0$, $\mathcal{R}_{\kappa}^{\rm SCM}$ as:
$$
\mathcal{R}_{\kappa}^{\rm SCM}\triangleq \left[\kappa,1\right].
$$

With these notations at hand, we are now ready to analyze the asymptotic behaviour of the false alarm and detection probabilities. 
The proof for the following Theorem will not be provided since, as we shall see in Section \ref{sec:non_gaussian_clutter}, it  follows directly by applying the same approach used in \cite{couillet-kammoun-14}.

\begin{theorem}[False alarm probability]
As $N,n\to \infty$ with $c_N\to c\in(0,\infty)$,
\begin{align*}
\sup_{\rho\in\mathcal{R}_{\kappa}^{\rm SCM}}\left|\mathbb{P}\left[\widehat{T}_N^{\rm RSCM}(\rho)> \frac{r}{\sqrt{N}}|H_0\right]-e^{-\frac{r^2}{2\sigma_{N,{\rm SCM}}^2(\rho)}}\right| \asto 0
\end{align*}
where:
\begin{align*}
\sigma_{N,{\rm SCM}}^2(\rho)&\triangleq \frac{1}{2}\frac{{\bf p}^*{\bf C}_N{\bf Q}_N^2(\rho){\bf p}}{{\bf p}^*{\bf Q}_N(\rho){\bf p}\frac{1}{N}\tr {\bf C}_N{\bf Q}_N(\rho)} \\
&\times \frac{1}{1-c(1-\rho)^2m_N(-\rho)^2\frac{1}{N}\tr {\bf C}_N^2{\bf Q}_N^2(\rho)}
\end{align*}
\label{th:false_alarm}
and ${\bf Q}_N(\rho)\triangleq \left({\bf I}_N+(1-\rho)m_N(-\rho){\bf C}_N\right)^{-1}$.
\end{theorem}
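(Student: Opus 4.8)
The plan is to condition, under $H_0$, on the secondary data ${\bf x}_1,\dots,{\bf x}_n$ and to reduce $\widehat{T}_N^{\rm RSCM}(\rho)$ to the modulus of a conditionally Gaussian scalar. Writing ${\bf B}\triangleq\widehat{\bf R}_N^{-1}(\rho)$, ${\bf v}\triangleq{\bf C}_N^{\frac{1}{2}}{\bf B}{\bf p}$ and ${\bf M}\triangleq{\bf C}_N^{\frac{1}{2}}{\bf B}{\bf C}_N^{\frac{1}{2}}$, and using ${\bf y}={\bf x}={\bf C}_N^{\frac{1}{2}}{\bf w}$ with ${\bf w}$ standard Gaussian and independent of ${\bf B}$, the event $\{\widehat{T}_N^{\rm RSCM}(\rho)>r/\sqrt N\}$ reads $\{N|{\bf w}^*{\bf v}|^2>r^2({\bf w}^*{\bf M}{\bf w})({\bf p}^*{\bf B}{\bf p})\}$. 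Conditionally on the secondary data, ${\bf v}$ is deterministic, so ${\bf w}^*{\bf v}$ is \emph{exactly} $\mathcal{CN}(0,\|{\bf v}\|^2)$ with $\|{\bf v}\|^2={\bf p}^*{\bf B}{\bf C}_N{\bf B}{\bf p}$, and $|{\bf w}^*{\bf v}|^2$ is exactly exponential. In contrast with the non-Gaussian ANMF-RTE treated in \cite{couillet-kammoun-14}, no central limit argument is needed at this stage.

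Next I would decouple the numerator from the denominator. The quadratic form ${\bf w}^*{\bf M}{\bf w}$ involves the same ${\bf w}$ as ${\bf w}^*{\bf v}$, but it concentrates: by the trace lemma $\frac{1}{N}{\bf w}^*{\bf M}{\bf w}-\frac{1}{N}\tr{\bf M}\asto0$, while its correlation with the rank-one form $|{\bf w}^*{\bf v}|^2$ is of vanishing relative order. Replacing ${\bf w}^*{\bf M}{\bf w}$ by $\tr{\bf M}=\tr{\bf C}_N{\bf B}$ in the threshold and integrating out the exponential therefore yields, conditionally on the secondary data,
$$\mathbb{P}\!\left[\widehat{T}_N^{\rm RSCM}(\rho)>\tfrac{r}{\sqrt N}\,\middle|\,{\bf B}\right]=\exp\!\left(-\frac{r^2\,(\tr{\bf C}_N{\bf B})\,({\bf p}^*{\bf B}{\bf p})}{N\,{\bf p}^*{\bf B}{\bf C}_N{\bf B}{\bf p}}\right)+o(1).$$

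It then remains to substitute the three ${\bf B}$-dependent forms by their deterministic equivalents. Using the resolvent equivalent ${\bf a}^*\widehat{\bf R}_N^{-1}(\rho){\bf b}-m_N(-\rho)\,{\bf a}^*{\bf Q}_N(\rho){\bf b}\asto0$ (and its normalized-trace version) from \cite{couillet-kammoun-14}, one gets $\frac{1}{N}\tr{\bf C}_N{\bf B}\to m_N(-\rho)\frac{1}{N}\tr{\bf C}_N{\bf Q}_N(\rho)$ and ${\bf p}^*{\bf B}{\bf p}\to m_N(-\rho){\bf p}^*{\bf Q}_N(\rho){\bf p}$; crucially, $m_N(-\rho)$ enters the exponent to the same total degree in the numerator and the denominator and cancels, which is why the final variance is expressed through ${\bf Q}_N(\rho)$ alone. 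The delicate factor is the conditional variance ${\bf p}^*{\bf B}{\bf C}_N{\bf B}{\bf p}$, a second-order form in the resolvent; differentiating the fixed-point relation (equivalently, a one-step feedback expansion) and using that ${\bf Q}_N(\rho)$ commutes with ${\bf C}_N$ gives
$$\frac{{\bf p}^*{\bf B}{\bf C}_N{\bf B}{\bf p}}{m_N(-\rho)^2}\to\frac{{\bf p}^*{\bf C}_N{\bf Q}_N^2(\rho){\bf p}}{1-c(1-\rho)^2m_N(-\rho)^2\frac{1}{N}\tr{\bf C}_N^2{\bf Q}_N^2(\rho)}.$$
Inserting the three equivalents collapses the exponent to $-r^2/(2\sigma_{N,{\rm SCM}}^2(\rho))$, which is the claimed limit for each fixed $\rho$.

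Finally, to promote this pointwise almost-sure convergence to the uniform statement over $\mathcal{R}_\kappa^{\rm SCM}=[\kappa,1]$, I would exploit that $\widehat{\bf R}_N^{-1}(\rho)$, $m_N(-\rho)$, ${\bf Q}_N(\rho)$ and the resulting exponent are Lipschitz (indeed analytic) in $\rho$ on $[\kappa,1]$, with constants uniform in $N$ because $\rho\ge\kappa>0$ keeps all resolvents uniformly bounded; a finite-net plus equicontinuity argument then upgrades the convergence to hold uniformly in $\rho$. I expect the main obstacle to be the second-order deterministic equivalent for ${\bf p}^*\widehat{\bf R}_N^{-1}(\rho){\bf C}_N\widehat{\bf R}_N^{-1}(\rho){\bf p}$ together with a uniform lower bound, over $\rho\in[\kappa,1]$, on the feedback denominator $1-c(1-\rho)^2m_N(-\rho)^2\frac{1}{N}\tr{\bf C}_N^2{\bf Q}_N^2(\rho)$; by contrast, the decoupling and the uniformization are comparatively routine applications of the machinery of \cite{couillet-kammoun-14}.
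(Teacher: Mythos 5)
Your proposal is correct and follows the same skeleton as the paper's (deliberately omitted) argument: the paper defers to \cite{couillet-kammoun-14} and to its own Appendix~A proof of the detection theorem, which likewise isolates the fluctuating bilinear form $\frac{1}{\sqrt N}{\bf x}^*\widehat{\bf R}_N^{-1}(\rho){\bf p}$, lets the denominator terms concentrate on their deterministic equivalents, and concludes by Slutsky plus a uniformization step. The one genuine difference is how the Gaussianity of the numerator is obtained: the paper imports the central limit theorem for bilinear forms from \cite{couillet-kammoun-14}, whereas you exploit the fact that under $H_0$ the primary vector is Gaussian and independent of the secondary data, so that conditionally on $\widehat{\bf R}_N(\rho)$ the numerator is \emph{exactly} complex Gaussian and $|{\bf w}^*{\bf v}|^2$ exactly exponential. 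This buys a more elementary derivation in the Gaussian-clutter case and correctly identifies the real technical content as the second-order deterministic equivalent of the conditional variance ${\bf p}^*\widehat{\bf R}_N^{-1}(\rho){\bf C}_N\widehat{\bf R}_N^{-1}(\rho){\bf p}$, which is precisely what produces the feedback factor $1-c(1-\rho)^2m_N(-\rho)^2\frac{1}{N}\tr {\bf C}_N^2{\bf Q}_N^2(\rho)$ in $\sigma_{N,{\rm SCM}}^2(\rho)$; the CLT route is only indispensable for the non-Gaussian, unitarily invariant clutter of the RTE section. One small slip: with the paper's definitions the first-order equivalents are ${\bf p}^*\widehat{\bf R}_N^{-1}(\rho){\bf p}\approx\frac{1}{\rho}{\bf p}^*{\bf Q}_N(\rho){\bf p}$ and $\frac{1}{N}\tr{\bf C}_N\widehat{\bf R}_N^{-1}(\rho)\approx\frac{1}{\rho N}\tr{\bf C}_N{\bf Q}_N(\rho)$ (normalization $\frac{1}{\rho}$, not $m_N(-\rho)$; compare \eqref{eq:convergence_1}--\eqref{eq:convergence_2}), and the second-order form is accordingly normalized by $\frac{1}{\rho^2}$; as you yourself observe, this constant enters numerator and denominator of the exponent to the same degree and cancels, so the final variance is unaffected.
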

The uniformity over $\rho$ of the convergence result in Theorem~\ref{th:false_alarm} is essential in the sequel.
It obviously implies the pointwise convergence for each $\rho >0$ but, more importantly, it will allow us to handle the convergence of the false alarm probability when random values of the regularization parameter are considered. This feature  becomes all the more interesting knowing that the detector is required to set the regularization parameter based on  random received secondary data. Note that, for technical issues, a set of the form $\left[0,\kappa\right)$, where $\kappa>0$ is as small as desired but fixed, has to be discarded from the uniform convergence region. 

% pact set in $\left(0,1\right]$.
%leads to random settings of the regularization parameter that are based on the secondary observations.
%This uniformity 
The result of Theorem \ref{th:false_alarm} provides an analytical expression for the false alarm probability. Since this expression  depends on the unknown covariance matrix, it is of practical interest to provide a consistent estimate for it:
\begin{proposition}
For $\rho\in(0,1)$, define
$$
\hat{\sigma}_{N,{\rm SCM}}^2(\rho)=\frac{1}{2}\frac{1-\rho\frac{{\bf p}^*\widehat{\bf R}_{N}^{-2}(\rho){\bf p}}{{\bf p}^*\widehat{\bf R}_{N}^{-1}{\bf p}}}{\left(1-c_N+\frac{c_N\rho}{N}\tr \widehat{\bf R}_N^{-1}(\rho)\right)\left(1-\frac{\rho}{N}\tr \widehat{\bf R}_N^{-1}(\rho)\right)}
$$
and let $\hat{\sigma}_{N,SCM}^2(1)=\lim_{\rho\uparrow 1} \hat{\sigma}_{N,SCM}^2(\rho)=\frac{{\bf p}^*\widehat{\bf R}_N{\bf p}}{\tr \widehat{\bf R}_N}$. Then, we have, for any $\kappa>0$,
$$
\sup_{\rho\in\mathcal{R}_{\kappa}^{{\rm SCM}}}\left|\hat{\sigma}_{N,{\rm SCM}}^2(\rho)-\sigma_{N,{\rm SCM}}^2(\rho)\right|\asto 0.
$$
\label{prop:estimation_variance}
\end{proposition}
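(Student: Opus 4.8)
The plan is to read $\hat{\sigma}_{N,{\rm SCM}}^2(\rho)$ as the \emph{random} counterpart of the deterministic $\sigma_{N,{\rm SCM}}^2(\rho)$, obtained by replacing every resolvent functional by its deterministic equivalent, and then to show that the two differ by $o(1)$ uniformly in $\rho$. The key structural remark is that $\widehat{\bf R}_N^{-1}(\rho)$ is the resolvent ${\bf G}(z)=\big((1-\rho)\widehat{\bf R}_N-z{\bf I}_N\big)^{-1}$ evaluated at $z=-\rho$, while $\widehat{\bf R}_N^{-2}(\rho)={\bf G}^2(-\rho)=\frac{d}{dz}{\bf G}(z)\big|_{z=-\rho}$. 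Thus every term of $\hat{\sigma}_{N,{\rm SCM}}^2(\rho)$ is a normalized trace or a (possibly differentiated) bilinear form of ${\bf G}(-\rho)$, and the proof reduces to identifying three building blocks: $\tfrac1N\tr\widehat{\bf R}_N^{-1}(\rho)$, ${\bf p}^*\widehat{\bf R}_N^{-1}(\rho){\bf p}$, and ${\bf p}^*\widehat{\bf R}_N^{-2}(\rho){\bf p}$.

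For the first two I would invoke the standard deterministic-equivalent machinery \cite{SIL95,WAG10}: under Assumption~\ref{ass:gaussian}, ${\bf G}(-\rho)$ admits the matrix deterministic equivalent $\tfrac1\rho{\bf Q}_N(\rho)$, giving $\tfrac1N\tr\widehat{\bf R}_N^{-1}(\rho)-\tfrac1\rho\tfrac1N\tr{\bf Q}_N(\rho)\asto0$ and ${\bf p}^*\widehat{\bf R}_N^{-1}(\rho){\bf p}-\tfrac1\rho{\bf p}^*{\bf Q}_N(\rho){\bf p}\asto0$. The squared form is then recovered by analyticity: writing $\bar{\bf G}(z)=(-z{\bf I}_N+g(z){\bf C}_N)^{-1}$ for the corresponding matrix equivalent (with $g$ the associated scalar fixed point, $g(-\rho)$ a multiple of $m_N(-\rho)$), both ${\bf p}^*{\bf G}(z){\bf p}$ and ${\bf p}^*\bar{\bf G}(z){\bf p}$ are analytic on $\mathbb{C}\setminus\mathbb{R}_+$ and the former tends a.s. to the latter, so Vitali's theorem transfers convergence to the $z$-derivatives. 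Hence ${\bf p}^*\widehat{\bf R}_N^{-2}(\rho){\bf p}$ converges to $\frac{d}{dz}{\bf p}^*\bar{\bf G}(z){\bf p}\big|_{-\rho}$, and differentiating $\bar{\bf G}$ introduces $g'(-\rho)$; solving the resulting self-consistent relation for $g'(-\rho)$ is precisely what produces the factor $\big(1-c(1-\rho)^2m_N(-\rho)^2\tfrac1N\tr{\bf C}_N^2{\bf Q}_N^2(\rho)\big)^{-1}$.

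The substantive step is the algebraic matching, carried out with the defining relation $m_N(-\rho)^{-1}=\rho+c_N(1-\rho)\tfrac1N\tr{\bf C}_N{\bf Q}_N(\rho)$ and the elementary identity ${\bf I}_N-{\bf Q}_N(\rho)=(1-\rho)m_N(-\rho){\bf C}_N{\bf Q}_N(\rho)$ (using that ${\bf C}_N$ and ${\bf Q}_N(\rho)$ commute). With these, the two denominator factors of $\hat{\sigma}_{N,{\rm SCM}}^2$ converge to $\rho\,m_N(-\rho)$ and $(1-\rho)m_N(-\rho)\tfrac1N\tr{\bf C}_N{\bf Q}_N(\rho)$, while the numerator $1-\rho\,{\bf p}^*\widehat{\bf R}_N^{-2}(\rho){\bf p}/{\bf p}^*\widehat{\bf R}_N^{-1}(\rho){\bf p}$ collapses, after inserting $g'(-\rho)$, to $\big((1-\rho)m_N(-\rho)+g'(-\rho)\big){\bf p}^*{\bf C}_N{\bf Q}_N^2(\rho){\bf p}/{\bf p}^*{\bf Q}_N(\rho){\bf p}$. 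Assembling the three limits and simplifying once more with the same fixed-point relation reproduces $\sigma_{N,{\rm SCM}}^2(\rho)$ exactly; the only non-obvious cancellation is that $c(1-\rho)m_N\tfrac1N\tr{\bf C}_N{\bf Q}_N({\bf I}_N-{\bf Q}_N)$ equals $c(1-\rho)^2m_N^2\tfrac1N\tr{\bf C}_N^2{\bf Q}_N^2$, which is immediate from the identity above.

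The part I expect to be genuinely delicate is the \emph{uniformity} over $\mathcal{R}_\kappa^{\rm SCM}=[\kappa,1]$, together with its transfer to the $z$-derivative term. Pointwise a.s. convergence of each block is classical, but here I need it uniformly in $\rho$. I would establish this as in \cite{couillet-kammoun-14}: ${\bf G}(-\rho)$ is Lipschitz in $\rho$ with a constant controlled uniformly in $N$ on $[\kappa,1]$ (this is where $\rho\ge\kappa>0$ is essential, since $\widehat{\bf R}_N(\rho)\succeq\rho{\bf I}_N$ stays well-conditioned and the denominators $\rho\,m_N(-\rho)$ and $1-c(1-\rho)^2m_N^2\tfrac1N\tr{\bf C}_N^2{\bf Q}_N^2$ stay bounded away from zero), so a finite-net argument upgrades pointwise into uniform convergence, and Cauchy's formula on a fixed contour around $-\rho$ carries the uniformity over to ${\bf p}^*\widehat{\bf R}_N^{-2}(\rho){\bf p}$. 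The endpoint $\rho=1$ is handled by checking that both quantities extend continuously to their stated $\rho\uparrow1$ limits, after which the continuous mapping theorem applied to the ratio, whose denominator is uniformly nonvanishing on $[\kappa,1]$, yields $\sup_{\rho\in[\kappa,1]}\big|\hat{\sigma}_{N,{\rm SCM}}^2(\rho)-\sigma_{N,{\rm SCM}}^2(\rho)\big|\asto0$.
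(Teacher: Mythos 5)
Your overall strategy is the right one and is essentially what the paper (deferring to Proposition 1 of \cite{couillet-kammoun-14}) relies on: uniform deterministic equivalents for $\frac{1}{N}\tr\widehat{\bf R}_N^{-1}(\rho)$, ${\bf p}^*\widehat{\bf R}_N^{-1}(\rho){\bf p}$ and ${\bf p}^*\widehat{\bf R}_N^{-2}(\rho){\bf p}$, followed by algebraic matching through the two fixed-point identities $\rho\, m_N(-\rho)=1-c+\frac{c}{N}\tr{\bf Q}_N(\rho)$ and ${\bf I}_N-{\bf Q}_N(\rho)=(1-\rho)m_N(-\rho){\bf C}_N{\bf Q}_N(\rho)$. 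Your identification of the factor $\left(1-c(1-\rho)^2m_N(-\rho)^2\frac{1}{N}\tr{\bf C}_N^2{\bf Q}_N^2(\rho)\right)^{-1}$ as arising from the derivative of the fixed point is exactly the mechanism at work (the same combination appears in the convergences listed in Appendix~\ref{app:frho}), and the interior uniformity via Lipschitz bounds on the resolvent for $\rho\geq\kappa$, a finite net, and Cauchy's formula for the squared resolvent is sound.

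The gap is at the endpoint $\rho=1$. There $\widehat{\bf R}_N(1)={\bf I}_N$, so the numerator $1-\rho\,{\bf p}^*\widehat{\bf R}_N^{-2}(\rho){\bf p}/{\bf p}^*\widehat{\bf R}_N^{-1}(\rho){\bf p}$ and the factor $1-\frac{\rho}{N}\tr\widehat{\bf R}_N^{-1}(\rho)$ both vanish: the defining expression is a $0/0$ form, which is precisely why $\hat{\sigma}_{N,{\rm SCM}}^2(1)$ must be defined as a limit. Your claim that the denominator is uniformly nonvanishing on $[\kappa,1]$ is therefore false, and checking that both quantities extend continuously to their stated limits only gives continuity at $\rho=1$ for each fixed $N$; it does not control $\sup_{\rho\in[1-\ell,1]}\left|\hat{\sigma}_{N,{\rm SCM}}^2(\rho)-\sigma_{N,{\rm SCM}}^2(\rho)\right|$ uniformly in $N$, since the rates at which numerator and denominator vanish could a priori degenerate as $N$ grows. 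What is needed, and what the paper supplies for the entirely analogous Proposition~\ref{prop:frho} via Lemma~\ref{lemma:intermediate} (a uniform variant of l'Hopital's rule), is a statement of the form $\lim_{\rho\uparrow1}\limsup_N\left|\hat{f}(\rho)-h_N'(1)/g_N'(1)\right|=0$, obtained from Taylor expansions at $\rho=1$ with second derivatives bounded uniformly in $N$ and $\liminf_N|g_N'(1)|>0$. You need to verify these hypotheses for your numerator and denominator so that the neighborhood of $1$ on which the endpoint limit takes over can be chosen independently of $N$; with that added, the argument closes.
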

The proof of Proposition \ref{prop:estimation_variance} follows along the same lines as that of Proposition 1 in \cite{couillet-kammoun-14}  and is therefore omitted.

We will now derive the asymptotic equivalent for $\mathbb{P}\left[\widehat{T}_N^{\rm RSCM}(\rho)> \frac{r}{\sqrt{N}}|H_1\right]$, where under $H_1$ the received vector ${\bf y}$ is supposed to be given by:
$$
H_1: \hspace{0.2cm}{\bf y}=\frac{a}{\sqrt{N}}{\bf p}+{\bf x}
$$
with ${\bf x}$  distributed as the ${\bf x}_i$'s in Assumption \ref{ass:gaussian}.
The following results constitute the major contribution of the present work. They will lead in conjunction with those of Theorem \ref{th:false_alarm_1} and Proposition \ref{prop:estimation_variance} to the optimal design of the ANMF-RSCM. 

 %In order to account for a possible mismatch in the knowledge of ${\bf p}$, we assume that in case a  signal is present, the actual received vector is given in reality by:
%$$
%{\bf y}=\frac{a}{\sqrt{N}}{\bf v}+{\bf x}
%$$
%where ${\bf v}$ is allowed to be different from ${\bf p}$. %We also assume that $\|{\bf p}\|=\|{\bf v}\|=1$. 
\begin{theorem}[Detection probability]
As $N,n\to\infty$ with $c_N\to c$, we have for any $\kappa>0$
\begin{align*}
&\sup_{\rho\in\mathcal{R}_{\kappa}^{\rm SCM}}\left|\mathbb{P}\left[\widehat{T}_N^{\rm RSCM}(\rho)> \frac{r}{\sqrt{N}}|H_1\right]\right.\\
&\left.-Q_1\left(g_{\rm SCM}({\bf p}),\frac{r}{\sigma_{N,{\rm SCM}}(\rho)}\right)\right| \asto 0.
\end{align*}
where $Q_1$ is the Marcum Q-function\footnote{ $Q_1(a,b)=\int_b^{+\infty}x \exp\left(-\frac{x^2+a^2}{2}\right)I_0(ax)dx$ where $I_0$ is the zero-th order modified Bessel function of the first kind.} while $\sigma_{N,{\rm SCM}}$ is given in Theorem \ref{th:false_alarm} and ${g}_{SCM}({\bf p})$ is given by:
\begin{align*}
{ g}_{SCM}({\bf p})&=\frac{\sqrt{1-c(1-\rho)^2m(-\rho)^2\frac{1}{N}\tr{\bf C}_N^2{\bf Q}_N^2(\rho)}}{\sqrt{{\bf p}^*{\bf C}_N{\bf Q}_N^2(\rho){\bf p}}} \\
&\times\sqrt{\frac{2}{N}}a{\left|{\bf p}^*{\bf Q}_N(\rho){\bf p}\right|}.
\end{align*}
\label{th:detection}
\end{theorem}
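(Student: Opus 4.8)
The plan is to condition on the secondary data $\mathcal{F}_n \triangleq \sigma({\bf x}_1,\dots,{\bf x}_n)$, which generates $\widehat{\bf R}_N(\rho)$ and is independent of the primary-noise vector ${\bf x}={\bf C}_N^{\frac{1}{2}}{\bf w}$ carried by ${\bf y}$, and to exhibit the numerator of $\widehat{T}_N^{\rm RSCM}(\rho)$ as a deterministic signal term plus a conditionally complex-Gaussian fluctuation. Under $H_1$, writing ${\bf y}=\frac{a}{\sqrt N}{\bf p}+{\bf C}_N^{\frac{1}{2}}{\bf w}$, I would decompose
\begin{align*}
{\bf y}^{*}\widehat{\bf R}_N^{-1}(\rho){\bf p}=\underbrace{\tfrac{a}{\sqrt{N}}{\bf p}^{*}\widehat{\bf R}_N^{-1}(\rho){\bf p}}_{\triangleq\,\mu_N}+\underbrace{{\bf w}^{*}{\bf C}_N^{\frac{1}{2}}\widehat{\bf R}_N^{-1}(\rho){\bf p}}_{\triangleq\,Z_N}.
\end{align*}
Conditionally on $\mathcal F_n$, the term $Z_N$ is a zero-mean circularly symmetric complex Gaussian with variance $v_N\triangleq{\bf p}^{*}\widehat{\bf R}_N^{-1}(\rho){\bf C}_N\widehat{\bf R}_N^{-1}(\rho){\bf p}$ while $\mu_N>0$ is deterministic; hence $|\mu_N+Z_N|$ is Rician, which is the source of the Marcum $Q_1$ (contrast the Rayleigh law of Theorem~\ref{th:false_alarm}, recovered when $\mu_N\equiv0$, since $Q_1(0,b)=e^{-b^2/2}$).

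First I would reduce the denominator to a deterministic quantity. Because $\alpha=a/\sqrt N$, the signal contributes only lower-order terms to ${\bf y}^{*}\widehat{\bf R}_N^{-1}(\rho){\bf y}$, so this factor equals ${\bf x}^{*}\widehat{\bf R}_N^{-1}(\rho){\bf x}$ up to a vanishing relative error and, by the trace lemma, concentrates around $\tr{\bf C}_N\widehat{\bf R}_N^{-1}(\rho)$. Standard deterministic-equivalent results for sample-covariance resolvents \cite{SIL95}, applied to $\widehat{\bf R}_N(\rho)=\frac{1-\rho}{n}\sum_i{\bf x}_i{\bf x}_i^{*}+\rho{\bf I}_N$ at the point $-\rho$, furnish the resolvent equivalent $\widehat{\bf R}_N^{-1}(\rho)\leftrightarrow\rho^{-1}{\bf Q}_N(\rho)$, whence ${\bf p}^{*}\widehat{\bf R}_N^{-1}(\rho){\bf p}-\rho^{-1}{\bf p}^{*}{\bf Q}_N(\rho){\bf p}\asto0$ and $\frac1N\tr{\bf C}_N\widehat{\bf R}_N^{-1}(\rho)-\rho^{-1}\frac1N\tr{\bf C}_N{\bf Q}_N(\rho)\asto0$, together with the second-order equivalent
\begin{align*}
v_N-\frac{\rho^{-2}\,{\bf p}^{*}{\bf C}_N{\bf Q}_N^{2}{\bf p}}{1-c(1-\rho)^{2}m_N(-\rho)^{2}\tfrac{1}{N}\tr{\bf C}_N^{2}{\bf Q}_N^{2}}\asto0,
\end{align*}
all uniformly over $\rho\in\mathcal{R}_{\kappa}^{\rm SCM}$. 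Substituting these into the Rician parameters, a direct computation (identical in spirit to the one underlying Theorem~\ref{th:false_alarm}) then shows that $\sqrt{2/v_N}\,|\mu_N|\to g_{\rm SCM}({\bf p})$ and $\sqrt{2/v_N}\,\frac{r}{\sqrt N}\sqrt{{\bf y}^{*}\widehat{\bf R}_N^{-1}(\rho){\bf y}\,{\bf p}^{*}\widehat{\bf R}_N^{-1}(\rho){\bf p}}\to r/\sigma_{N,{\rm SCM}}(\rho)$, the bare factors $\rho^{-1}$ cancelling in both ratios.

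With these ingredients, the conditional survival function of the statistic is exactly a Marcum $Q_1$:
\begin{align*}
\mathbb{P}\!\left[\widehat{T}_N^{\rm RSCM}(\rho)>\tfrac{r}{\sqrt N}\,\middle|\,\mathcal F_n\right]=Q_1\!\left(\sqrt{\tfrac{2}{v_N}}\,|\mu_N|,\ \sqrt{\tfrac{2}{v_N}}\,\tfrac{r}{\sqrt N}\sqrt{D_N}\right),
\end{align*}
where $D_N\triangleq{\bf y}^{*}\widehat{\bf R}_N^{-1}(\rho){\bf y}\,{\bf p}^{*}\widehat{\bf R}_N^{-1}(\rho){\bf p}$. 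Taking expectations over $\mathcal F_n$ and invoking the a.s.\ convergence of the two arguments together with the continuity and boundedness of $Q_1$ (dominated convergence) would give the pointwise statement, and the uniformity over $\mathcal{R}_{\kappa}^{\rm SCM}$ would follow from the uniform convergence of the arguments on the compact $[\kappa,1]$, via the same $\varepsilon$-net and equicontinuity argument as in \cite{couillet-kammoun-14}.

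The hard part will be twofold. First, $Z_N$ and the denominator factor ${\bf y}^{*}\widehat{\bf R}_N^{-1}(\rho){\bf y}$ both involve the same vector ${\bf x}$ and are therefore correlated, so freezing $D_N$ at its deterministic equivalent inside the conditional Rician law must be justified rather than assumed; the argument rests on the fluctuations of ${\bf x}^{*}\widehat{\bf R}_N^{-1}(\rho){\bf x}$ about $\tr{\bf C}_N\widehat{\bf R}_N^{-1}(\rho)$ being of relative order $N^{-1/2}$, hence asymptotically negligible against the $O(1)$ randomness driving the numerator---a Slutsky-type step that must additionally be made uniform in $\rho$. Second, and most demanding, is the second-order equivalent for $v_N$ with its correction factor $1-c(1-\rho)^{2}m_N(-\rho)^{2}\frac1N\tr{\bf C}_N^{2}{\bf Q}_N^{2}$: establishing it, and controlling it uniformly on $[\kappa,1]$ while excluding the neighbourhood of $\rho=0$ where $\|\widehat{\bf R}_N^{-1}(\rho)\|$ is unbounded, is exactly where the random matrix machinery of \cite{couillet-kammoun-14} is needed.
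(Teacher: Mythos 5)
Your proof is correct in substance but follows a genuinely different route from the paper's. The paper treats the numerator unconditionally: it imports from \cite{couillet-kammoun-14} a central limit theorem for $\bigl[\Re\bigl(\tfrac{1}{\sqrt{N}}{\bf x}^*\widehat{\bf R}_N^{-1}{\bf p}\bigr),\Im\bigl(\tfrac{1}{\sqrt{N}}{\bf x}^*\widehat{\bf R}_N^{-1}{\bf p}\bigr)\bigr]$, couples it with the deterministic convergence of $\tfrac{a}{N}{\bf p}^*\widehat{\bf R}_N^{-1}{\bf p}$ via the bounded-Lipschitz metric and Dudley's Theorem 11.7.1, and then invokes a uniform continuous mapping theorem plus Slutsky to obtain the Rice limit. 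You instead condition on $\mathcal{F}_n$ and exploit the fact that, since the clutter is Gaussian here, $Z_N={\bf w}^*{\bf C}_N^{1/2}\widehat{\bf R}_N^{-1}(\rho){\bf p}$ is \emph{exactly} circularly-symmetric Gaussian given the secondary data; the Rician/Marcum-$Q_1$ structure then comes for free, and the remaining work is the second-order deterministic equivalent for $v_N={\bf p}^*\widehat{\bf R}_N^{-1}{\bf C}_N\widehat{\bf R}_N^{-1}{\bf p}$ (whose correction factor $1-c(1-\rho)^2m_N(-\rho)^2\tfrac{1}{N}\tr{\bf C}_N^2{\bf Q}_N^2$ matches the paper's $\Upsilon_N$, and whose substitution into the two Marcum arguments you verify correctly) together with dominated convergence. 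Your approach buys elementarity — no probability-metric machinery is needed — at the cost of being tied to exact conditional Gaussianity of the primary noise; the paper's CLT-based route is the one that transfers to the robust/RTE setting. Both proofs must establish the same uniform deterministic equivalents and both must control $\liminf$ of the correction factor near $\rho=1$ (the paper does this via $m'(-\rho)=\int(x+\rho)^{-2}\mu(dx)\leq\kappa^{-2}$), a point you should make explicit.

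One caution: your displayed claim that the conditional survival function ``is exactly a Marcum $Q_1$'' with $\sqrt{D_N}$ in the second argument is not an identity, since $D_N$ contains ${\bf y}^*\widehat{\bf R}_N^{-1}(\rho){\bf y}$, which is not $\mathcal{F}_n$-measurable and is correlated with $Z_N$. You correctly flag this afterwards as the step needing a Slutsky-type justification (the conditional concentration of ${\bf w}^*{\bf C}_N^{1/2}\widehat{\bf R}_N^{-1}{\bf C}_N^{1/2}{\bf w}$ about $\tr{\bf C}_N\widehat{\bf R}_N^{-1}$, uniform in $\rho$), so this is a presentational slip rather than a gap, but the ``exactly'' should be removed and the approximation argument carried out where you currently only announce it.
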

\begin{proof}
See Appendix \ref{app:detection}.
\end{proof}
According to Theorem \ref{th:false_alarm} and Theorem \ref{th:detection}, $\widehat{T}_N^{\rm RSCM}(\rho)$ behaves differently depending on whether a  signal is present or not. In particular, under $H_0$,  $\sqrt{N}\widehat{T}_N^{\rm RSCM}(\rho)$ behaves like a Rayleigh distributed random variate with parameter $\sigma_{N,{\rm SCM}}(\rho)$ while it becomes well-approximated  under $H_1$ by a Rice distributed random variable with parameters $g_{\rm SCM}({\bf p})$ and $\sigma_{N,{\rm SCM}}(\rho)$. %This change of behaviour makes clear that the statistic  $\sqrt{N}\widehat{T}_N^{\rm RSCM}(\rho)$ is well-suited to decide on the presence or not of a radar signal.
It is worth noticing that in the theory of radar detection, getting a false alarm and a detection probability  distributed as Rayleigh and Rice random variables is among the  simplest cases  that one can ever encounter, holding only, to the best of the authors' knoweldege,  if white Gaussian noises are considered \cite[p.188]{Kay93}.
%To the best of the authors' knowledge,  holds only when white Gaussian noises are considered \cite[p.188]{Kay93}.
 We believe that the striking simplicity of the obtained results inheres in the double averaging effect that is a consequence of the considered asymptotic regime. This is to be compared to the quite intricate expressions for the false alarm probability obtained under the classical regime of $n$ tending to infinity while $N$ is fixed \cite{pascal-icassp15}.  %finding a false alarm and a detection probability distributed as  Rayleigh and rice RVs  holds only in the simple case of white Gaussian noise.   

We will now discuss the choice of the regularization parameter $\rho$ and the threshold $r$. In accordance with the theory of radar detection, we aim at setting $\rho$ and $r$ in such a way to keep the asymptotic false alarm probability equal to a fixed value $\eta$ while maximizing the asymptotic probability of detection.
From Theorem \ref{th:false_alarm}, one can easily see that the values of $r$ and $\rho$ that provide an asymptotic false alarm probability equal to $\eta$ should satisfy:
$$
\frac{r}{\sigma_{N,{\rm SCM}}(\rho)}=\sqrt{-2\log\eta}.
$$
From these choices, we have to take those values that maximize the asymptotic detection which is  given, according to Theorem \ref{th:detection}, by:
$$
Q_1\left({g_{\rm SCM}({\bf p})}, \frac{r}{\sigma_{N,SCM}(\rho)}\right).
$$
 The second argument of $Q_1$ should be kept fixed in order to ensure the required asymptotic false alarm probability. As the Marcum-Q function increases with respect to the first argument, the optimization of the detection probability boils down to considering the following values of $\rho$:
$$
\rho\in\argmax \left\{f_{\rm SCM}(\rho)\right\}
$$
where:
$$
f_{\rm SCM}(\rho)\triangleq \frac{1}{2a^2}g_{\rm SCM}^2\left({\bf p}\right)%\frac{{\bf p}^*{\bf Q}_N(\rho){\bf p}}{2\frac{1}{N}\tr {\bf C}_N {\bf Q}_N(\rho)\sigma_N^2(\rho)}
$$
However, the optimization of  $f_{\rm SCM}(\rho)$ is not possible in practice, since the expression of $f_{SCM}(\rho)$ features the covariance matrix ${\bf C}_N$ which is unknown to the detector. Acquiring a consistent estimate of $f_{\rm SCM}(\rho)$ based on the available $\widehat{\bf R}_N$ is thus mandatory. This is the goal of the following Proposition.
\begin{proposition}
For $\rho\in\left(0,1\right)$, define $\hat{f}_{\rm SCM}(\rho)$ as:
$$
\hat{f}_{\rm SCM}(\rho)=\frac{\left({\bf p}^*\widehat{\bf R}_N^{-1}(\rho){\bf p}\right)^2(1-\rho)\left(1-c+\frac{c}{N}\rho\tr\widehat{\bf R}_N^{-1}(\rho)\right)^2}{{\bf p}^*\widehat{\bf R}_N^{-1}(\rho){\bf p}-\rho{\bf p}^*\widehat{\bf R}_N^{-2}(\rho){\bf p}}
$$
and let $\hat{f}_{\rm SCM}(1)\triangleq \lim_{\rho\uparrow 1}\hat{f}_{\rm SCM}(\rho)=\frac{N}{{\bf p}^*\widehat{\bf R}_N{\bf p}}$. Then, we have:
$$
\sup_{\rho\in\mathcal{R}_{\kappa}^{\rm SCM}} \left|\hat{f}_{\rm SCM}(\rho)-f_{\rm SCM}(\rho)\right|\asto 0,
$$
where we recall that $\mathcal{R}_{\kappa}^{\rm SCM}=\left[\kappa,1\right]$.
\label{prop:frho}
\end{proposition}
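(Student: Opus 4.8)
The plan is to reduce $\hat f_{\rm SCM}(\rho)$ to a combination of three empirical functionals of the resolvent, namely $\frac{1}{N}\tr\widehat{\bf R}_N^{-1}(\rho)$, the bilinear form ${\bf p}^{*}\widehat{\bf R}_N^{-1}(\rho){\bf p}$, and ${\bf p}^{*}\widehat{\bf R}_N^{-2}(\rho){\bf p}$, to replace each by its deterministic equivalent, and to simplify the result to $f_{\rm SCM}(\rho)$. First I would observe that $\widehat{\bf R}_N(\rho)=(1-\rho)\widehat{\bf R}_N+\rho{\bf I}_N$ equals the shifted matrix $(1-\rho)\widehat{\bf R}_N-z{\bf I}_N$ at $z=-\rho$, so that $\widehat{\bf R}_N^{-1}(\rho)$ is exactly the resolvent of $(1-\rho)\widehat{\bf R}_N$ evaluated at the real point $z=-\rho<0$, which lies outside the support of the limiting spectrum. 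The first-order deterministic equivalents reviewed in \cite{couillet-kammoun-14,WAG10} and following from \cite{SIL95} then yield the almost sure limit of $\frac{1}{N}\tr\widehat{\bf R}_N^{-1}(\rho)$ and the convergence ${\bf p}^{*}\widehat{\bf R}_N^{-1}(\rho){\bf p}-\frac{1}{\rho}{\bf p}^{*}{\bf Q}_N(\rho){\bf p}\asto 0$, the prefactor $\rho^{-1}$ and the appearance of ${\bf Q}_N(\rho)$ being a direct consequence of the fixed-point equation defining $m_N$. The quantity ${\bf p}^{*}\widehat{\bf R}_N^{-2}(\rho){\bf p}$ is the $z$-derivative of the bilinear resolvent at $z=-\rho$ and is treated by differentiating the same deterministic equivalent.

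The cleanest route to the denominator of $\hat f_{\rm SCM}(\rho)$ is the algebraic identity
\[
{\bf p}^{*}\widehat{\bf R}_N^{-1}(\rho){\bf p}-\rho\,{\bf p}^{*}\widehat{\bf R}_N^{-2}(\rho){\bf p}=(1-\rho)\,{\bf p}^{*}\widehat{\bf R}_N^{-1}(\rho)\widehat{\bf R}_N\widehat{\bf R}_N^{-1}(\rho){\bf p},
\]
which follows from ${\bf I}_N-\rho\widehat{\bf R}_N^{-1}(\rho)=(1-\rho)\widehat{\bf R}_N^{-1}(\rho)\widehat{\bf R}_N$. The right-hand side is a ``sandwiched'' bilinear form $\frac{1-\rho}{n}\sum_{i}\left|{\bf x}_i^{*}\widehat{\bf R}_N^{-1}(\rho){\bf p}\right|^2$ in which the very samples ${\bf x}_i$ building $\widehat{\bf R}_N$ reappear inside the resolvent; its deterministic equivalent is therefore a second-order one and carries the correction factor $\left[1-c(1-\rho)^2 m_N(-\rho)^2\frac{1}{N}\tr{\bf C}_N^2{\bf Q}_N^2(\rho)\right]^{-1}$, the matrix part collapsing to ${\bf p}^{*}{\bf C}_N{\bf Q}_N^2(\rho){\bf p}$ because ${\bf Q}_N(\rho)$ is a function of ${\bf C}_N$ and hence commutes with it. This is exactly the mechanism that places the correction term in the \emph{numerator} of $f_{\rm SCM}(\rho)$ and ${\bf p}^{*}{\bf C}_N{\bf Q}_N^2(\rho){\bf p}$ in its denominator.

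With these limits in hand I would substitute into $\hat f_{\rm SCM}(\rho)$ and verify, purely algebraically, that the scalar prefactors collapse to the target. The key simplification is $1-c+\frac{c\rho}{N}\tr\widehat{\bf R}_N^{-1}(\rho)\to\rho\,m_N(-\rho)$, obtained from the companion relation between the $N\times N$ and $n\times n$ normalized traces together with the fixed-point equation; combined with the prefactor $\rho^{-1}$ of the bilinear limit and the explicit $(1-\rho)$ and $\frac{1}{N}$ powers, the pieces reassemble into $f_{\rm SCM}(\rho)$, with the correction factor generated by the denominator reappearing unchanged in the numerator.

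Finally, the statement demands uniformity over $\mathcal{R}_\kappa^{\rm SCM}=[\kappa,1]$, not merely pointwise convergence. Since for $\rho\ge\kappa$ one has $\|\widehat{\bf R}_N^{-1}(\rho)\|\le\rho^{-1}\le\kappa^{-1}$, all the functionals above are uniformly bounded and, viewed as functions of $z=-\rho$, extend analytically to a fixed complex neighbourhood of $[-1,-\kappa]$ that stays away from $\mathbb{R}_{+}$. I would therefore upgrade the pointwise almost sure convergence to uniform almost sure convergence on $[\kappa,1]$ by a normal-families (Vitali) argument, or equivalently by combining convergence on a finite $\epsilon$-net with the uniform Lipschitz bound in $\rho$ that this norm control provides; this is precisely where the exclusion of $[0,\kappa)$ is needed, as $\|\widehat{\bf R}_N^{-1}(\rho)\|$ is no longer controlled when $\rho\downarrow 0$ (the SCM being singular for $N>n$). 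The boundary value $\rho=1$, where $\widehat{\bf R}_N(1)={\bf I}_N$ and $\hat f_{\rm SCM}(1)=N/{\bf p}^{*}\widehat{\bf R}_N{\bf p}$, is recovered by continuity of both sides as $\rho\uparrow 1$. I expect the second-order deterministic equivalent for the sandwiched form—correctly producing the correction factor—together with the passage from pointwise to uniform convergence to be the main technical obstacles; the remaining prefactor bookkeeping, while lengthy, is routine and parallels Proposition \ref{prop:estimation_variance} and Proposition~1 of \cite{couillet-kammoun-14}.
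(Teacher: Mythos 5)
Your plan is sound and rests on the same random--matrix ingredients as the paper (the deterministic equivalents for $\frac{1}{N}\tr\widehat{\bf R}_N^{-1}(\rho)$ and ${\bf p}^*\widehat{\bf R}_N^{-1}(\rho){\bf p}$, the second-order equivalent for the sandwiched form carrying the factor $1-c(1-\rho)^2m_N(-\rho)^2\frac{1}{N}\tr{\bf C}_N^2{\bf Q}_N^2(\rho)$, and the fixed-point relations giving $1-c+\frac{c\rho}{N}\tr\widehat{\bf R}_N^{-1}(\rho)\to\rho\, m_N(-\rho)$ — these are exactly the three uniform consistency displays the paper invokes at the end of its appendix). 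Where you genuinely diverge is in the treatment of the endpoint $\rho=1$. The paper keeps $\hat{f}_{\rm SCM}$ in its $0/0$ form, splits $[\kappa,1]$ at $1-\ell$, and proves a uniform-in-$N$ variant of l'H\^opital's rule (its Lemma on Taylor expansions with uniformly bounded second derivatives) to show $\lim_{\rho\uparrow1}\limsup_N|\hat f(\rho)-N/({\bf p}^*\widehat{\bf R}_N{\bf p})|=0$, so that $\ell$ can be chosen independently of $N$. You instead cancel the degeneration exactly via ${\bf p}^*\widehat{\bf R}_N^{-1}(\rho){\bf p}-\rho{\bf p}^*\widehat{\bf R}_N^{-2}(\rho){\bf p}=(1-\rho){\bf p}^*\widehat{\bf R}_N^{-1}(\rho)\widehat{\bf R}_N\widehat{\bf R}_N^{-1}(\rho){\bf p}$ (a correct identity, since ${\bf I}_N-\rho\widehat{\bf R}_N^{-1}(\rho)=(1-\rho)\widehat{\bf R}_N^{-1}(\rho)\widehat{\bf R}_N$), after which the $(1-\rho)$ factors cancel and $\hat f_{\rm SCM}$ becomes a ratio of non-degenerate, uniformly Lipschitz functionals on all of $[\kappa,1]$; the l'H\^opital machinery is then unnecessary. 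This is arguably cleaner, and your $\varepsilon$-net/Vitali uniformity argument does close the proof — but only if it is applied to the \emph{cancelled} form (and to a correspondingly non-degenerate rewriting of $f_{\rm SCM}$, whose denominator ${\bf p}^*{\bf C}_N{\bf Q}_N^2(\rho){\bf p}/N$ is bounded below by the last item of Assumption~A-1). Your closing remark that the boundary is ``recovered by continuity of both sides'' understates this: pointwise continuity at $\rho=1$ for each $N$ proves nothing about uniform convergence; what actually carries the argument to the endpoint is the a.s.\ uniform Lipschitz bound in $\rho$ that the cancellation makes available, so state that explicitly rather than appealing to continuity.
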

\begin{proof}
See Appendix \ref{app:frho}.
\end{proof}
Since the results in Proposition \ref{prop:frho} and Theorem \ref{th:detection} are uniform in $\rho$, we have the following corollary:
\begin{corollary}
 Let $\hat{f}_{SCM}(\rho)$ be defined as in Proposition \ref{prop:frho}. Define $\hat{\rho}_N^*$ as any value satisfying:
$$
\hat{\rho}_N^*\in\argmax_{\rho\in\mathcal{R}_{\kappa}^{\rm SCM}}\left\{\hat{f}_{\rm SCM}(\rho)\right\}.
$$
Then, for every $r>0$,
\begin{align*}
&\mathbb{P}\left(\sqrt{N}T_N(\hat{\rho}_N^*) >r |H_1\right)\\
&-\max_{\rho\in\mathcal{R}_{\kappa}^{\rm SCM}}\left\{\mathbb{P}\left(\sqrt{N}T_N({\rho})>r|H_1\right)\right\}\asto0.
\end{align*}
\label{cor:optimal}
\end{corollary}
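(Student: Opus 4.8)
The plan is to deduce this from three facts by a standard ``uniform convergence transfers to the maximizers'' (argmax-consistency) argument: the uniform convergence of the empirical objective $\hat f_{\rm SCM}$ to $f_{\rm SCM}$ (Proposition~\ref{prop:frho}), the uniform convergence of the detection probability to its deterministic equivalent (Theorem~\ref{th:detection}), and the structural fact that, once the false-alarm level is fixed, this deterministic equivalent is a \emph{fixed increasing function} of $f_{\rm SCM}$. First I would record Theorem~\ref{th:detection} in convenient notation: writing $P_N(\rho)\triangleq \mathbb{P}(\sqrt{N}\widehat{T}_N^{\rm RSCM}(\rho)>r\,|\,H_1)$ and $\bar P_N(\rho)\triangleq Q_1\!\left(g_{\rm SCM}({\bf p}),\,r/\sigma_{N,{\rm SCM}}(\rho)\right)$ (both depending on $\rho$ through ${\bf Q}_N(\rho)$ and $m_N(-\rho)$), the theorem asserts exactly $\sup_{\rho\in\mathcal{R}_\kappa^{\rm SCM}}|P_N(\rho)-\bar P_N(\rho)|\asto 0$. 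Because the operating threshold is calibrated so that the second Marcum argument is held at the constant $\sqrt{-2\log\eta}$ dictated by the target false-alarm rate (Theorem~\ref{th:false_alarm}), and because $g_{\rm SCM}^2({\bf p})=2a^2 f_{\rm SCM}(\rho)$, the map $\rho\mapsto\bar P_N(\rho)$ factors as $\Psi\circ f_{\rm SCM}$, where $\Psi(x)=Q_1(\sqrt{2a^2 x},\sqrt{-2\log\eta})$ is continuous and strictly increasing on $[0,\infty)$. This reduction of a two-argument Marcum function to the scalar $f_{\rm SCM}$ is the conceptual crux and is precisely what legitimizes selecting $\hat\rho_N^*$ by maximizing $\hat f_{\rm SCM}$ alone.

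Next I would establish argmax-consistency for $f_{\rm SCM}$. Let $\epsilon_N\triangleq\sup_{\rho\in\mathcal{R}_\kappa^{\rm SCM}}|\hat f_{\rm SCM}(\rho)-f_{\rm SCM}(\rho)|$, so $\epsilon_N\asto 0$ by Proposition~\ref{prop:frho}. For any maximizer $\rho^\star\in\argmax_{\rho}f_{\rm SCM}(\rho)$, the chain $f_{\rm SCM}(\hat\rho_N^*)\ge \hat f_{\rm SCM}(\hat\rho_N^*)-\epsilon_N\ge \hat f_{\rm SCM}(\rho^\star)-\epsilon_N\ge f_{\rm SCM}(\rho^\star)-2\epsilon_N$, where the middle inequality is the very definition of $\hat\rho_N^*$, yields $0\le \max_{\rho}f_{\rm SCM}(\rho)-f_{\rm SCM}(\hat\rho_N^*)\le 2\epsilon_N\asto 0$. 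This holds for any element of the possibly non-unique $\argmax$ set, so no uniqueness of the optimizer is required.

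Finally I would assemble the pieces through the decomposition
\begin{align*}
P_N(\hat\rho_N^*)-\max_{\rho}P_N(\rho)
&=\bigl[P_N(\hat\rho_N^*)-\bar P_N(\hat\rho_N^*)\bigr]\\
&\quad+\bigl[\bar P_N(\hat\rho_N^*)-\max_{\rho}\bar P_N(\rho)\bigr]\\
&\quad+\bigl[\max_{\rho}\bar P_N(\rho)-\max_{\rho}P_N(\rho)\bigr].
\end{align*}
The first bracket is bounded by $\sup_\rho|P_N-\bar P_N|\asto 0$, where crucially the \emph{uniformity} in $\rho$ of Theorem~\ref{th:detection} is what lets me evaluate at the random index $\hat\rho_N^*$; the third is bounded by the same quantity via $|\max_\rho\bar P_N-\max_\rho P_N|\le\sup_\rho|\bar P_N-P_N|$. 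The middle bracket equals $\Psi(f_{\rm SCM}(\hat\rho_N^*))-\Psi(\max_\rho f_{\rm SCM})$ since $\Psi$ is increasing, and vanishes by the argmax-consistency step together with continuity of $\Psi$.

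The hard part will be the continuity/boundedness control feeding the middle bracket: since $\max_\rho f_{\rm SCM}$ is itself $N$-dependent, I must invoke \emph{uniform} continuity of $\Psi$ over the range of $\{f_{\rm SCM}(\rho):\rho\in\mathcal{R}_\kappa^{\rm SCM}\}$, which is bounded uniformly in $N$ thanks to $\limsup\|{\bf C}_N\|<\infty$, $\liminf \frac1N{\bf p}^*{\bf C}_N{\bf p}>0$, and the exclusion of the degenerate region $[0,\kappa)$ (this is exactly why $\kappa>0$ is imposed: it keeps $f_{\rm SCM}$ away from the singular behaviour near $\rho=0$ and $\Psi$ uniformly continuous). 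The only other point deserving care is verifying that the constant-false-alarm calibration genuinely renders $\Psi$ independent of $\rho$, i.e.\ that the second Marcum argument is held fixed across $\rho$; everything else is routine bookkeeping on the intersection of the almost-sure events supplied by Proposition~\ref{prop:frho} and Theorem~\ref{th:detection}.
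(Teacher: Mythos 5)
Your overall architecture --- uniform convergence of the empirical objective (Proposition \ref{prop:frho}), uniform convergence of the detection probability to its deterministic equivalent (Theorem \ref{th:detection}), the two-sided chain $f_{\rm SCM}(\hat\rho_N^*)\ge \hat f_{\rm SCM}(\hat\rho_N^*)-\epsilon_N\ge\hat f_{\rm SCM}(\rho^\star)-\epsilon_N\ge f_{\rm SCM}(\rho^\star)-2\epsilon_N$, and the three-bracket decomposition --- is precisely the standard argmax-consistency argument the paper has in mind when it declares the proof ``similar to that of Corollary 1 of \cite{couillet-kammoun-14}'' and omits it. Those parts are correct, and your emphasis on why the \emph{uniformity} in $\rho$ is what licenses evaluation at the random index $\hat\rho_N^*$ is well placed, as is your observation that no uniqueness of the maximizer is needed.

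The genuine gap is the step you yourself flag as ``deserving care'' and then leave unresolved: the factorization $\bar P_N=\Psi\circ f_{\rm SCM}$ with a $\rho$-independent $\Psi(x)=Q_1(\sqrt{2a^2x},\sqrt{-2\log\eta})$. In the corollary the threshold $r$ is a \emph{fixed} constant (``for every $r>0$''), so the second Marcum argument delivered by Theorem \ref{th:detection} is $r/\sigma_{N,{\rm SCM}}(\rho)$, which varies with $\rho$; it equals the constant $\sqrt{-2\log\eta}$ only under the CFAR calibration $r=\sigma_{N,{\rm SCM}}(\rho)\sqrt{-2\log\eta}$, in which $r$ itself depends on $\rho$. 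A direct computation from the formulas in Theorems \ref{th:false_alarm} and \ref{th:detection} gives
$$
g_{\rm SCM}^2({\bf p})\,\sigma_{N,{\rm SCM}}^2(\rho)=\frac{a^2}{N}\,\frac{{\bf p}^*{\bf Q}_N(\rho){\bf p}}{\frac{1}{N}\tr{\bf C}_N{\bf Q}_N(\rho)},
$$
which is not constant in $\rho$, so the two arguments of $Q_1$ cannot be collapsed onto the single scalar $f_{\rm SCM}(\rho)$: for fixed $r$, increasing $g_{\rm SCM}$ generally comes with a decrease of $\sigma_{N,{\rm SCM}}$, hence an increase of $r/\sigma_{N,{\rm SCM}}$, and the two effects on $Q_1$ pull in opposite directions. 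Consequently your middle bracket $\bar P_N(\hat\rho_N^*)-\max_\rho\bar P_N(\rho)$ does not vanish by argmax-consistency of $f_{\rm SCM}$ alone. To close the argument you must either read the corollary with the $\rho$-dependent calibrated threshold --- which is what the paper's design procedure \eqref{eq:rho_N}--\eqref{eq:threshold} and the discussion preceding Proposition \ref{prop:frho} actually use (``the second argument of $Q_1$ should be kept fixed'') --- in which case your proof goes through verbatim with $r(\rho)=\sigma_{N,{\rm SCM}}(\rho)\sqrt{-2\log\eta}$ in place of $r$, or supply an additional argument for the literal fixed-$r$ statement, which does not follow from maximizing $f_{\rm SCM}$.
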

\begin{proof}
The proof is similar to that of Corollary 1 of \cite{couillet-kammoun-14} and is thus omitted.
\end{proof}
From Corollary \ref{cor:optimal}, the following design procedure leads to optimal performance detection results:%We will now show how the above theoretical results can help devise optimal detection strategies for radar detection.  It is worth observing that the regularization parameter that maximizes the probability of detection does not depend on the threshold $r$. Moreover, for any fixed $\rho$, one can alway tune $r$ in such a way that the asymptotic false alarm probability satisfies the detector requirement. The optimal design procedure thus consists  in the following two steps:
%By radar detection strategy, we mean particularly the optimal setting of the regularization parameter and the threshold $r$ that maximizes the asymptotic probability of detection while keeping fixed the false alarm probability equal to some fixed value $\eta$. It is worth observing that the regularization parameter that maximizes the probability of detection does not depend on the threshold $r$. Moreover, for any fixed $\rho$, one can alway tune $r$ in such a way that the asymptotic false alarm probability satisfies the detector requirement. The optimal design procedure consists thus in the following two steps:
\begin{itemize}
\item First, setting the regularization parameter to one of the values maximizing $\hat{f}_{\rm SCM}(\rho)$:
\begin{equation}
\hat{\rho}_N^* \in\argmax_{\rho\in\mathcal{R}_{\kappa}^{\rm SCM}}\left\{\hat{f}_{\rm SCM}(\rho)\right\}
\label{eq:rho_N}
\end{equation}
\item Second, selecting the threshold $\hat{r}$ as:
\begin{equation}
\hat{r}=\hat{\sigma}_{N,{\rm SCM}}(\hat{\rho}_N^*) \sqrt{-2\log\eta}
\label{eq:threshold}
\end{equation}
\end{itemize}
\section{Optimal design of the ANMF-RTE: Non-Gaussian clutter}
\label{sec:non_gaussian_clutter}
This section discusses the design of the ANMF-RTE detector in the case where the clutter is non-Gaussian. In particular, we assume that the secondary observations satisfy the following assumptions:
\begin{assumption}
For $i\in\left\{1,\cdots,n\right\}$, ${\bf x}_i=\sqrt{\tau_i}{\bf C}_N^{\frac{1}{2}} {\bf w}_i=\sqrt{\tau_i}{\bf z}_i$ where
\begin{itemize}
	\item ${\bf w}_1,\cdots,{\bf w}_n$ are $N\times 1$  independent unitarly invariant complex   zero-mean random vectors with $\|{\bf w}_i\|^2=N$,
\item ${\bf C}_N \in \mathbb{C}^{N\times N}$ is such that $\lim\sup \|C_N\| < \infty$ and $\frac{1}{N}\tr{\bf C}_N=1$.  
\item $\tau_i>0$ are independent of ${\bf w}_i$.
\item $\lim\inf \frac{1}{N}{\bf p}^*{\bf C}_N{\bf p} >0$.
\end{itemize}
\label{ass:model_elliptical}
\end{assumption}
The random model described in Assumption \ref{ass:model_elliptical} is that of CES distributions which encompass a wide range of observation distributions obtained for different settings of the statistics of $\tau_i$. 
%The appealing feature of the random model described in Assumption \ref{ass:model_elliptical} is that it encompasses a wide range of observation distributions obtained for different settings of the statistics of $\tau_i$. In particular, standard Gaussian observations of the previous section could be selected by choosing $\tau_i=1$, while the compound Gaussian distributions are obtained when the $\tau_i$'s are independent of ${\bf w}_i$.
Prior to stating our main findings, we shall first review   some recent results concerning the asymptotic behaviour of the RTE in the asymptotic regime.
\subsection{Background}
This section reviews the recent results in \cite{couillet-kammoun-14} about the asymptotic behaviour of the RTE estimator.

Recall that the RTE is defined,  for $\rho\in\left(\max\left\{0,1-\frac{n}{N}\right\},1\right]$, as the unique solution to the following equation:
$$
\hat{\bf C}_N(\rho)=(1-\rho)\frac{1}{n}\sum_{i=1}^n \frac{{\bf x}_i{\bf x}_i^*}{\frac{1}{N}{\bf x}_i^*\hat{\bf C}_N^{-1}(\rho){\bf x}_i} +\rho {\bf I}_N.
$$
The study of the asymptotic behaviour of robust-scatter estimators is much more challenging than that of the traditional sample covariance matrices. The main reasons are that, first, robust estimators of scatter do not have closed-form expressions and, second, the dependence between the outer-products involved in their expressions is non-linear, which does not allow for the use of standard random matrix analysis. In order to study this class of estimators, new technical tools based on different rewriting of the robust-scatter estimators have been developed by Couillet et al. \cite{couillet-pascal-2013,couillet-13,couillet-13a}. The important advantage of these techniques is that they suggest to replace robust estimators by asymptotically equivalent random matrices for which many results from random matrix theory are applicable. In particular, the RTE estimator defined above has been studied in \cite{couillet-kammoun-14} and has been shown to behave in the regime where $N,n\to\infty$ in such a way that $c_N\to c\in(0,\infty)$ similar to $\hat{\bf S}_N(\rho)$ given by:
\begin{equation}
\hat{\bf S}_N(\rho)=\frac{1}{\gamma_N(\rho)}\frac{1-\rho}{1-(1-\rho)c_N}\frac{1}{n}\sum_{i=1}^n {\bf z}_i{\bf z}_i^{*}+\rho {\bf I}_N,
\label{eq:hatSN}
\end{equation}
where $\gamma_N(\rho)$ is the unique solution to:
$$
1=\int \frac{t}{\gamma_N(\rho)\rho+(1-\rho)t} \nu_N(dt).
$$
More specifically, the following theorem applies:
\begin{theorem}[\cite{couillet-13}]
For any $\kappa>0$ small, define $\mathcal{R}_\kappa^{\rm RTE}\triangleq \left[\kappa+\max(0,1-c^{-1}),1\right]$. Then,  as $N,n\to\infty$ with $c_N\to c\in(0,\infty)$, we have:
$$
\sup_{\rho\in\mathcal{R}_\kappa^{\rm RTE}}\left\|\hat{\bf C}_N(\rho)-\hat{\bf S}_N(\rho)\right\|\asto 0.
$$
\label{th:first_order}
\end{theorem}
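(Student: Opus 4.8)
The plan is to exploit the defining scale-invariance of the RTE to eliminate the textures, then to decouple the implicit fixed-point equation by a leave-one-out argument and concentrate the resulting quadratic forms. First I would observe that since ${\bf x}_i=\sqrt{\tau_i}\,{\bf z}_i$, the texture cancels in each summand, because the normalization $\frac{1}{N}{\bf x}_i^*\hat{\bf C}_N^{-1}(\rho){\bf x}_i$ carries the same factor $\tau_i$ as the outer product ${\bf x}_i{\bf x}_i^*$. Hence $\hat{\bf C}_N(\rho)$ depends only on the speckles ${\bf z}_i$ and satisfies $\hat{\bf C}_N(\rho)=(1-\rho)\frac{1}{n}\sum_i d_i(\rho)^{-1}{\bf z}_i{\bf z}_i^*+\rho{\bf I}_N$, where $d_i(\rho)\triangleq \frac{1}{N}{\bf z}_i^*\hat{\bf C}_N^{-1}(\rho){\bf z}_i$. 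This immediately absorbs the non-Gaussian nature of the clutter: everything reduces to the behaviour of the scalars $d_i(\rho)$.

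Next I would establish the uniform spectral bounds that make the argument run on the whole regularization range. From the fixed-point equation one reads off $\hat{\bf C}_N(\rho)\succeq \rho{\bf I}_N$, so $\|\hat{\bf C}_N^{-1}(\rho)\|\le \rho^{-1}\le \kappa^{-1}$ uniformly on $\mathcal{R}_\kappa^{\rm RTE}$, and a matching almost-sure upper bound $\|\hat{\bf C}_N(\rho)\|\le K$ follows from $\lim\sup\|{\bf C}_N\|<\infty$ together with the operator-norm boundedness of $\frac{1}{n}\sum_i {\bf z}_i{\bf z}_i^*$ (the buffer $\kappa$ away from the lower edge $\max(0,1-c^{-1})$ is precisely what prevents the estimator from degenerating). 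With these bounds in hand, for each $i$ I would introduce the leave-one-out estimator $\hat{\bf C}_{(i)}(\rho)$ built from the samples other than ${\bf z}_i$, which is independent of ${\bf z}_i$; a rank-one perturbation bound controls $\|\hat{\bf C}_N(\rho)-\hat{\bf C}_{(i)}(\rho)\|$, and the standard trace lemma gives $\frac{1}{N}{\bf z}_i^*\hat{\bf C}_{(i)}^{-1}(\rho){\bf z}_i-\frac{1}{N}\tr {\bf C}_N\hat{\bf C}_{(i)}^{-1}(\rho)\asto 0$. Combining these through the Sherman--Morrison identity relating $\hat{\bf C}_N^{-1}$ to $\hat{\bf C}_{(i)}^{-1}$ produces the self-interaction correction $1-(1-\rho)c_N$ in the denominator, and shows that all the $d_i(\rho)$ concentrate around a single deterministic scalar, identified with $\gamma_N(\rho)(1-(1-\rho)c_N)$ through the defining equation $1=\int t\,(\gamma_N(\rho)\rho+(1-\rho)t)^{-1}\,\nu_N(dt)$.

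Substituting this concentrated value back into the expression for $\hat{\bf C}_N(\rho)$ yields exactly $\hat{\bf S}_N(\rho)$ up to a vanishing operator-norm error, giving pointwise convergence for each fixed $\rho$. The final step is to upgrade this to the uniform statement over $\mathcal{R}_\kappa^{\rm RTE}$: since the maps $\rho\mapsto \hat{\bf C}_N(\rho)$ and $\rho\mapsto\hat{\bf S}_N(\rho)$ are Lipschitz on the compact interval, with constants controlled by the spectral bounds above, a finite $\varepsilon$-net over $\rho$ combined with the pointwise convergence and an equicontinuity estimate delivers the supremum result. The hard part will be the decoupling step: making the leave-one-out argument and the concentration of the $d_i(\rho)$ fully uniform in both $i$ and $\rho$, since each $d_i$ depends on $\hat{\bf C}_N$, which in turn depends on every sample. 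This circular dependence, rather than any single estimate, is what forces the careful rank-one and Sherman--Morrison bookkeeping and is exactly where the asymptotic-equivalence techniques of \cite{couillet-13,couillet-pascal-2013} are required.
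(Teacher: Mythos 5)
First, a point of reference: the paper does not prove this theorem at all --- it is imported verbatim from \cite{couillet-13} (see also \cite{couillet-pascal-2013}), so there is no in-paper proof to compare against. Judged on its own terms, your sketch assembles the right ingredients, and they are indeed the ingredients of the proof in the cited reference: the cancellation of the textures $\tau_i$ in the ratio ${\bf x}_i{\bf x}_i^*/(\frac{1}{N}{\bf x}_i^*\hat{\bf C}_N^{-1}(\rho){\bf x}_i)$ is correct and is what reduces everything to the speckles; the identification of the common limit of the $d_i(\rho)$ with $\gamma_N(\rho)(1-(1-\rho)c_N)$ is consistent with the form of $\hat{\bf S}_N(\rho)$ in \eqref{eq:hatSN}; and the $\varepsilon$-net plus Lipschitz-in-$\rho$ argument is how uniformity over $\mathcal{R}_\kappa^{\rm RTE}$ is obtained in the literature.

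The genuine gap is in the decoupling step, and it is larger than your closing caveat suggests. You propose to control $\|\hat{\bf C}_N(\rho)-\hat{\bf C}_{(i)}(\rho)\|$ by a rank-one perturbation bound, but the leave-one-out RTE is \emph{not} a rank-one perturbation of the full one: removing ${\bf z}_i$ changes every weight $d_j(\rho)$, $j\neq i$, because each $d_j$ is defined through the full fixed point, so the difference of the two estimators is generically full rank. The same circularity undermines your claimed a.s.\ upper bound on $\|\hat{\bf C}_N(\rho)\|$: bounding it requires a lower bound on $\min_i d_i(\rho)$, which in turn requires the very concentration you are trying to prove (the naive bound $d_i\geq \frac{1}{N}\|{\bf z}_i\|^2/\|\hat{\bf C}_N(\rho)\|$ fed back into the fixed point yields nothing useful). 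The cited works break this circle differently: they compare $\hat{\bf C}_N(\rho)$ directly to the \emph{explicit} matrix $\hat{\bf S}_N(\rho)$, order the ratios $d_i(\rho)/\gamma_N(\rho)$, and derive a contradiction from the assumption that the extremal ratio stays bounded away from $1$ --- the leave-one-out and quadratic-form concentration arguments are applied to the explicit matrix $\frac{1}{n}\sum_{j\neq i}{\bf z}_j{\bf z}_j^*$ (for which rank-one identities do hold), not to the implicit fixed point. Without that extremal-ratio device, or an equivalent mechanism, the concentration of the $d_i(\rho)$ uniformly in $i$ and $\rho$ does not close, so as written the proposal does not constitute a proof.
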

Theorem \ref{th:first_order} establishes a convergence in the operator norm of the difference $\hat{\bf C}_N(\rho)-\hat{\bf S}_N(\rho)$.  This  result allows one to transfer the asymptotic first order analysis of many functionals of $\hat{\bf C}_N(\rho)$ to  $\hat{\bf S}_N(\rho)$. However, when it comes to the study of  fluctuations, this result is of little help. Indeed, although Theorem \ref{th:first_order} can be easily refined as
$$
\sup_{\rho\in\mathcal{R}_\kappa^{RTE}}N^{\frac{1}{2}-\epsilon}\left\|\hat{\bf C}_N(\rho)-\hat{\bf S}_N(\rho)\right\|\asto 0.
$$
for each $\epsilon >0$, the above convergence does not suffice to obtain  the convergence of most of the commonly used functionals which involve fluctuations of order $N^{-\frac{1}{2}}$ or $N^{-1}$ (e.g. quadratic forms of $\hat{\bf C}_N(\rho)$ or linear statistics of the eigenvalues of $\hat{\bf C}_N(\rho)$). While a further refinement of the above convergence seems to be out of reach, it has recently been established  in \cite{couillet-kammoun-14} that the fluctuations of special functionals can be proved to be much faster, mainly by virtue of an averaging effect which cancels out terms fluctuating at lower speed. In particular, bilinear forms of the type ${\bf a}^{*}\hat{\bf C}_N^{k}(\rho){\bf b}$ were studied in \cite{couillet-kammoun-14}, where the following proposition was proved:
\begin{proposition}
	Let ${\bf a},{\bf b}\in \mathbb{C}^{N}$ with $\|{\bf a}\|=\|{\bf b}\|=1$ deterministic or random independent of ${\bf x}_1,\cdots,{\bf x}_n$. Then, as $N,n\to\infty$, with $c_N\to c \in(0,\infty)$, for any $\epsilon >0$ and every $k\in \mathbb{Z}$, 
$$
\sup_{\rho\in\mathcal{R}_\kappa^{RTE}}N^{1-\epsilon}\left|{\bf a}^*\hat{\bf C}_N^k(\rho){\bf b}-{\bf a}^*\hat{\bf S}_N^k(\rho){\bf b}\right|\asto 0.
$$
where $\mathcal{R}_\kappa^{RTE}$ is defined as in Theorem \ref{th:first_order}, where $k\in\mathbb{Z}$ in any power of the matrices $\hat{\bf C}_N$ and $\hat{\bf S}_N$. 
\label{th:second_order}
\end{proposition}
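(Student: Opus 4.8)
The plan is to first reduce the statement for arbitrary $k\in\mathbb{Z}$ to a single resolvent estimate. Observe that both $\hat{\mathbf{C}}_N(\rho)$ and $\hat{\mathbf{S}}_N(\rho)$ have spectra almost surely contained in a fixed compact set $[\kappa,M]\subset(0,\infty)$, the lower bound following from the additive term $\rho\mathbf{I}_N$ with $\rho\geq\kappa$ and the upper bound from $\limsup\|\mathbf{C}_N\|<\infty$ together with Theorem~\ref{th:first_order}. Fixing a rectangular contour $\Gamma$ that encircles $[\kappa,M]$ while excluding the origin, the holomorphic functional calculus gives $\hat{\mathbf{C}}_N^k=\frac{1}{2\pi i}\oint_\Gamma z^k(z\mathbf{I}_N-\hat{\mathbf{C}}_N)^{-1}\,dz$ for every $k\in\mathbb{Z}$ (the integrand is holomorphic on and inside $\Gamma$, since $0$ lies outside $\Gamma$ and its interior), and likewise for $\hat{\mathbf{S}}_N$. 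Hence it suffices to prove, uniformly for $z\in\Gamma$ and $\rho\in\mathcal{R}_\kappa^{\rm RTE}$, the resolvent bound $N^{1-\epsilon}|\mathbf{a}^*\mathbf{Q}_C(z)\mathbf{b}-\mathbf{a}^*\mathbf{Q}_S(z)\mathbf{b}|\asto 0$, with $\mathbf{Q}_C(z)\triangleq(z\mathbf{I}_N-\hat{\mathbf{C}}_N(\rho))^{-1}$ and $\mathbf{Q}_S(z)\triangleq(z\mathbf{I}_N-\hat{\mathbf{S}}_N(\rho))^{-1}$; integrating this estimate against $z^k$ over the fixed contour transfers both the rate and the uniformity to the $k$-th powers.

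For the resolvent estimate I would use the identity $\mathbf{Q}_C-\mathbf{Q}_S=\mathbf{Q}_C(\hat{\mathbf{C}}_N-\hat{\mathbf{S}}_N)\mathbf{Q}_S$. Writing $\mathbf{x}_i=\sqrt{\tau_i}\mathbf{z}_i$ and exploiting the self-normalizing structure of the RTE, the textures cancel in the defining weights, so that $\hat{\mathbf{C}}_N-\hat{\mathbf{S}}_N=\frac{1-\rho}{n}\sum_{i=1}^n e_i\,\mathbf{z}_i\mathbf{z}_i^*$ with $e_i\triangleq\frac{1}{\hat d_i}-\frac{1}{\gamma_N(\rho)(1-(1-\rho)c_N)}$ and $\hat d_i\triangleq\frac1N\mathbf{z}_i^*\hat{\mathbf{C}}_N^{-1}(\rho)\mathbf{z}_i$. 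This yields the scalar identity $\mathbf{a}^*(\mathbf{Q}_C-\mathbf{Q}_S)\mathbf{b}=\frac{1-\rho}{n}\sum_{i=1}^n e_i(\mathbf{a}^*\mathbf{Q}_C\mathbf{z}_i)(\mathbf{z}_i^*\mathbf{Q}_S\mathbf{b})$. A leave-one-out (rank-one perturbation) analysis of $\hat d_i$, combined with the trace lemma for the quadratic form $\frac1N\mathbf{z}_i^*\hat{\mathbf{C}}_{N,(i)}^{-1}(\rho)\mathbf{z}_i$ and the fixed-point equation defining $\gamma_N(\rho)$, shows that each $e_i=O(N^{-1/2+\epsilon})$; inserting this crude bound only reproduces the operator-norm rate $N^{-1/2+\epsilon}$. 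The whole point is therefore to extract an extra factor $N^{-1/2}$ from the averaging over $i$.

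The key step, and the main obstacle, is to prove that $\frac1n\sum_i e_i(\mathbf{a}^*\mathbf{Q}_C\mathbf{z}_i)(\mathbf{z}_i^*\mathbf{Q}_S\mathbf{b})$ is of order $N^{-1+\epsilon}$ rather than $N^{-1/2}$. I would split $e_i$ into a deterministic bias and a centered fluctuation. The bias is exactly the quantity that the definitions of $\gamma_N(\rho)$ and of $\hat{\mathbf{S}}_N(\rho)$ are designed to annihilate, so its net contribution is already $O(N^{-1})$. For the fluctuation, I would decouple $\mathbf{z}_i$ from the coefficient $e_i$ and from $\mathbf{Q}_C,\mathbf{Q}_S$ by passing to their leave-one-out versions $\hat{\mathbf{C}}_{N,(i)},\hat{\mathbf{S}}_{N,(i)}$ at a cost controlled by rank-one bounds, so that each summand becomes, to leading order, a centered quadratic form in the independent vector $\mathbf{z}_i$. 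The sum of these $\sim n$ weakly dependent, mean-zero contributions, each of size $N^{-1/2}$, concentrates at the scale $n^{-1/2}\cdot N^{-1/2}=O(N^{-1})$; I would make this rigorous through a martingale-difference decomposition along the filtration generated by $\mathbf{x}_1,\dots,\mathbf{x}_i$ together with the Burkholder inequality to bound a high moment, yielding $N^{-1+\epsilon}$ with overwhelming probability. The genuine difficulty is that the implicit definition of $\hat{\mathbf{C}}_N(\rho)$ couples all the weights $\hat d_i$ to one another and to every $\mathbf{z}_j$, so the decoupling and the centering must be bootstrapped from the a priori operator-norm control $\sup_\rho N^{1/2-\epsilon}\|\hat{\mathbf{C}}_N-\hat{\mathbf{S}}_N\|\asto 0$ supplied by the refinement of Theorem~\ref{th:first_order}.

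Finally I would upgrade the pointwise-in-$(\rho,z)$ high-probability bounds to the uniform almost-sure statement. Since $\rho\geq\kappa$ keeps all resolvents uniformly bounded, the maps $\rho\mapsto\mathbf{a}^*\mathbf{Q}_C(z)\mathbf{b}$ and $z\mapsto\mathbf{a}^*\mathbf{Q}_C(z)\mathbf{b}$ are Lipschitz with constants growing at most polynomially in $N$; covering the compact set $\mathcal{R}_\kappa^{\rm RTE}\times\Gamma$ by a net of polynomial cardinality, applying a union bound over the net, and invoking the Borel--Cantelli lemma (the high-probability bounds being summable in $N$) converts the estimate into uniform almost-sure convergence. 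Integrating over $\Gamma$ as in the first paragraph then delivers the claimed uniform $N^{1-\epsilon}$ convergence of $\mathbf{a}^*\hat{\mathbf{C}}_N^k(\rho)\mathbf{b}-\mathbf{a}^*\hat{\mathbf{S}}_N^k(\rho)\mathbf{b}$ for every $k\in\mathbb{Z}$, completing the argument.
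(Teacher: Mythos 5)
The paper does not actually prove this proposition---it is quoted from \cite{couillet-kammoun-14}---but your outline reproduces the strategy of the proof given there: since the textures cancel in the RTE weights, $\hat{\bf C}_N(\rho)-\hat{\bf S}_N(\rho)$ is the averaged sum of rank-one terms weighted by $e_i=1/\hat d_i-1/\bigl(\gamma_N(\rho)(1-(1-\rho)c_N)\bigr)$, each $e_i$ is only $O(N^{-1/2+\epsilon})$ after a leave-one-out/trace-lemma argument, and the extra factor $N^{-1/2}$ is recovered from the averaging over $i$ once the summands are decoupled and centered, exactly the mechanism invoked in the reference. Your additional device of passing from the resolvent to arbitrary $k\in\mathbb{Z}$ by holomorphic functional calculus on a fixed contour enclosing $[\kappa,M]$ but not the origin is sound, because $\hat{\bf C}_N(\rho)\succeq\rho{\bf I}_N\succeq\kappa{\bf I}_N$ and $\hat{\bf S}_N(\rho)\succeq\kappa{\bf I}_N$ uniformly over $\mathcal{R}_\kappa^{\rm RTE}$ while both operator norms stay almost surely bounded.
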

Some important consequences of Proposition \ref{th:second_order} need to be stated. First, we shall recall that, while the crude study of the random variates ${\bf a}^{*}\hat{\bf C}_N^k(\rho){\bf b}$ seems to be intractable, quadratic forms of the type ${\bf a}^{*}\hat{\bf S}_N^k(\rho){\bf b}$ are well-understood objects whose behavior can be studied using standard tools from random matrix theory \cite{Kammoun09}. It is thus interesting to transfer the study of the fluctuations of  ${\bf a}^{*}\hat{\bf C}_N^k(\rho) {\bf b}$ to  ${\bf a}^{*}\hat{\bf S}_N^k(\rho) {\bf b}$. Proposition \ref{th:second_order} achieves this goal by taking $\epsilon <\frac{1}{2}$. Not only does it entail that ${\bf a}^{*}\hat{\bf C}_N^{k}(\rho){\bf b}$ fluctuates at the order of $N^{-\frac{1}{2}}$ (since so does ${\bf a}^*\hat{\bf S}_N^{k}(\rho){\bf b}$) but also it allows one to prove that ${\bf a}^{*}\hat{\bf C}_N^{k}(\rho){\bf b}$ and ${\bf a}^{*}\hat{\bf S}_N^k(\rho) {\bf b}$ exhibit asymptotically the same fluctuations. 
Similar to \cite{couillet-kammoun-14}, our concern will be rather focused on the case  $k=-1$.  In the next section, we will show how this result can be exploited in order to derive the receiver operating characteristic (ROC) of the ANMF-RTE detector.

\label{sec:background}
\subsection{Optimal design of the ANMF-RTE detector}
As explained above, in order to allow for an optimal design of the ANMF-RTE detector, one needs to characterize the distribution of $\widehat{T}_N^{\rm RTE}(\rho)$ under hypotheses $H_0$ and $H_1$. Using Proposition \ref{th:second_order}, we know that the statistic $\widehat{T}_N^{\rm RTE}(\rho)$ which cannot be handled directly, has the same fluctuations as   $\widetilde{T}_N^{\rm RTE}(\rho)$ obtained by replacing $\hat{\bf C}_N(\rho)$ by $\hat{\bf S}_N(\rho)$. That is:
$$
\widetilde{T}_N^{\rm RTE}(\rho)=\frac{\left|{\bf y}^*\hat{\bf S}_N^{-1}(\rho){\bf p}\right|}{\sqrt{{\bf p}^*\hat{\bf S}_N^{-1}(\rho){\bf p}}\sqrt{{\bf y}^*\hat{\bf S}_N^{-1}(\rho){\bf y}}}
$$
where $\hat{\bf S}_N(\rho)$ is given by \eqref{eq:hatSN}. 

Let $\underline{\rho}=\rho\left({\rho+\frac{1}{\gamma_N(\rho)}\frac{1-\rho}{1-(1-\rho)c}}\right)^{-1}$. Then, $\hat{\bf S}_N(\rho)={\rho}{\underline{\rho}^{-1}}{\widehat{\bf R}}_N(\underline{\rho})$, where, with a slight abuse of notation, we denote by ${\widehat{\bf R}}_N(\rho)$ the matrix $(1-\rho)\frac{1}{n}\sum_{i=1}^n {\bf z}_i{\bf z}_i^* +\rho {\bf I}_N$. Since $\widetilde{T}_N^{\rm RTE}(\rho)$ remains unchanged after scaling of $\hat{\bf S}_N(\rho)$ and ${\bf y}$, we also have:
$$
\widetilde{T}_N^{\rm RTE}(\rho)=\frac{\left|\frac{1}{\sqrt{\tau}}{\bf y}^*{\widehat{\bf R}}_N^{-1}(\underline{\rho}){\bf p}\right|}{\sqrt{{\bf p}^*{\widehat{\bf R}}_N^{-1}(\underline{\rho}){\bf p}}\sqrt{\frac{1}{\tau}{\bf y}^*{\widehat{\bf R}}_N^{-1}(\underline{\rho}){\bf y}}}
$$
where $\tau=1$ under $H_0$. %  or the value of $\tau$ in ${\bf y}=\frac{a}{\sqrt{N}}{\bf v}+\sqrt{\tau}{\bf z}$ under $H_1$.
It turns out that, conditionally to $\tau$, the fluctuations of the robust statistic $\widehat{T}_N^{\rm RTE}(\rho)$ under  $H_0$ or $H_1$ are the same as those obtained in Theorem \ref{th:false_alarm} and Theorem~\ref{th:detection} once $a$ is replaced by $\frac{a}{\sqrt{\tau}}$ and ${\rho}$ by $\underline{\rho}$\footnote{Note that vector ${\bf y}$ can be assumed to be Gaussian without impacting the asymptotic distributions of $\sqrt{N}\widehat{T}_N^{RTE}$ under $H_0$ and $H_1$. }. As a consequence, we have the following results:
\begin{theorem}[False alarm probability, \cite{couillet-kammoun-14}]
As $N,n\to \infty$ with $c_N\to c\in(0,\infty)$, 
$$
\sup_{\rho\in\mathcal{R}_\kappa^{\rm RTE}} \left|\mathbb{P}\left[\widehat{T}^{\rm RTE}_N(\rho) > \frac{r}{\sqrt{N}}|H_0\right]-e^{-\frac{r^2}{2\sigma_{N,{\rm RTE}}^2({\rho})}}\right|\to 0,
$$
where $\rho\mapsto \underline{\rho}$ is the aforementioned mapping and 
\begin{align*}
\sigma_{N,{\rm RTE}}^2({\rho})&\triangleq \frac{1}{2}\frac{{\bf p}^*{\bf C}_N{\bf Q}_N^2(\underline{\rho}){\bf p}}{{\bf p}^*{\bf Q}_N(\underline{\rho}){\bf p}\frac{1}{N}\tr {\bf C}_N{\bf Q}_N(\underline{\rho})}\\
&\times \frac{1}{\left(1-c(1-\underline{\rho})^2m(-\underline{\rho})^2{\frac{1}{N}\tr {\bf C}_N^2{\bf Q}_N^2(\underline{\rho})}\right)}
\end{align*}
with ${\bf Q}_N(\underline{\rho})\triangleq \left({\bf I}_N+(1-\underline{\rho})m(-\underline{\rho}){{\bf C}_N}\right)^{-1}$.
\label{th:false_alarm_1}
\end{theorem}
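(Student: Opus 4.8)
The plan is to reduce the false-alarm analysis of the ANMF-RTE statistic to the already-understood Gaussian RSCM case of Theorem~\ref{th:false_alarm}, exploiting the surrogate matrix $\hat{\bf S}_N(\rho)$. First I would invoke Proposition~\ref{th:second_order} with $k=-1$: after writing the scale-invariant statistic in terms of the unit vectors ${\bf u}={\bf y}/\|{\bf y}\|$ and ${\bf v}={\bf p}/\|{\bf p}\|$, every bilinear form ${\bf u}^*\hat{\bf C}_N^{-1}(\rho){\bf v}$ appearing in $\widehat{T}_N^{\rm RTE}(\rho)$ differs from its surrogate counterpart ${\bf u}^*\hat{\bf S}_N^{-1}(\rho){\bf v}$ by $o(N^{-1+\epsilon})$, uniformly over $\rho\in\mathcal{R}_\kappa^{\rm RTE}$. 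Since under $H_0$ the numerator fluctuates only at the slower rate $N^{-1/2}$, multiplying the discrepancy by $\sqrt{N}$ leaves $o(N^{-1/2+\epsilon})\to0$ for $\epsilon<\tfrac12$; combined with the almost-sure convergence of the two denominators to the same positive deterministic limits, this shows $\sqrt{N}\widehat{T}_N^{\rm RTE}(\rho)$ and $\sqrt{N}\widetilde{T}_N^{\rm RTE}(\rho)$ share the same limiting law, uniformly in $\rho$.

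Next I would linearize the surrogate. Using the identity $\hat{\bf S}_N(\rho)=\rho\,\underline{\rho}^{-1}\,\widehat{\bf R}_N(\underline{\rho})$ together with the invariance of $\widetilde{T}_N^{\rm RTE}(\rho)$ under joint rescaling of $\hat{\bf S}_N(\rho)$ and of ${\bf y}$, I can rewrite the surrogate statistic purely in terms of $\widehat{\bf R}_N(\underline{\rho})=(1-\underline{\rho})\frac1n\sum_i{\bf z}_i{\bf z}_i^*+\underline{\rho}{\bf I}_N$ acting on the speckles ${\bf z}_i={\bf C}_N^{1/2}{\bf w}_i$. Two distinct invariances then make the texture irrelevant under $H_0$: the defining fixed-point equation of the RTE is insensitive to the secondary textures $\tau_i$, since they cancel in each ratio ${\bf x}_i{\bf x}_i^*/(\tfrac1N{\bf x}_i^*\hat{\bf C}_N^{-1}{\bf x}_i)$, so $\hat{\bf S}_N(\rho)$ is genuinely assembled from ${\bf z}_i{\bf z}_i^*$ alone; and the factor $\sqrt{\tau}$ carried by ${\bf y}={\bf x}=\sqrt{\tau}{\bf z}$ cancels in the ratio (equivalently, $\tau=1$ under $H_0$). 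What remains is exactly the Gaussian RSCM statistic $\widehat{T}_N^{\rm RSCM}(\underline{\rho})$ of Theorem~\ref{th:false_alarm}, now driven by regularization level $\underline{\rho}$ and by speckles built from unitarily invariant (rather than Gaussian) ${\bf w}_i$; the standard quadratic-form equivalences ensure the asymptotic law is insensitive to this substitution, as recorded in the footnote.

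I would then conclude by quoting Theorem~\ref{th:false_alarm} at $\rho\mapsto\underline{\rho}$: the false-alarm probability converges to $e^{-r^2/(2\sigma_{N,{\rm SCM}}^2(\underline{\rho}))}$, and inspecting the defining formula shows $\sigma_{N,{\rm SCM}}^2(\underline{\rho})$ is precisely the announced $\sigma_{N,{\rm RTE}}^2(\rho)$, with ${\bf Q}_N(\underline{\rho})=({\bf I}_N+(1-\underline{\rho})m(-\underline{\rho}){\bf C}_N)^{-1}$. To upgrade pointwise to uniform convergence over $\rho\in\mathcal{R}_\kappa^{\rm RTE}$, I would check that $\rho\mapsto\underline{\rho}$ is continuous and maps the compact set $\mathcal{R}_\kappa^{\rm RTE}$ into some $[\kappa',1]=\mathcal{R}_{\kappa'}^{\rm SCM}$ with $\kappa'>0$, so that the uniform statement of Theorem~\ref{th:false_alarm} transfers intact through the change of variable.

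The main obstacle I anticipate is not the change of variable itself but justifying that the $o(N^{-1+\epsilon})$ control of Proposition~\ref{th:second_order} transfers the full \emph{distribution} of the ratio $\widehat{T}_N^{\rm RTE}(\rho)$, and does so \emph{uniformly} in $\rho$. Concretely, one must argue that the numerator's $N^{-1/2}$-order Gaussian fluctuation is unaffected by an $o(N^{-1/2})$ additive perturbation while the denominators stay bounded away from zero uniformly on $\mathcal{R}_\kappa^{\rm RTE}$, and separately verify the well-definedness of $\underline{\rho}$ through $\gamma_N(\rho)$ and the strict positivity of $1-(1-\underline{\rho})c$ across the whole regularization range. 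Once these continuity-and-uniformity bookkeeping points are settled, the statement follows as an immediate corollary of Theorem~\ref{th:false_alarm}.
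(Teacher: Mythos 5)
Your proposal follows essentially the same route as the paper: it invokes Proposition~\ref{th:second_order} (with $k=-1$) to replace $\hat{\bf C}_N^{-1}(\rho)$ by $\hat{\bf S}_N^{-1}(\rho)$ in the bilinear forms, uses the identity $\hat{\bf S}_N(\rho)=\rho\,\underline{\rho}^{-1}\widehat{\bf R}_N(\underline{\rho})$ together with the scale invariance of the statistic and the cancellation of the texture under $H_0$, and then transfers Theorem~\ref{th:false_alarm} through the change of variable $\rho\mapsto\underline{\rho}$, which is exactly the reduction the paper sketches before stating the theorem (citing \cite{couillet-kammoun-14} for the details). The uniformity and well-posedness caveats you flag are the right ones and are handled in the cited reference.
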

\begin{theorem}[Detection probability]
As $N,n\to \infty$ with $c_N\to c\in(0,\infty)$, 
\begin{align*}
&\sup_{\rho\in\mathcal{R}_\kappa^{\rm RTE}} \left|\mathbb{P}\left[\widehat{T}_N^{\rm RTE}(\rho) > \frac{r}{\sqrt{N}}|H_1\right]\right.\\
&\left.-\mathbb{E}\left[Q_1\left(g_{\rm RTE}({\bf p}),\frac{r}{\sigma_{N,{\rm RTE}}({\rho})}\right)\right]\right|\to 0,
\end{align*}
where the expectation is taken over the distribution of $\tau$, $\sigma_{N,{\rm RTE}}({\rho})$ has the same expression as in Theorem \ref{th:false_alarm_1} and 
\begin{align*}
g_{\rm RTE}({\bf p})&=\frac{\sqrt{1-c(1-\underline{\rho})^2m(-\underline{\rho})\frac{1}{N}\tr{\bf C}_N^2{\bf Q}_N^2(\underline{\rho})}}{\sqrt{{\bf p}^*{\bf C}_N{\bf Q}_N^2(\underline{\rho}){\bf p}}} \\
&\times \sqrt{\frac{2}{N\tau}} a \left|{\bf p}^*{\bf Q}_N(\underline{\rho}){\bf p}\right|.
\end{align*} and $Q_1$ is the Marcum Q-function.
\label{th:detection_1}
\end{theorem}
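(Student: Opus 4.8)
The plan is to reduce the analysis of $\widehat{T}_N^{\rm RTE}(\rho)$ under $H_1$ to the Gaussian RSCM case already settled in Theorem~\ref{th:detection}, and then to average the resulting conditional detection probability over the texture of the primary datum. The starting point is Proposition~\ref{th:second_order} with $k=-1$: applied to the three normalized bilinear forms built from ${\bf y}/\|{\bf y}\|$ and ${\bf p}/\|{\bf p}\|$, it guarantees that $\widehat{T}_N^{\rm RTE}(\rho)$ and $\widetilde{T}_N^{\rm RTE}(\rho)$ (obtained by substituting $\hat{\bf S}_N(\rho)$ for $\hat{\bf C}_N(\rho)$) differ by $o(N^{-1/2})$ uniformly over $\rho\in\mathcal{R}_\kappa^{\rm RTE}$. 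Choosing $\epsilon<\tfrac12$ in Proposition~\ref{th:second_order} makes this gap negligible against the $N^{-1/2}$ scale at which the statistic itself fluctuates, so that $\sqrt{N}\,\widehat{T}_N^{\rm RTE}(\rho)$ and $\sqrt{N}\,\widetilde{T}_N^{\rm RTE}(\rho)$ share the same limiting law. This step requires checking that the denominators ${\bf p}^*\hat{\bf S}_N^{-1}(\rho){\bf p}$ and $\tfrac1\tau{\bf y}^*\hat{\bf S}_N^{-1}(\rho){\bf y}$ stay bounded away from zero, which follows from $\liminf_N\tfrac1N{\bf p}^*{\bf C}_N{\bf p}>0$ in Assumption~\ref{ass:model_elliptical} together with the boundedness of $\hat{\bf S}_N^{-1}(\rho)$ on $\mathcal{R}_\kappa^{\rm RTE}$.

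Second, I would exploit the identity $\hat{\bf S}_N(\rho)=\rho\,\underline{\rho}^{-1}\,\widehat{\bf R}_N(\underline{\rho})$, noting that $\hat{\bf S}_N(\rho)$ in \eqref{eq:hatSN} is built solely from the speckles ${\bf z}_i$, so that the secondary-data textures $\tau_i$ have already been eliminated; the scale invariance of the statistic then lets me rewrite $\widetilde{T}_N^{\rm RTE}(\rho)$ in terms of $\widehat{\bf R}_N(\underline{\rho})=(1-\underline{\rho})\tfrac1n\sum_i{\bf z}_i{\bf z}_i^*+\underline{\rho}{\bf I}_N$, as displayed before the statement. Since the texture $\tau$ of the primary observation factors out of ${\bf y}=\sqrt{\tau}{\bf z}+\tfrac{a}{\sqrt N}{\bf p}$ under $H_1$, dividing by $\sqrt\tau$ turns the numerator argument into $\tfrac{a}{\sqrt{N\tau}}{\bf p}+{\bf z}$, which is exactly the Gaussian $H_1$ model of Assumption~\ref{ass:gaussian} with effective amplitude $a/\sqrt\tau$ and regularization $\underline{\rho}$; by the footnoted universality remark, ${\bf z}$ may be taken Gaussian without affecting the limiting distribution. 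Conditionally on $\tau$, Theorem~\ref{th:detection} then applies verbatim and yields the conditional detection probability $Q_1\!\left(g_{\rm RTE}({\bf p}),\,r/\sigma_{N,\rm RTE}(\rho)\right)$, where $g_{\rm RTE}$ and $\sigma_{N,\rm RTE}$ are precisely $g_{\rm SCM}$ and $\sigma_{N,\rm SCM}$ evaluated at $\underline{\rho}$ with $a$ replaced by $a/\sqrt\tau$. It is worth stressing that $\gamma_N(\rho)$, hence the map $\rho\mapsto\underline{\rho}$, is defined through $\nu_N$ alone and does not depend on $\tau$, so the only surviving $\tau$-dependence is this amplitude rescaling.

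Third, I would recover the unconditional probability by writing $\mathbb{P}[\,\cdot\,|H_1]=\mathbb{E}_\tau\big[\mathbb{P}[\,\cdot\,|H_1,\tau]\big]$ and passing to the limit inside the expectation. Because each conditional probability and each Marcum-$Q$ value lies in $[0,1]$ and the conditional convergence holds for almost every $\tau$ uniformly in $\rho$, dominated (indeed bounded) convergence lets me interchange limit and expectation, producing $\mathbb{E}_\tau[Q_1(\ldots)]$ and preserving the $\sup_{\rho\in\mathcal{R}_\kappa^{\rm RTE}}$. For the uniformity itself I would verify that $\rho\mapsto\underline{\rho}$ is continuous and increasing, mapping $\mathcal{R}_\kappa^{\rm RTE}$ into some $\mathcal{R}_{\kappa'}^{\rm SCM}$, so that the uniform-in-$\rho$ statement of Theorem~\ref{th:detection} transfers back to the RTE range.

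I expect the main obstacle to be the first step: transferring the fluctuations of a \emph{ratio} of bilinear forms, not merely a single bilinear form, from $\hat{\bf C}_N(\rho)$ to $\hat{\bf S}_N(\rho)$ while keeping the control uniform over $\rho\in\mathcal{R}_\kappa^{\rm RTE}$. One must propagate the $o(N^{-1/2})$ bound of Proposition~\ref{th:second_order} through the numerator and the two denominators simultaneously, track the $\|{\bf p}\|=N$ and $\|{\bf y}\|=O(\sqrt N)$ scalings so that the normalized differences indeed sit below the fluctuation level, and argue that the near-equality of denominators combined with a Slutsky/continuous-mapping argument yields the near-equality of the statistics; the remaining reduction to Theorem~\ref{th:detection} and the averaging over $\tau$ are then comparatively routine, mirroring the false-alarm analysis of Theorem~\ref{th:false_alarm_1} and \cite{couillet-kammoun-14}.
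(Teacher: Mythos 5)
Your proposal follows essentially the same route as the paper: reduce $\widehat{T}_N^{\rm RTE}(\rho)$ to $\widetilde{T}_N^{\rm RTE}(\rho)$ via Proposition~\ref{th:second_order}, use the identity $\hat{\bf S}_N(\rho)=\rho\,\underline{\rho}^{-1}\widehat{\bf R}_N(\underline{\rho})$ and scale invariance to land on the Gaussian RSCM case with $a$ replaced by $a/\sqrt{\tau}$ and $\rho$ by $\underline{\rho}$, apply Theorem~\ref{th:detection} conditionally on $\tau$, and conclude by bounding the supremum of the expectation difference by the expectation of the supremum and invoking dominated convergence. The argument is correct and matches the paper's proof.
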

\begin{proof}
Since the fluctuations of the robust statistic $\widehat{T}_N^{RTE}(\rho)$ is the same as that of $\widehat{T}_N^{RSCM}(\underline{\rho})$ when $a$ is replaced by $\frac{a}{\sqrt{\tau}}$, we have for any fixed $\tau$,
\begin{align*}
&\sup_{\rho\in\mathcal{R}_{\kappa}^{RTE}} \left|\mathbb{P}\left[\widehat{T}_N^{RTE}(\rho) > \frac{r}{\sqrt{N}}|H_1,\tau\right]\right.\\
&\left.-Q_1\left(g_{RTE}({\bf p}),\frac{r}{\sigma_{N,RTE}({\rho})}\right)\right|\asto 0.
\end{align*}
The result thus follows by noticing the following inequality
\begin{align*}
&\sup_{\rho\in\mathcal{R}_\kappa^{RTE}} \left|\mathbb{P}\left[\widehat{T}_N^{RTE}(\rho) > \frac{r}{\sqrt{N}}|H_1\right]\right.\\
&\left.-\mathbb{E}\left[Q_1\left(g_{RTE}({\bf p}),\frac{r}{\sigma_{N,RTE}({\rho})}\right)\right]\right|\\
&\leq \mathbb{E}\sup_{\rho\in\mathcal{R}_\kappa^{RTE}} \left|\mathbb{P}\left[\widehat{T}_N^{RTE}(\rho) > \frac{r}{\sqrt{N}}|H_1,\tau\right]\right.\\
&-\left.Q_1\left(g_{RTE}({\bf p}),\frac{r}{\sigma_{N,RTE}({\rho})}\right)\right|
\end{align*}
and resorting to the dominated convergence theorem.
\end{proof}
Similar to the Gaussian case, we need to build  consistent estimates for $\sigma_{N,{\rm RTE}}^2({\rho})$ and $f_{\rm RTE}(\rho)$ given by:
$$
f_{\rm RTE}(\rho)=\frac{\tau}{2a^2}g_{\rm RTE}^2({\bf p})
%\frac{2|{\bf p}^*{\bf Q}_N(\underline{\rho}){\bf p}|}{\frac{1}{N}\tr {\bf C}_N{\bf Q}_N(\underline{\rho})\sigma_N^2(\underline{\rho})}
$$
A consistent estimate for $\sigma_{N,{\rm RTE}}^2({\rho})$ was provided in \cite{couillet-kammoun-14}:
\begin{proposition}[Proposition 1 in \cite{couillet-kammoun-14}]
For $\rho\in\left(\max(\left\{0,1-c_N^{-1}\right\},1\right)$.
%Let $\underline{\hat{\rho}}=\frac{\rho}{\frac{1}{N}\tr \hat{C}_N}$. 
Define,
 %and $\underline{\rho}$ defined as above. Let $\hat{\sigma}_N^{2}(\underline{\rho})$ be given by:
\begin{align*}
\hat{\sigma}_{N,{\rm RTE}}^{2}({\rho})&=\frac{1}{2}\frac{1-\rho\frac{{\bf p}^*\hat{\bf C}_N^{-2}(\rho){\bf p}}{{\bf p}^*\hat{\bf C}_N^{-1}(\rho){\bf p}}}{\left(1-c_N+c_N\rho\right)\left(1-\rho\right)} \\
\end{align*}
and let $\hat{\sigma}_{N,{\rm RTE}}^2(1)\triangleq \lim_{\rho\uparrow 1}\hat{\sigma}_N^2(\rho)$. Then, we have:
$$
\sup_{\rho\in\mathcal{R}_\kappa^{\rm RTE}}\left|\sigma_{N,{\rm RTE}}^2(\rho)-\hat{\sigma}_{N,{\rm RTE}}^2(\rho)\right|\asto 0.
$$
\end{proposition}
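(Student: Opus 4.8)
The plan is to avoid analyzing $\hat{\sigma}_{N,{\rm RTE}}^2(\rho)$ directly and instead reduce it, \emph{via} the change of variable $\rho\mapsto\underline{\rho}$, to the RSCM estimator already handled in Proposition~\ref{prop:estimation_variance}. The starting point is the factorization of the deterministic target through this mapping: comparing the expressions in Theorem~\ref{th:false_alarm} and Theorem~\ref{th:false_alarm_1} gives the exact identity $\sigma_{N,{\rm RTE}}^2(\rho)=\sigma_{N,{\rm SCM}}^2(\underline{\rho})$. Hence it suffices to show that $\hat{\sigma}_{N,{\rm RTE}}^2(\rho)$ is asymptotically equal, uniformly over $\rho\in\mathcal{R}_\kappa^{\rm RTE}$, to the RSCM estimator $\hat{\sigma}_{N,{\rm SCM}}^2(\underline{\rho})$ built on the speckle matrix $\widehat{\bf R}_N(\underline{\rho})=(1-\underline{\rho})\frac1n\sum_i{\bf z}_i{\bf z}_i^*+\underline{\rho}{\bf I}_N$. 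Proposition~\ref{prop:estimation_variance}, whose proof relies only on first-order deterministic equivalents and therefore carries over to the unitarily-invariant speckle model of Assumption~\ref{ass:model_elliptical} by random matrix universality, then gives $\hat{\sigma}_{N,{\rm SCM}}^2(\underline{\rho})-\sigma_{N,{\rm SCM}}^2(\underline{\rho})\asto0$ uniformly, closing the chain.

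Two matchings turn $\hat{\sigma}_{N,{\rm RTE}}^2(\rho)$ into $\hat{\sigma}_{N,{\rm SCM}}^2(\underline{\rho})$. For the numerator, I would use the exact scaling $\hat{\bf S}_N(\rho)=\rho\,\underline{\rho}^{-1}\widehat{\bf R}_N(\underline{\rho})$ together with Proposition~\ref{th:second_order} applied to $k=-1$ and $k=-2$ (after normalizing ${\bf p}$ to unit norm, which leaves the scale-invariant ratios unchanged), to obtain
$$
\rho\frac{{\bf p}^*\hat{\bf C}_N^{-2}(\rho){\bf p}}{{\bf p}^*\hat{\bf C}_N^{-1}(\rho){\bf p}}-\underline{\rho}\frac{{\bf p}^*\widehat{\bf R}_N^{-2}(\underline{\rho}){\bf p}}{{\bf p}^*\widehat{\bf R}_N^{-1}(\underline{\rho}){\bf p}}\asto0,
$$
uniformly in $\rho$, the rescaling factor precisely converting the leading $\rho$ into $\underline{\rho}$. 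For the denominator, the crucial simplification is the exact RTE identity $\frac1N\tr\hat{\bf C}_N^{-1}(\rho)=1$, obtained by left-multiplying the fixed-point equation defining $\hat{\bf C}_N(\rho)$ by $\hat{\bf C}_N^{-1}(\rho)$ and taking the normalized trace (the summand $\tr[\hat{\bf C}_N^{-1}{\bf x}_i{\bf x}_i^*]={\bf x}_i^*\hat{\bf C}_N^{-1}{\bf x}_i$ cancels the self-normalization, leaving $N=(1-\rho)N+\rho\tr\hat{\bf C}_N^{-1}$). Since $\frac{\underline{\rho}}{N}\tr\widehat{\bf R}_N^{-1}(\underline{\rho})=\frac{\rho}{N}\tr\hat{\bf S}_N^{-1}(\rho)$ exactly and $\|\hat{\bf C}_N(\rho)-\hat{\bf S}_N(\rho)\|\asto0$ by Theorem~\ref{th:first_order}, this yields $\frac{\underline{\rho}}{N}\tr\widehat{\bf R}_N^{-1}(\underline{\rho})\asto\rho$, so the generic RSCM denominator $(1-c_N+\frac{c_N\underline{\rho}}{N}\tr\widehat{\bf R}_N^{-1}(\underline{\rho}))(1-\frac{\underline{\rho}}{N}\tr\widehat{\bf R}_N^{-1}(\underline{\rho}))$ collapses to the clean factor $(1-c_N+c_N\rho)(1-\rho)$ of $\hat{\sigma}_{N,{\rm RTE}}^2(\rho)$. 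This is in fact the structural reason the RTE estimator carries no trace terms at all.

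To finish I would check that $\rho\mapsto\underline{\rho}$ maps $\mathcal{R}_\kappa^{\rm RTE}$ into some $\mathcal{R}_{\underline{\kappa}}^{\rm SCM}=[\underline{\kappa},1]$ with $\underline{\kappa}>0$, using the continuity and monotonicity of the mapping established in \cite{couillet-kammoun-14}, so that the uniform convergence of Proposition~\ref{prop:estimation_variance} over $\mathcal{R}_{\underline{\kappa}}^{\rm SCM}$ transports to uniform convergence over $\mathcal{R}_\kappa^{\rm RTE}$. The main obstacle is not any one computation but the \emph{uniform} control of the several ratios of quadratic forms: one must keep all denominators bounded away from zero uniformly in $\rho$, which follows from $\rho{\bf I}_N\preceq\hat{\bf C}_N(\rho)\preceq\|\hat{\bf C}_N(\rho)\|{\bf I}_N$ with $\sup_\rho\|\hat{\bf C}_N(\rho)\|$ almost surely bounded (Theorem~\ref{th:first_order} plus boundedness of the top eigenvalue of the sample operator), and one must handle the degeneracy of $1-\frac{\underline{\rho}}{N}\tr\widehat{\bf R}_N^{-1}(\underline{\rho})$ only in the limit $\rho\uparrow1$, where the expression is $0/0$ and the value $\hat{\sigma}_{N,{\rm RTE}}^2(1)$ must be recovered by a separate continuity argument. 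The remaining care lies in the universality step, namely verifying that replacing Gaussian ${\bf w}_i$ by the unitarily-invariant ${\bf w}_i$ of Assumption~\ref{ass:model_elliptical} leaves the first-order deterministic equivalents underpinning Proposition~\ref{prop:estimation_variance} unchanged.
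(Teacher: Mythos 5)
The paper does not actually prove this proposition---it is imported verbatim as Proposition 1 of \cite{couillet-kammoun-14}---but your reduction via the map $\rho\mapsto\underline{\rho}$, the exact scaling $\hat{\bf S}_N(\rho)=\rho\,\underline{\rho}^{-1}\widehat{\bf R}_N(\underline{\rho})$ and the transfer of quadratic forms through Proposition \ref{th:second_order} is precisely the machinery the paper itself deploys for Theorem \ref{th:detection_1} and Proposition \ref{prop:f_rho_robust}, and your argument is correct; in particular the exact identity $\frac{1}{N}\tr\hat{\bf C}_N^{-1}(\rho)=1$, obtained from the fixed-point equation, is the right structural explanation for why the trace terms of $\hat{\sigma}_{N,{\rm SCM}}^2(\underline{\rho})$ collapse to the clean factor $(1-c_N+c_N\rho)(1-\rho)$. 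The only step left somewhat implicit is the universality of the first-order deterministic equivalents underlying Proposition \ref{prop:estimation_variance} when the Gaussian ${\bf w}_i$ are replaced by the unitarily invariant vectors of Assumption \ref{ass:model_elliptical}, which you correctly flag and which is exactly what the cited reference supplies.
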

Similar to the Gaussian clutter case, acquiring a consistent estimate for $f_{\rm RTE}(\rho)$ is mandatory for our design. We thus prove the following Proposition:
\begin{proposition}
For $\rho\in\left(\max\left\{0,1-c_N^{-1}\right\},1\right)$, let 
\begin{align*}
\hat{f}_{\rm RTE}({\rho})&= \left({\bf p}^*\hat{\bf C}_N^{-1}(\rho){\bf p}\right)^2\left(\frac{1}{N}\tr \hat{\bf C}_N(\rho)-\rho\right)\\
&\times \frac{\left(1-c_N+c_N\rho\right)^2}{{\bf p}^*\hat{\bf C}_N^{-1}(\rho){\bf p}-\rho{\bf p}^*\hat{\bf C}_N^{-2}(\rho){\bf p}}
\end{align*}
and $\hat{f}_{{\rm RTE}}\triangleq \lim_{\rho\uparrow 1} \hat{f}_{\rm RTE}(\rho)$.
Then, we have:
$$
\sup_{\rho\in\mathcal{R}_\kappa^{\rm RTE}}\left|\hat{f}_{\rm RTE}({\rho})-f_{\rm RTE}({\rho})\right| \asto 0.
$$
\label{prop:f_rho_robust}
\end{proposition}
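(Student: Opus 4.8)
The plan is to follow the same change-of-variable route that reduced Theorems~\ref{th:false_alarm_1}--\ref{th:detection_1} to their Gaussian counterparts: first replace the intractable RTE $\hat{\bf C}_N(\rho)$ by the tractable surrogate $\hat{\bf S}_N(\rho)$ inside $\hat{f}_{\rm RTE}(\rho)$, then exploit the scaling identity $\hat{\bf S}_N(\rho)=\rho\underline{\rho}^{-1}\widehat{\bf R}_N(\underline{\rho})$ to recognize the resulting expression as the RSCM estimator $\hat{f}_{\rm SCM}(\underline{\rho})$ of Proposition~\ref{prop:frho} built on the speckle $\frac{1}{n}\sum_i {\bf z}_i{\bf z}_i^*$, and finally invoke that proposition together with the identity $f_{\rm SCM}(\underline{\rho})=f_{\rm RTE}(\rho)$.

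First I would handle the substitution $\hat{\bf C}_N\to\hat{\bf S}_N$. The estimator $\hat{f}_{\rm RTE}(\rho)$ is a smooth function of the three random quantities ${\bf p}^*\hat{\bf C}_N^{-1}(\rho){\bf p}$, ${\bf p}^*\hat{\bf C}_N^{-2}(\rho){\bf p}$ and $\frac{1}{N}\tr\hat{\bf C}_N(\rho)$. Writing ${\bf p}=\|{\bf p}\|\,\bar{\bf p}$ with $\|\bar{\bf p}\|=1$, Proposition~\ref{th:second_order} with $\epsilon<\frac{1}{2}$ and $k=-1,-2$ gives, uniformly over $\rho\in\mathcal{R}_\kappa^{\rm RTE}$, that the two bilinear forms coincide with those of $\hat{\bf S}_N(\rho)$ up to $o(N^{-1/2})$, while Theorem~\ref{th:first_order} controls the normalized trace through $|\frac{1}{N}\tr\hat{\bf C}_N(\rho)-\frac{1}{N}\tr\hat{\bf S}_N(\rho)|\le\|\hat{\bf C}_N(\rho)-\hat{\bf S}_N(\rho)\|\asto 0$. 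To transfer these to the ratio $\hat{f}_{\rm RTE}$ I must first show that the denominator ${\bf p}^*\hat{\bf C}_N^{-1}(\rho){\bf p}-\rho{\bf p}^*\hat{\bf C}_N^{-2}(\rho){\bf p}$, suitably normalized, stays bounded away from $0$ uniformly in $\rho$; this mirrors the positivity of ${\bf p}^*{\bf C}_N{\bf Q}_N^2(\rho){\bf p}$ guaranteed by $\liminf\frac{1}{N}{\bf p}^*{\bf C}_N{\bf p}>0$ and $\limsup\|{\bf C}_N\|<\infty$ in Assumption~\ref{ass:model_elliptical}.

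Next I would substitute $\hat{\bf S}_N(\rho)=\rho\underline{\rho}^{-1}\widehat{\bf R}_N(\underline{\rho})$, which turns the inverse bilinear forms into $\rho^{-1}\underline{\rho}\,{\bf p}^*\widehat{\bf R}_N^{-1}(\underline{\rho}){\bf p}$ and $\rho^{-2}\underline{\rho}^2\,{\bf p}^*\widehat{\bf R}_N^{-2}(\underline{\rho}){\bf p}$, and the trace into $\frac{1}{N}\tr\hat{\bf S}_N(\rho)=\rho+\rho(1-\underline{\rho})\underline{\rho}^{-1}\frac{1}{N}\tr\frac{1}{n}\sum_i{\bf z}_i{\bf z}_i^*$. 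The delicate point is to verify the algebraic identity that makes the factor $\frac{1}{N}\tr\hat{\bf C}_N(\rho)-\rho$ together with the prefactor $(1-c_N+c_N\rho)^2$ of $\hat{f}_{\rm RTE}$ collapse onto the factor $(1-c+\frac{c}{N}\underline{\rho}\tr\widehat{\bf R}_N^{-1}(\underline{\rho}))^2(1-\underline{\rho})$ of $\hat{f}_{\rm SCM}(\underline{\rho})$; this uses the defining fixed-point equation of $\gamma_N(\rho)$, the definition $\underline{\rho}=\rho(\rho+\gamma_N^{-1}(1-\rho)(1-(1-\rho)c)^{-1})^{-1}$, and the standard RSCM identity relating $\frac{1}{N}\tr\widehat{\bf R}_N^{-1}(\underline{\rho})$ to $\frac{1}{N}\tr\frac{1}{n}\sum_i{\bf z}_i{\bf z}_i^*$. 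Once this identity is in place, the $\hat{\bf S}_N$-version of $\hat{f}_{\rm RTE}(\rho)$ is exactly $\hat{f}_{\rm SCM}(\underline{\rho})$ for the speckle SCM.

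Finally, the deterministic equivalents underlying Proposition~\ref{prop:frho} are valid for the unitarily invariant speckle ${\bf z}_i={\bf C}_N^{1/2}{\bf w}_i$ of Assumption~\ref{ass:model_elliptical}, not only the Gaussian model, so applying that proposition with $\rho$ replaced by the deterministic map $\underline{\rho}$ yields $\sup_\rho|\hat{f}_{\rm SCM}(\underline{\rho})-f_{\rm SCM}(\underline{\rho})|\asto 0$; here one needs that the image of $\mathcal{R}_\kappa^{\rm RTE}$ under $\rho\mapsto\underline{\rho}$ lies in a set $[\kappa',1]$ bounded away from $0$, a mapping property established in \cite{couillet-kammoun-14} (the limit at $\rho=1$ being handled by continuity). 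It then remains to check $f_{\rm SCM}(\underline{\rho})=f_{\rm RTE}(\rho)$, which is immediate from $f_{\rm SCM}=\frac{1}{2a^2}g_{\rm SCM}^2$ and $f_{\rm RTE}=\frac{\tau}{2a^2}g_{\rm RTE}^2$ once one notes that $g_{\rm RTE}({\bf p})$ equals $g_{\rm SCM}({\bf p})$ at parameter $\underline{\rho}$ with $a$ rescaled to $a/\sqrt{\tau}$, so that $\tau$ and $a$ cancel. Chaining the three convergences gives the claim. I expect the main obstacle to be the trace-matching identity of the middle step: it is the only place where the implicit relations defining $\gamma_N$ and $\underline{\rho}$ enter nonlinearly, and it must be controlled uniformly over $\mathcal{R}_\kappa^{\rm RTE}$, whose left endpoint approaches the singular point $\max(0,1-c^{-1})$.
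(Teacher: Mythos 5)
Your proposal is correct and follows essentially the same route as the paper: the paper's proof likewise reduces $\hat{f}_{\rm RTE}(\rho)$ to $\hat{f}_{\rm SCM}(\underline{\rho})$ and invokes Proposition~\ref{prop:frho}, quoting from \cite{couillet-kammoun-14} the very convergences you rederive from Theorem~\ref{th:first_order}, Proposition~\ref{th:second_order} and the identity $\hat{\bf S}_N(\rho)=\rho\underline{\rho}^{-1}\widehat{\bf R}_N(\underline{\rho})$. The ``trace-matching'' step you flag as delicate is exactly the paper's quoted convergence $\left|\rho/\underline{\rho}-\frac{1}{N}\tr\hat{\bf C}_N(\rho)\right|\asto 0$ (equivalently, $\frac{1}{N}\tr\hat{\bf C}_N^{-1}(\rho)=1$, obtained by tracing $\hat{\bf C}_N^{-1}(\rho)$ against the RTE fixed-point equation), so it does go through.
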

\begin{proof}
The proof follows by first replacing $\widehat{\bf R}_N^{-1}(\rho)$ by $\widehat{\bf R}_N^{-1}(\underline{\rho})$ and $\rho$ by $\underline{\rho}$  in the results of Proposition \ref{prop:frho}  and using the convergences \cite{couillet-kammoun-14}:
$$
\sup_{\rho\in\mathcal{R}_{\kappa}^{RTE}}\left\|\frac{\hat{\bf C}_N(\rho)}{\frac{1}{N}\tr \hat{\bf C}_N(\rho)} -\widehat{\bf R}_N(\underline{\rho})\right\|\asto 0.
$$
$$
\sup_{\rho\in\mathcal{R}_{\kappa}^{RTE}}\left\|\underline{\rho}{\hat{\bf C}_N(\rho)}-\rho\widehat{\bf R}_N(\underline{\rho})\right\|\asto 0.
$$
and
$$
\left|\frac{\rho}{\underline{\rho}}-\frac{1}{N}\tr\hat{\bf C}_N(\rho)\right|\asto 0.
$$
\end{proof}
Since the results in Proposition \ref{prop:f_rho_robust} and Theorem \ref{th:detection_1} are uniform in $\rho$, we have the following corollary:
\begin{corollary}
 Let $\hat{f}_{\rm RTE}(\rho)$ be defined as in Proposition \ref{prop:frho}. Define $\hat{\rho}_N^*$ as any value satisfying:
$$
\hat{\rho}_N^*\in\arg\max_{\rho\in\mathcal{R}_{\kappa}^{\rm RTE}}\left\{\hat{f}_{\rm RTE}(\rho)\right\}.
$$
Then, for every $r>0$,
\begin{align*}
&\mathbb{P}\left(\sqrt{N}T_N(\hat{\rho}_N^*) >r |H_1\right) \\
&-\max_{\rho\in\mathcal{R}_{\kappa}^{\rm RTE}}\left\{\mathbb{P}\left(\sqrt{N}T_N({\rho})>r\right)|H_1\right\}\asto0.
\end{align*}
\end{corollary}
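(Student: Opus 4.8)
The plan is to reproduce, in the non-Gaussian setting, the transfer argument already used for the Gaussian counterpart, Corollary~\ref{cor:optimal} (itself patterned on Corollary~1 of \cite{couillet-kammoun-14}): the random optimization of $\hat f_{\rm RTE}$ over $\rho$ is carried onto the deterministic detection curve by means of the two uniform-in-$\rho$ convergences that are already available. Writing $P_N(\rho)\triangleq\mathbb{P}(\sqrt{N}\widehat{T}_N^{\rm RTE}(\rho)>r\,|\,H_1)$ for the true detection probability and $\bar P_N(\rho)\triangleq\mathbb{E}\left[Q_1\left(g_{\rm RTE}({\bf p}),r/\sigma_{N,{\rm RTE}}(\rho)\right)\right]$ for its deterministic equivalent, Theorem~\ref{th:detection_1} gives $\sup_{\rho\in\mathcal{R}_\kappa^{\rm RTE}}|P_N(\rho)-\bar P_N(\rho)|\asto0$, while Proposition~\ref{prop:f_rho_robust} gives $\sup_{\rho\in\mathcal{R}_\kappa^{\rm RTE}}|\hat f_{\rm RTE}(\rho)-f_{\rm RTE}(\rho)|\asto0$. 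The uniformity of both statements is essential here, because $\hat\rho_N^*$ is a random point of the interval that depends on the secondary data and therefore cannot be handled by pointwise convergence alone.

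The first step is to record the structural fact that ties $\hat f_{\rm RTE}$ to the detection curve. Since $f_{\rm RTE}(\rho)=\frac{\tau}{2a^2}g_{\rm RTE}^2({\bf p})$ is in fact free of both $\tau$ and $a$, one has $g_{\rm RTE}({\bf p})=a\sqrt{2f_{\rm RTE}(\rho)/\tau}$, so for each fixed texture value the first Marcum argument is strictly increasing in $f_{\rm RTE}(\rho)$. As $Q_1(\cdot,b)$ is increasing in its first argument, the integrand is increasing in $f_{\rm RTE}(\rho)$ for every $\tau$, provided the comparison is made at matched false-alarm level so that the second argument $b=r/\sigma_{N,{\rm RTE}}(\rho)$ is pinned (this is exactly the role of the threshold $\hat r=\hat\sigma_{N,{\rm RTE}}(\hat\rho_N^*)\sqrt{-2\log\eta}$ produced by Theorem~\ref{th:false_alarm_1}); integrating over $\tau$ preserves this monotonicity. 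Hence $\bar P_N=\Phi_N(f_{\rm RTE})$ for a continuous increasing map $\Phi_N$, so that $\bar P_N$ and $f_{\rm RTE}$ share their maximizers over $\mathcal{R}_\kappa^{\rm RTE}$.

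The conclusion then follows from a routine argmax-value argument. From $\sup_\rho|\hat f_{\rm RTE}-f_{\rm RTE}|\asto0$ one gets $f_{\rm RTE}(\hat\rho_N^*)=\max_\rho\hat f_{\rm RTE}(\rho)+o(1)=\max_\rho f_{\rm RTE}(\rho)+o(1)$ almost surely, whence $\bar P_N(\hat\rho_N^*)-\max_\rho\bar P_N(\rho)\asto0$ by continuity of $\Phi_N$. Combining this with Theorem~\ref{th:detection_1}, applied both at the random point $\hat\rho_N^*$ and to the two maxima (uniform convergence transfers to suprema), gives
\begin{align*}
P_N(\hat\rho_N^*)-\max_{\rho}P_N(\rho)
&=\bigl(P_N(\hat\rho_N^*)-\bar P_N(\hat\rho_N^*)\bigr)
+\bigl(\bar P_N(\hat\rho_N^*)-\max_\rho\bar P_N(\rho)\bigr)\\
&\quad+\bigl(\max_\rho\bar P_N(\rho)-\max_\rho P_N(\rho)\bigr)\asto0,
\end{align*}
which is the desired statement.

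The step I expect to be the main obstacle is the second paragraph, that is, making the reduction $\bar P_N=\Phi_N(f_{\rm RTE})$ rigorous in the presence of the texture average; this is the genuinely new ingredient relative to the Gaussian Corollary~\ref{cor:optimal}, where the detection curve is already a deterministic monotone function of $f_{\rm SCM}$. One must verify that the monotonicity in $f_{\rm RTE}$ indeed survives the expectation over $\tau$ and that the false-alarm normalization fixes the second Marcum argument uniformly in $\rho$ (via Proposition~\ref{prop:f_rho_robust} together with the consistent estimate of $\sigma_{N,{\rm RTE}}^2$), after which the dominated-convergence bound already used in the proof of Theorem~\ref{th:detection_1} controls the interchange of limit and expectation. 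The remaining argmax-continuity bookkeeping is routine.
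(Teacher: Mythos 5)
Your overall architecture---the two uniform-in-$\rho$ convergences (Proposition~\ref{prop:f_rho_robust} and Theorem~\ref{th:detection_1}), an argmax--value transfer, and dominated convergence to handle the texture average---is exactly what the paper intends: its ``proof'' of this corollary is nothing more than the remark that those two results are uniform in $\rho$, together with the pointer (in the Gaussian analogue, Corollary~\ref{cor:optimal}) to Corollary~1 of \cite{couillet-kammoun-14}. Your first and third paragraphs are correct and match that route.

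The gap is the one you flag yourself, and your parenthetical does not close it. The corollary compares detection probabilities at a \emph{single fixed threshold} $r$, so in $\max_{\rho}\bar P_N(\rho)$ the second Marcum argument $r/\sigma_{N,{\rm RTE}}(\rho)$ genuinely varies over $\mathcal{R}_\kappa^{\rm RTE}$; the calibration $\hat r=\hat\sigma_{N,{\rm RTE}}(\hat\rho_N^*)\sqrt{-2\log\eta}$ produces one number, not a $\rho$-indexed family, and therefore cannot ``pin'' $b$ across the feasible set. Concretely, the expressions in Theorems~\ref{th:false_alarm_1} and~\ref{th:detection_1} give the identity $2\,f_{\rm RTE}(\rho)\,\sigma_{N,{\rm RTE}}^2(\rho)={\bf p}^*{\bf Q}_N(\underline{\rho}){\bf p}\,/\tr\!\left({\bf C}_N{\bf Q}_N(\underline{\rho})\right)$, so $\sigma_{N,{\rm RTE}}^2$ is not a function of $f_{\rm RTE}$ alone and, at fixed $r$, both Marcum arguments tend to grow with $f_{\rm RTE}$ while $Q_1$ is increasing in the first and \emph{decreasing} in the second. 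Hence $\bar P_N=\Phi_N(f_{\rm RTE})$ with $\Phi_N$ increasing is not established, the claim that $\bar P_N$ and $f_{\rm RTE}$ share their maximizers is unsubstantiated, and the middle term of your three-term decomposition is not shown to vanish. To be fair, the paper's own heuristic (``the second argument of $Q_1$ should be kept fixed\dots'') justifies maximizing $f_{\rm RTE}$ only for the equal-false-alarm-rate comparison, not for the equal-threshold comparison written in the corollary, so the difficulty is inherited from the statement; but as a proof of the corollary as written, your argument does not go through at this step, and either the monotone link must be replaced by a genuine argument valid for every fixed $r$, or the statement must be recast at matched false-alarm level (threshold $r(\rho)=\sigma_{N,{\rm RTE}}(\rho)\sqrt{-2\log\eta}$), where your second paragraph does apply.
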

Using the same reasoning as the one followed in the Gaussian clutter case, we propose the following design strategy:
\begin{itemize}
\item First, set the regularization parameter to one of the values maximizing $\hat{f}_{RTE}(\rho)$:
$$
\hat{\rho}_N^*\in\arg\max_{\rho\in\mathcal{R}_\kappa^{RTE}}\left\{\hat{f}_{RTE}(\rho)\right\};
$$
\item Second, set the threshold to $\hat{r}$
$$
\hat{r}=\hat{\sigma}_{N,RTE}(\hat{\rho}_N^*)\sqrt{-2\log\eta}
$$
where $\eta$ is the required false alarm probability.
\end{itemize}
\section{Numerical results}
\subsection{Gaussian clutter}
In a first experiment, we consider the scenario where the clutter is Gaussian with covariance matrix ${\bf C}_N$ of Toeplitz form:
\begin{equation}
\left[{\bf C}_N\right]_{i,j}=\left\{\begin{array}{ll}b^{j-i} &\hspace{0.1cm} i\leq j\\
\left(b^{i-j}\right)^*  &\hspace{0.1cm}i>j
\end{array}
, \hspace{0.9cm}|b|\in\left]0,1\right[,
\right.
\label{eq:CN}
\end{equation}
where we set $b=0.96\jmath$ $N=30$ and $n=60$. The steering vector ${\bf p}$  is given by \begin{equation}{\bf p}={\bf a}(\theta)
\label{eq:steering_vector}
\end{equation} where $\theta\mapsto \left[{\bf a}(\theta)\right]_k=e^{-\jmath \pi k\sin(\theta)}$. In this experiment, $\theta$ is set to $20^{o}$. 
For each Monte Carlo  trial, the simulated data consists of  ${\bf y}=\alpha {\bf p}+{\bf x}$  and the secondary data ${\bf y}_1,\cdots,{\bf y}_n$ which are used to estimate $\hat{\rho}_N^*$ and to compute $\hat{\bf R}_N(\hat{\rho}_N^*)$.  In particular, the shrinkage parameter and the threshold value  are determined using \eqref{eq:rho_N} and \eqref{eq:threshold}. We have observed from the considered numerical results that $\hat{f}_{\rm SCM}(\rho)$ is unimodal and thus the maximum can be obtained using efficient line search methods. For comparison, we consider two other designs: the first one is based on the  regularization parameter derived  in the work of Chen et al. \cite[Equation (19)]{chen-10} (we denote by $\hat{\rho}_{\rm chen}$ the corresponding coefficient) while the second one corresponds to the non-regularized ANMF ($\hat{\rho}=0$). . In order to satisfy the required false alarm probability, we assume for the first design that the threshold $\hat{r}_{\rm chen}$ is given by: $\hat{r}_{\rm chen}=\hat{\sigma}_{N,SCM}(\hat{\rho}_{\rm chen})\sqrt{-2\log\eta}$. For the non-regularized ANMF to satisfy the false alarm probability, the threshold value is set based on Equation 11 in \cite{Pascal04-1}.
Figure \ref{fig:gaussian_reduced} represents the ROC curves of both designs for different values of $a={\alpha}{\sqrt{N}}$, namely $a=0.1,0.25,0.5$, along with that of the theoretical performances of our design.
 We note that for all SNR ranges, the proposed algorithm outperforms the design based on the regularization parameter $\hat{\rho}_{\rm chen}$ and the gain becomes higher as $a$ increases. It also outperforms the non-regularized ANMF detector.  
Moreover,  the performances of the proposed design correspond with a good accuracy to what is expected by our theoretical results.

In order to highlight the gain of the proposed design over the most interesting range of low false alarm probabilities, we represent in Fig. \ref{fig:gaussian_low} the obtained ROC curves when $a=0.8$ and the false alarm probability spanning the interval $[0.001,0.05]$. 

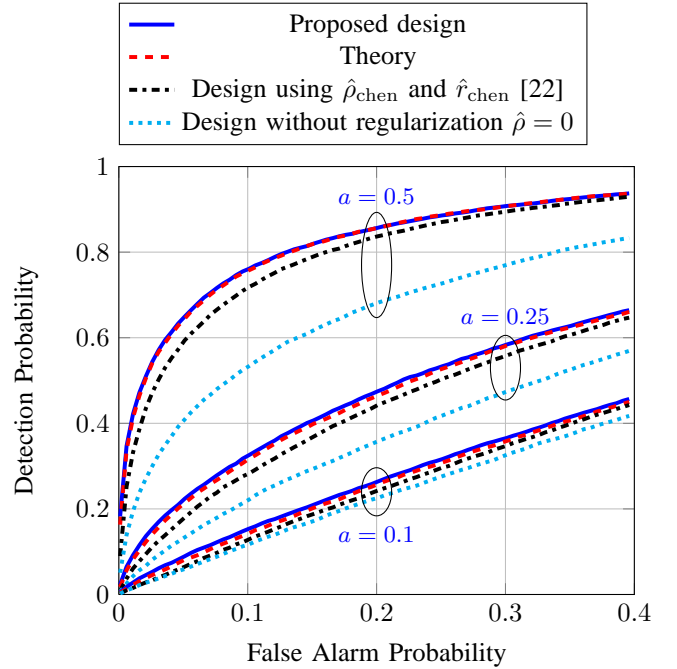
\begin{figure}
 \begin{center}
   \begin{tikzpicture}[scale=1,font=\normalsize]
\tikzset{dashdot/.style={dash pattern=on 2pt off 2pt on 6pt off 2pt}}
\tikzstyle{dotted}=[dash pattern=on \pgflinewidth off 2pt]
     \begin{axis}[
      xmin=0,
      ymin=0,
      xmax=0.4,
      ymax=1,
      grid=major,
      legend style={ at={(0,0)},
      anchor= south west,
      at={(axis cs:0,1.05)}},
      xlabel={False Alarm Probability},
      ylabel={Detection Probability},
      ]
\addplot[color=blue,line width=1.5pt,mark size=1.5pt] coordinates
{
(0.001000,0.182300)(0.006000,0.347100)(0.011000,0.421700)(0.016000,0.472300)(0.021000,0.510000)(0.026000,0.544000)(0.031000,0.571900)(0.036000,0.595300)(0.041000,0.613600)(0.046000,0.633100)(0.051000,0.649100)(0.056000,0.664400)(0.061000,0.679300)(0.066000,0.693700)(0.071000,0.704600)(0.076000,0.716200)(0.081000,0.726800)(0.086000,0.735600)(0.091000,0.746300)(0.096000,0.754800)(0.101000,0.761100)(0.106000,0.769100)(0.111000,0.774800)(0.116000,0.781500)(0.121000,0.788800)(0.126000,0.795500)(0.131000,0.800900)(0.136000,0.806600)(0.141000,0.810500)(0.146000,0.814200)(0.151000,0.819800)(0.156000,0.824500)(0.161000,0.829300)(0.166000,0.832800)(0.171000,0.837000)(0.176000,0.840400)(0.181000,0.842800)(0.186000,0.846800)(0.191000,0.850000)(0.196000,0.853400)(0.201000,0.857000)(0.206000,0.860900)(0.211000,0.863500)(0.216000,0.866900)(0.221000,0.869500)(0.226000,0.872400)(0.231000,0.875300)(0.236000,0.878100)(0.241000,0.881200)(0.246000,0.883300)(0.251000,0.885900)(0.256000,0.887600)(0.261000,0.890400)(0.266000,0.892200)(0.271000,0.894400)(0.276000,0.897300)(0.281000,0.899900)(0.286000,0.902500)(0.291000,0.904100)(0.296000,0.905900)(0.301000,0.907900)(0.306000,0.909300)(0.311000,0.910700)(0.316000,0.912300)(0.321000,0.913500)(0.326000,0.915400)(0.331000,0.917300)(0.336000,0.919800)(0.341000,0.921700)(0.346000,0.923200)(0.351000,0.924500)(0.356000,0.926300)(0.361000,0.928000)(0.366000,0.929300)(0.371000,0.930800)(0.376000,0.932000)(0.381000,0.933300)(0.386000,0.935000)(0.391000,0.935900)(0.396000,0.937300)	
};
\addlegendentry{Proposed design};
\addplot[color=red,line width=1.5pt,mark size=1.5pt,dashed] coordinates
{
(0.001000,0.163200)(0.006000,0.337300)(0.011000,0.413000)(0.016000,0.464900)(0.021000,0.505400)(0.026000,0.541000)(0.031000,0.567600)(0.036000,0.588400)(0.041000,0.609300)(0.046000,0.628100)(0.051000,0.644100)(0.056000,0.661100)(0.061000,0.674500)(0.066000,0.687400)(0.071000,0.700700)(0.076000,0.712200)(0.081000,0.722100)(0.086000,0.732500)(0.091000,0.743000)(0.096000,0.750300)(0.101000,0.757600)(0.106000,0.764600)(0.111000,0.773100)(0.116000,0.779800)(0.121000,0.786200)(0.126000,0.791800)(0.131000,0.798000)(0.136000,0.803500)(0.141000,0.808900)(0.146000,0.814500)(0.151000,0.819300)(0.156000,0.822600)(0.161000,0.827100)(0.166000,0.832200)(0.171000,0.836000)(0.176000,0.838800)(0.181000,0.843000)(0.186000,0.847600)(0.191000,0.850200)(0.196000,0.853400)(0.201000,0.856800)(0.206000,0.860100)(0.211000,0.863900)(0.216000,0.867600)(0.221000,0.870800)(0.226000,0.874300)(0.231000,0.877400)(0.236000,0.879500)(0.241000,0.882800)(0.246000,0.884900)(0.251000,0.886800)(0.256000,0.889800)(0.261000,0.891500)(0.266000,0.894000)(0.271000,0.896200)(0.276000,0.898200)(0.281000,0.899300)(0.286000,0.900900)(0.291000,0.903100)(0.296000,0.905400)(0.301000,0.907400)(0.306000,0.909300)(0.311000,0.911500)(0.316000,0.913200)(0.321000,0.914800)(0.326000,0.916300)(0.331000,0.918400)(0.336000,0.920300)(0.341000,0.921800)(0.346000,0.923500)(0.351000,0.925400)(0.356000,0.927100)(0.361000,0.928500)(0.366000,0.930200)(0.371000,0.931500)(0.376000,0.932400)(0.381000,0.933300)(0.386000,0.934900)(0.391000,0.935500)(0.396000,0.936700)	
};
\addlegendentry{Theory};
\addplot[color=black,line width=1.5pt,dashdotted] coordinates
{
(0.001000,0.089700)(0.006000,0.252400)(0.011000,0.331900)(0.016000,0.386700)(0.021000,0.435200)(0.026000,0.472100)(0.031000,0.502900)(0.036000,0.530500)(0.041000,0.552900)(0.046000,0.572400)(0.051000,0.593500)(0.056000,0.609500)(0.061000,0.623700)(0.066000,0.638100)(0.071000,0.653400)(0.076000,0.665300)(0.081000,0.676900)(0.086000,0.687800)(0.091000,0.698300)(0.096000,0.709400)(0.101000,0.718600)(0.106000,0.726100)(0.111000,0.735700)(0.116000,0.744000)(0.121000,0.751700)(0.126000,0.760300)(0.131000,0.765700)(0.136000,0.771500)(0.141000,0.777900)(0.146000,0.783800)(0.151000,0.789700)(0.156000,0.795500)(0.161000,0.800300)(0.166000,0.805000)(0.171000,0.808300)(0.176000,0.814300)(0.181000,0.819300)(0.186000,0.823800)(0.191000,0.828500)(0.196000,0.833300)(0.201000,0.837100)(0.206000,0.840400)(0.211000,0.844200)(0.216000,0.847200)(0.221000,0.850100)(0.226000,0.854100)(0.231000,0.857500)(0.236000,0.861000)(0.241000,0.864300)(0.246000,0.867400)(0.251000,0.870600)(0.256000,0.872600)(0.261000,0.875500)(0.266000,0.878500)(0.271000,0.881100)(0.276000,0.884400)(0.281000,0.886600)(0.286000,0.888500)(0.291000,0.890700)(0.296000,0.892800)(0.301000,0.894900)(0.306000,0.897400)(0.311000,0.899300)(0.316000,0.900900)(0.321000,0.903300)(0.326000,0.904900)(0.331000,0.906400)(0.336000,0.908400)(0.341000,0.909900)(0.346000,0.912800)(0.351000,0.914300)(0.356000,0.916300)(0.361000,0.918100)(0.366000,0.919800)(0.371000,0.921100)(0.376000,0.922700)(0.381000,0.924300)(0.386000,0.926300)(0.391000,0.927900)(0.396000,0.929400)	
};
\addlegendentry{Design using $\hat{\rho}_{\rm chen}$ and $\hat{r}_{\rm chen}$ \cite{chen-10}};
\addplot[color=cyan,line width=1.5pt,dotted] coordinates
{
(0.001000,0.050800)(0.006000,0.139000)(0.011000,0.194600)(0.016000,0.238100)(0.021000,0.271200)(0.026000,0.300200)(0.031000,0.327600)(0.036000,0.349500)(0.041000,0.370300)(0.046000,0.391900)(0.051000,0.408700)(0.056000,0.425600)(0.061000,0.440400)(0.066000,0.455100)(0.071000,0.466500)(0.076000,0.479800)(0.081000,0.491100)(0.086000,0.501400)(0.091000,0.513900)(0.096000,0.524500)(0.101000,0.533600)(0.106000,0.543800)(0.111000,0.553500)(0.116000,0.562600)(0.121000,0.571600)(0.126000,0.581300)(0.131000,0.590500)(0.136000,0.596600)(0.141000,0.603300)(0.146000,0.609100)(0.151000,0.617100)(0.156000,0.624700)(0.161000,0.631300)(0.166000,0.638600)(0.171000,0.643700)(0.176000,0.650700)(0.181000,0.658700)(0.186000,0.664200)(0.191000,0.670500)(0.196000,0.676200)(0.201000,0.681900)(0.206000,0.686700)(0.211000,0.691700)(0.216000,0.695600)(0.221000,0.700500)(0.226000,0.706100)(0.231000,0.710200)(0.236000,0.714800)(0.241000,0.719100)(0.246000,0.723100)(0.251000,0.727100)(0.256000,0.731700)(0.261000,0.736700)(0.266000,0.740900)(0.271000,0.745500)(0.276000,0.750100)(0.281000,0.753800)(0.286000,0.758300)(0.291000,0.762300)(0.296000,0.766100)(0.301000,0.770300)(0.306000,0.774700)(0.311000,0.778600)(0.316000,0.782800)(0.321000,0.786000)(0.326000,0.789300)(0.331000,0.793700)(0.336000,0.797500)(0.341000,0.802000)(0.346000,0.805800)(0.351000,0.808500)(0.356000,0.811900)(0.361000,0.814200)(0.366000,0.817100)(0.371000,0.819800)(0.376000,0.822900)(0.381000,0.825300)(0.386000,0.828000)(0.391000,0.830800)(0.396000,0.833200)	
};
\draw (axis cs:0.2,0.77) ellipse [x radius=0.2cm,y radius=0.7cm];
\node[blue,below] at (axis cs:0.2,0.97){{\small $a=0.5$}};
\addplot[color=blue,line width=1.5pt,mark size=1.5pt] coordinates
{
(0.001000,0.023400)(0.006000,0.064800)(0.011000,0.092000)(0.016000,0.115300)(0.021000,0.136400)(0.026000,0.153700)(0.031000,0.169900)(0.036000,0.184500)(0.041000,0.197400)(0.046000,0.209500)(0.051000,0.223500)(0.056000,0.235500)(0.061000,0.246900)(0.066000,0.256500)(0.071000,0.267500)(0.076000,0.278300)(0.081000,0.288300)(0.086000,0.298300)(0.091000,0.305800)(0.096000,0.317400)(0.101000,0.325700)(0.106000,0.334500)(0.111000,0.343200)(0.116000,0.351800)(0.121000,0.359900)(0.126000,0.369700)(0.131000,0.377100)(0.136000,0.385100)(0.141000,0.392700)(0.146000,0.400800)(0.151000,0.406700)(0.156000,0.413800)(0.161000,0.421200)(0.166000,0.429300)(0.171000,0.435500)(0.176000,0.442200)(0.181000,0.448200)(0.186000,0.455800)(0.191000,0.462600)(0.196000,0.468400)(0.201000,0.475000)(0.206000,0.480800)(0.211000,0.488200)(0.216000,0.494800)(0.221000,0.502500)(0.226000,0.508200)(0.231000,0.511700)(0.236000,0.517100)(0.241000,0.523600)(0.246000,0.528100)(0.251000,0.533400)(0.256000,0.538000)(0.261000,0.544200)(0.266000,0.550500)(0.271000,0.554200)(0.276000,0.559600)(0.281000,0.564200)(0.286000,0.569200)(0.291000,0.575200)(0.296000,0.578900)(0.301000,0.584100)(0.306000,0.588600)(0.311000,0.594200)(0.316000,0.599500)(0.321000,0.604500)(0.326000,0.608400)(0.331000,0.612200)(0.336000,0.616800)(0.341000,0.620600)(0.346000,0.624200)(0.351000,0.627900)(0.356000,0.632700)(0.361000,0.636900)(0.366000,0.641100)(0.371000,0.644900)(0.376000,0.649100)(0.381000,0.653300)(0.386000,0.657100)(0.391000,0.661300)(0.396000,0.664400)	
};
\addlegendentry{Design without regularization $\hat{\rho}=0$}
\addplot[color=red,line width=1.5pt,mark size=1.5pt,dashed] coordinates
{
	(0.001000,0.017500)(0.006000,0.056600)(0.011000,0.082300)(0.016000,0.103100)(0.021000,0.122900)(0.026000,0.141100)(0.031000,0.157200)(0.036000,0.173300)(0.041000,0.186700)(0.046000,0.201500)(0.051000,0.213700)(0.056000,0.225500)(0.061000,0.237900)(0.066000,0.247600)(0.071000,0.258100)(0.076000,0.267300)(0.081000,0.277100)(0.086000,0.287600)(0.091000,0.297400)(0.096000,0.306800)(0.101000,0.317200)(0.106000,0.325400)(0.111000,0.334800)(0.116000,0.343300)(0.121000,0.352300)(0.126000,0.361200)(0.131000,0.369300)(0.136000,0.376500)(0.141000,0.383900)(0.146000,0.390500)(0.151000,0.398700)(0.156000,0.404700)(0.161000,0.410800)(0.166000,0.417600)(0.171000,0.424000)(0.176000,0.431900)(0.181000,0.438300)(0.186000,0.445100)(0.191000,0.452800)(0.196000,0.458500)(0.201000,0.465100)(0.206000,0.470800)(0.211000,0.477200)(0.216000,0.484500)(0.221000,0.491300)(0.226000,0.498000)(0.231000,0.503500)(0.236000,0.508900)(0.241000,0.515600)(0.246000,0.522200)(0.251000,0.527500)(0.256000,0.532500)(0.261000,0.537800)(0.266000,0.543100)(0.271000,0.548800)(0.276000,0.553900)(0.281000,0.559600)(0.286000,0.565600)(0.291000,0.571500)(0.296000,0.576100)(0.301000,0.581700)(0.306000,0.586000)(0.311000,0.591200)(0.316000,0.595800)(0.321000,0.600200)(0.326000,0.604700)(0.331000,0.609300)(0.336000,0.613400)(0.341000,0.617400)(0.346000,0.621100)(0.351000,0.624600)(0.356000,0.628600)(0.361000,0.632400)(0.366000,0.636700)(0.371000,0.640800)(0.376000,0.644900)(0.381000,0.648600)(0.386000,0.652600)(0.391000,0.655800)(0.396000,0.660600)
};
\addplot[color=black,line width=1.5pt,mark size=1.5pt,dashdotted] coordinates
{
(0.001000,0.008300)(0.006000,0.037200)(0.011000,0.060800)(0.016000,0.078100)(0.021000,0.094600)(0.026000,0.111400)(0.031000,0.125900)(0.036000,0.138800)(0.041000,0.152500)(0.046000,0.166100)(0.051000,0.179600)(0.056000,0.192400)(0.061000,0.205600)(0.066000,0.216700)(0.071000,0.229200)(0.076000,0.237800)(0.081000,0.250700)(0.086000,0.258200)(0.091000,0.267000)(0.096000,0.276000)(0.101000,0.282700)(0.106000,0.291700)(0.111000,0.301100)(0.116000,0.309900)(0.121000,0.317900)(0.126000,0.327800)(0.131000,0.335700)(0.136000,0.344200)(0.141000,0.351800)(0.146000,0.360400)(0.151000,0.368700)(0.156000,0.376900)(0.161000,0.385100)(0.166000,0.393500)(0.171000,0.399700)(0.176000,0.406500)(0.181000,0.414000)(0.186000,0.421000)(0.191000,0.427900)(0.196000,0.434800)(0.201000,0.442900)(0.206000,0.448900)(0.211000,0.454900)(0.216000,0.460700)(0.221000,0.466900)(0.226000,0.473500)(0.231000,0.479000)(0.236000,0.485800)(0.241000,0.491700)(0.246000,0.497100)(0.251000,0.504000)(0.256000,0.509300)(0.261000,0.513800)(0.266000,0.519800)(0.271000,0.526000)(0.276000,0.531600)(0.281000,0.537200)(0.286000,0.542300)(0.291000,0.548400)(0.296000,0.553200)(0.301000,0.558500)(0.306000,0.564000)(0.311000,0.569800)(0.316000,0.573900)(0.321000,0.578500)(0.326000,0.583200)(0.331000,0.587300)(0.336000,0.591700)(0.341000,0.598100)(0.346000,0.603200)(0.351000,0.607100)(0.356000,0.611500)(0.361000,0.616500)(0.366000,0.621400)(0.371000,0.625400)(0.376000,0.629900)(0.381000,0.635500)(0.386000,0.639500)(0.391000,0.643700)(0.396000,0.646900)	
};
\addplot[color=cyan,line width=1.5pt,mark size=1.5pt,dotted] coordinates
{
(0.001000,0.006800)(0.006000,0.025800)(0.011000,0.042800)(0.016000,0.056800)(0.021000,0.067100)(0.026000,0.081400)(0.031000,0.092400)(0.036000,0.103200)(0.041000,0.114500)(0.046000,0.123500)(0.051000,0.133400)(0.056000,0.142900)(0.061000,0.152600)(0.066000,0.162600)(0.071000,0.170700)(0.076000,0.179200)(0.081000,0.188300)(0.086000,0.195400)(0.091000,0.204400)(0.096000,0.213200)(0.101000,0.221700)(0.106000,0.231900)(0.111000,0.240700)(0.116000,0.248700)(0.121000,0.254700)(0.126000,0.262200)(0.131000,0.269000)(0.136000,0.274900)(0.141000,0.282100)(0.146000,0.288400)(0.151000,0.294100)(0.156000,0.300000)(0.161000,0.306700)(0.166000,0.312100)(0.171000,0.319100)(0.176000,0.325100)(0.181000,0.332100)(0.186000,0.339300)(0.191000,0.345400)(0.196000,0.351500)(0.201000,0.358100)(0.206000,0.364100)(0.211000,0.370000)(0.216000,0.376500)(0.221000,0.380600)(0.226000,0.386400)(0.231000,0.391800)(0.236000,0.398300)(0.241000,0.406100)(0.246000,0.412800)(0.251000,0.418100)(0.256000,0.424400)(0.261000,0.429200)(0.266000,0.436100)(0.271000,0.439800)(0.276000,0.445900)(0.281000,0.452000)(0.286000,0.458500)(0.291000,0.462900)(0.296000,0.468100)(0.301000,0.474400)(0.306000,0.479300)(0.311000,0.484200)(0.316000,0.489200)(0.321000,0.494200)(0.326000,0.499000)(0.331000,0.503500)(0.336000,0.509900)(0.341000,0.516100)(0.346000,0.521300)(0.351000,0.525800)(0.356000,0.531700)(0.361000,0.536700)(0.366000,0.540800)(0.371000,0.545700)(0.376000,0.550500)(0.381000,0.555200)(0.386000,0.560400)(0.391000,0.564400)(0.396000,0.569300)	
};
\draw (axis cs:0.3,0.53) ellipse [x radius=0.2cm,y radius=0.43cm];
\node[blue,below] at (axis cs:0.3,0.69){{\small $a=0.25$}};
\addplot[color=blue,line width=1.5pt,mark size=1.5pt] coordinates
{
(0.001000,0.005500)(0.006000,0.017600)(0.011000,0.027900)(0.016000,0.037800)(0.021000,0.045400)(0.026000,0.053000)(0.031000,0.061400)(0.036000,0.068500)(0.041000,0.074100)(0.046000,0.081600)(0.051000,0.089200)(0.056000,0.097200)(0.061000,0.104100)(0.066000,0.109700)(0.071000,0.115300)(0.076000,0.122000)(0.081000,0.128600)(0.086000,0.134100)(0.091000,0.140900)(0.096000,0.147600)(0.101000,0.153800)(0.106000,0.159400)(0.111000,0.166200)(0.116000,0.172000)(0.121000,0.176300)(0.126000,0.182000)(0.131000,0.187700)(0.136000,0.193200)(0.141000,0.198600)(0.146000,0.203700)(0.151000,0.209500)(0.156000,0.215400)(0.161000,0.221200)(0.166000,0.226900)(0.171000,0.232600)(0.176000,0.237200)(0.181000,0.242200)(0.186000,0.248400)(0.191000,0.252900)(0.196000,0.258000)(0.201000,0.264000)(0.206000,0.270000)(0.211000,0.275900)(0.216000,0.280900)(0.221000,0.285700)(0.226000,0.290600)(0.231000,0.296400)(0.236000,0.302500)(0.241000,0.307600)(0.246000,0.311600)(0.251000,0.316400)(0.256000,0.321900)(0.261000,0.327900)(0.266000,0.332300)(0.271000,0.338100)(0.276000,0.343000)(0.281000,0.348700)(0.286000,0.353500)(0.291000,0.357300)(0.296000,0.361700)(0.301000,0.365800)(0.306000,0.371000)(0.311000,0.375200)(0.316000,0.380000)(0.321000,0.385400)(0.326000,0.389900)(0.331000,0.394900)(0.336000,0.400700)(0.341000,0.405200)(0.346000,0.410900)(0.351000,0.415500)(0.356000,0.420000)(0.361000,0.424800)(0.366000,0.429000)(0.371000,0.433100)(0.376000,0.438400)(0.381000,0.443000)(0.386000,0.446800)(0.391000,0.452600)(0.396000,0.456900)	
};
\addplot[color=red,line width=1.5pt,mark size=1.5pt,dashed] coordinates
{
(0.001000,0.004400)(0.006000,0.013200)(0.011000,0.022900)(0.016000,0.031300)(0.021000,0.038600)(0.026000,0.045600)(0.031000,0.051900)(0.036000,0.058700)(0.041000,0.065900)(0.046000,0.073800)(0.051000,0.081200)(0.056000,0.086400)(0.061000,0.091800)(0.066000,0.098500)(0.071000,0.106000)(0.076000,0.112900)(0.081000,0.118800)(0.086000,0.124500)(0.091000,0.130900)(0.096000,0.136900)(0.101000,0.143200)(0.106000,0.150500)(0.111000,0.154700)(0.116000,0.161100)(0.121000,0.167400)(0.126000,0.172500)(0.131000,0.177400)(0.136000,0.183600)(0.141000,0.190400)(0.146000,0.196000)(0.151000,0.202300)(0.156000,0.208200)(0.161000,0.213200)(0.166000,0.217700)(0.171000,0.222400)(0.176000,0.229100)(0.181000,0.235000)(0.186000,0.241200)(0.191000,0.247000)(0.196000,0.251600)(0.201000,0.257800)(0.206000,0.263800)(0.211000,0.269200)(0.216000,0.273200)(0.221000,0.279700)(0.226000,0.284100)(0.231000,0.289200)(0.236000,0.295000)(0.241000,0.298400)(0.246000,0.302800)(0.251000,0.308300)(0.256000,0.313400)(0.261000,0.319300)(0.266000,0.325100)(0.271000,0.330500)(0.276000,0.335300)(0.281000,0.340000)(0.286000,0.344100)(0.291000,0.348200)(0.296000,0.353700)(0.301000,0.358800)(0.306000,0.363400)(0.311000,0.369500)(0.316000,0.373200)(0.321000,0.379000)(0.326000,0.383800)(0.331000,0.389000)(0.336000,0.393900)(0.341000,0.399600)(0.346000,0.403400)(0.351000,0.407600)(0.356000,0.412000)(0.361000,0.417200)(0.366000,0.421900)(0.371000,0.427500)(0.376000,0.431400)(0.381000,0.436100)(0.386000,0.440700)(0.391000,0.445300)(0.396000,0.450100)	
};
\addplot[color=black,line width=1.5pt,mark size=1.5pt,dashdotted] coordinates
{
(0.001000,0.002100)(0.006000,0.009800)(0.011000,0.015400)(0.016000,0.022100)(0.021000,0.028300)(0.026000,0.033400)(0.031000,0.040000)(0.036000,0.045800)(0.041000,0.052500)(0.046000,0.058500)(0.051000,0.064800)(0.056000,0.071900)(0.061000,0.077800)(0.066000,0.085500)(0.071000,0.091100)(0.076000,0.097300)(0.081000,0.104200)(0.086000,0.109500)(0.091000,0.117200)(0.096000,0.121900)(0.101000,0.128900)(0.106000,0.135800)(0.111000,0.141100)(0.116000,0.146800)(0.121000,0.153200)(0.126000,0.159100)(0.131000,0.165700)(0.136000,0.171800)(0.141000,0.177000)(0.146000,0.183000)(0.151000,0.188000)(0.156000,0.193400)(0.161000,0.199000)(0.166000,0.203400)(0.171000,0.209700)(0.176000,0.215000)(0.181000,0.220200)(0.186000,0.224900)(0.191000,0.230500)(0.196000,0.237300)(0.201000,0.243000)(0.206000,0.248300)(0.211000,0.253000)(0.216000,0.258600)(0.221000,0.263600)(0.226000,0.269700)(0.231000,0.274300)(0.236000,0.280100)(0.241000,0.285100)(0.246000,0.290800)(0.251000,0.296500)(0.256000,0.300200)(0.261000,0.306500)(0.266000,0.312800)(0.271000,0.317800)(0.276000,0.322800)(0.281000,0.327100)(0.286000,0.331500)(0.291000,0.338200)(0.296000,0.342800)(0.301000,0.347700)(0.306000,0.354000)(0.311000,0.358500)(0.316000,0.363500)(0.321000,0.367800)(0.326000,0.372800)(0.331000,0.379000)(0.336000,0.383300)(0.341000,0.389200)(0.346000,0.394000)(0.351000,0.399400)(0.356000,0.404900)(0.361000,0.409500)(0.366000,0.415200)(0.371000,0.420000)(0.376000,0.424300)(0.381000,0.428900)(0.386000,0.433400)(0.391000,0.438000)(0.396000,0.442800)	
};
\addplot[color=cyan,line width=1.5pt,mark size=1.5pt,dotted] coordinates
{
(0.001000,0.002200)(0.006000,0.009400)(0.011000,0.015200)(0.016000,0.022000)(0.021000,0.027600)(0.026000,0.032300)(0.031000,0.037700)(0.036000,0.043000)(0.041000,0.048500)(0.046000,0.054600)(0.051000,0.059500)(0.056000,0.065300)(0.061000,0.071200)(0.066000,0.077300)(0.071000,0.082700)(0.076000,0.088300)(0.081000,0.094700)(0.086000,0.101200)(0.091000,0.106400)(0.096000,0.112500)(0.101000,0.117200)(0.106000,0.123900)(0.111000,0.129700)(0.116000,0.135200)(0.121000,0.139700)(0.126000,0.145600)(0.131000,0.152200)(0.136000,0.157300)(0.141000,0.161500)(0.146000,0.166500)(0.151000,0.171600)(0.156000,0.177500)(0.161000,0.184400)(0.166000,0.190300)(0.171000,0.195200)(0.176000,0.201200)(0.181000,0.206300)(0.186000,0.211500)(0.191000,0.216800)(0.196000,0.221600)(0.201000,0.225600)(0.206000,0.231500)(0.211000,0.236100)(0.216000,0.241500)(0.221000,0.246300)(0.226000,0.250200)(0.231000,0.254300)(0.236000,0.260300)(0.241000,0.265700)(0.246000,0.271300)(0.251000,0.276400)(0.256000,0.281500)(0.261000,0.286300)(0.266000,0.291300)(0.271000,0.295500)(0.276000,0.301100)(0.281000,0.305400)(0.286000,0.309900)(0.291000,0.314100)(0.296000,0.320700)(0.301000,0.326600)(0.306000,0.332000)(0.311000,0.336500)(0.316000,0.341200)(0.321000,0.347700)(0.326000,0.353300)(0.331000,0.357900)(0.336000,0.361500)(0.341000,0.365400)(0.346000,0.369600)(0.351000,0.374600)(0.356000,0.380300)(0.361000,0.386400)(0.366000,0.389400)(0.371000,0.395000)(0.376000,0.399400)(0.381000,0.403500)(0.386000,0.407700)(0.391000,0.413000)(0.396000,0.417300)	
};
\draw (axis cs:0.2,0.24) ellipse [x radius=0.2cm,y radius=0.32cm];
\node[blue,below] at (axis cs:0.2,0.18){{\small $a=0.1$}};
  \end{axis}
  \end{tikzpicture}
  \end{center}
\centering\caption{ROC curves of ANMF-RSCM designs for  $a=0.1,0.25,0.5$, ${\bf p}={\bf a}(\theta)$ with $\theta=20^{o}$, $N=30$, $n=60$: Gaussian setting}
\label{fig:gaussian_reduced}  
\end{figure}
\begin{figure}
  \begin{center}
   \begin{tikzpicture}[scale=1,font=\normalsize]
\tikzset{dashdot/.style={dash pattern=on 2pt off 2pt on 6pt off 2pt}}
\tikzstyle{dotted}=[dash pattern=on \pgflinewidth off 2pt]
     \begin{axis}[
      xmin=0,
      ymin=0.2,
      xmax=0.05,
      ymax=1,
      grid=major,
      legend style={ at={(0,0)},
      anchor= south west,
      at={(axis cs:0,1.05)}},
      xlabel={False Alarm Probability},
      ylabel={Detection Probability},
      ]
\addplot[color=blue,line width=1.5pt,mark size=1.5pt] coordinates
{
(0.001000,0.662400)(0.001500,0.702800)(0.002000,0.730300)(0.002500,0.753400)(0.003000,0.770000)(0.003500,0.786800)(0.004000,0.799100)(0.004500,0.810200)(0.005000,0.817600)(0.005500,0.825200)(0.006000,0.832800)(0.006500,0.839200)(0.007000,0.846000)(0.007500,0.852600)(0.008000,0.857400)(0.008500,0.861400)(0.009000,0.865400)(0.009500,0.871600)(0.010000,0.875600)(0.010500,0.878600)(0.011000,0.881900)(0.011500,0.885200)(0.012000,0.888600)(0.012500,0.891800)(0.013000,0.894700)(0.013500,0.896900)(0.014000,0.899600)(0.014500,0.902400)(0.015000,0.904600)(0.015500,0.906400)(0.016000,0.908600)(0.016500,0.910600)(0.017000,0.912300)(0.017500,0.914400)(0.018000,0.916600)(0.018500,0.918200)(0.019000,0.919900)(0.019500,0.921400)(0.020000,0.922600)(0.020500,0.923800)(0.021000,0.925000)(0.021500,0.926200)(0.022000,0.927400)(0.022500,0.929100)(0.023000,0.930000)(0.023500,0.930700)(0.024000,0.931700)(0.024500,0.932800)(0.025000,0.933800)(0.025500,0.934800)(0.026000,0.936700)(0.026500,0.937500)(0.027000,0.938800)(0.027500,0.939500)(0.028000,0.940300)(0.028500,0.941400)(0.029000,0.942300)(0.029500,0.942900)(0.030000,0.943600)(0.030500,0.944400)(0.031000,0.945200)(0.031500,0.946000)(0.032000,0.947100)(0.032500,0.947600)(0.033000,0.948000)(0.033500,0.948500)(0.034000,0.949200)(0.034500,0.950000)(0.035000,0.950900)(0.035500,0.951300)(0.036000,0.951500)(0.036500,0.952200)(0.037000,0.952500)(0.037500,0.953000)(0.038000,0.953500)(0.038500,0.954300)(0.039000,0.954800)(0.039500,0.955300)(0.040000,0.955900)(0.040500,0.956300)(0.041000,0.956900)(0.041500,0.957100)(0.042000,0.957400)(0.042500,0.957800)(0.043000,0.958500)(0.043500,0.958900)(0.044000,0.959300)(0.044500,0.959700)(0.045000,0.959900)(0.045500,0.960500)(0.046000,0.960900)(0.046500,0.961300)(0.047000,0.961800)(0.047500,0.961800)(0.048000,0.962500)(0.048500,0.962900)(0.049000,0.963100)(0.049500,0.963600)(0.050000,0.963900)	
};
\addlegendentry{Proposed design };
\addplot[color=red,line width=1.5pt,mark size=1.5pt,dashed] coordinates
{
(0.001000,0.656100)(0.001500,0.699500)(0.002000,0.729600)(0.002500,0.755600)(0.003000,0.772800)(0.003500,0.787800)(0.004000,0.798600)(0.004500,0.808600)(0.005000,0.818100)(0.005500,0.827900)(0.006000,0.836800)(0.006500,0.843300)(0.007000,0.848600)(0.007500,0.853900)(0.008000,0.859500)(0.008500,0.864000)(0.009000,0.868400)(0.009500,0.873200)(0.010000,0.877000)(0.010500,0.880500)(0.011000,0.884800)(0.011500,0.888700)(0.012000,0.891500)(0.012500,0.894500)(0.013000,0.897200)(0.013500,0.900300)(0.014000,0.902700)(0.014500,0.904800)(0.015000,0.907300)(0.015500,0.910000)(0.016000,0.911200)(0.016500,0.913200)(0.017000,0.914400)(0.017500,0.916600)(0.018000,0.918100)(0.018500,0.919500)(0.019000,0.921000)(0.019500,0.922700)(0.020000,0.924300)(0.020500,0.925900)(0.021000,0.927200)(0.021500,0.928700)(0.022000,0.929900)(0.022500,0.931500)(0.023000,0.932700)(0.023500,0.933800)(0.024000,0.934800)(0.024500,0.936300)(0.025000,0.937300)(0.025500,0.937900)(0.026000,0.939100)(0.026500,0.939300)(0.027000,0.940200)(0.027500,0.940700)(0.028000,0.941400)(0.028500,0.942500)(0.029000,0.943300)(0.029500,0.944200)(0.030000,0.944700)(0.030500,0.945300)(0.031000,0.946000)(0.031500,0.946600)(0.032000,0.947500)(0.032500,0.948500)(0.033000,0.948800)(0.033500,0.949800)(0.034000,0.950000)(0.034500,0.950500)(0.035000,0.950800)(0.035500,0.951600)(0.036000,0.952000)(0.036500,0.952800)(0.037000,0.953400)(0.037500,0.954300)(0.038000,0.954800)(0.038500,0.956100)(0.039000,0.956700)(0.039500,0.957200)(0.040000,0.957800)(0.040500,0.958200)(0.041000,0.958200)(0.041500,0.958700)(0.042000,0.959500)(0.042500,0.959900)(0.043000,0.960400)(0.043500,0.961200)(0.044000,0.961500)(0.044500,0.961900)(0.045000,0.962200)(0.045500,0.962900)(0.046000,0.963500)(0.046500,0.963600)(0.047000,0.964000)(0.047500,0.964000)(0.048000,0.964200)(0.048500,0.964800)(0.049000,0.964900)(0.049500,0.965000)(0.050000,0.965100)
};
\addlegendentry{Theory};
\addplot[color=black,line width=1.5pt,mark size=1.5pt,dashdotted] coordinates
{
(0.001000,0.502800)(0.001500,0.560100)(0.002000,0.600800)(0.002500,0.635300)(0.003000,0.659800)(0.003500,0.684700)(0.004000,0.703800)(0.004500,0.718000)(0.005000,0.732700)(0.005500,0.742900)(0.006000,0.755200)(0.006500,0.765400)(0.007000,0.776100)(0.007500,0.784500)(0.008000,0.791300)(0.008500,0.798200)(0.009000,0.805000)(0.009500,0.812700)(0.010000,0.818400)(0.010500,0.824300)(0.011000,0.829200)(0.011500,0.833500)(0.012000,0.838500)(0.012500,0.842100)(0.013000,0.845600)(0.013500,0.850200)(0.014000,0.854000)(0.014500,0.856700)(0.015000,0.860000)(0.015500,0.863600)(0.016000,0.866700)(0.016500,0.869400)(0.017000,0.871800)(0.017500,0.874900)(0.018000,0.877600)(0.018500,0.879700)(0.019000,0.882200)(0.019500,0.884500)(0.020000,0.886800)(0.020500,0.888900)(0.021000,0.890100)(0.021500,0.891900)(0.022000,0.893800)(0.022500,0.895800)(0.023000,0.898000)(0.023500,0.899600)(0.024000,0.901200)(0.024500,0.902300)(0.025000,0.904400)(0.025500,0.905900)(0.026000,0.907400)(0.026500,0.909500)(0.027000,0.911900)(0.027500,0.912900)(0.028000,0.914000)(0.028500,0.915800)(0.029000,0.916700)(0.029500,0.918200)(0.030000,0.919600)(0.030500,0.920700)(0.031000,0.922200)(0.031500,0.923000)(0.032000,0.924100)(0.032500,0.924900)(0.033000,0.926700)(0.033500,0.928400)(0.034000,0.929500)(0.034500,0.930200)(0.035000,0.930800)(0.035500,0.931900)(0.036000,0.932900)(0.036500,0.933400)(0.037000,0.934600)(0.037500,0.935800)(0.038000,0.936600)(0.038500,0.937600)(0.039000,0.938000)(0.039500,0.939000)(0.040000,0.939200)(0.040500,0.939900)(0.041000,0.940500)(0.041500,0.941300)(0.042000,0.942100)(0.042500,0.942500)(0.043000,0.943300)(0.043500,0.944200)(0.044000,0.944600)(0.044500,0.945100)(0.045000,0.945400)(0.045500,0.946400)(0.046000,0.946800)(0.046500,0.947500)(0.047000,0.947700)(0.047500,0.948200)(0.048000,0.948700)(0.048500,0.949000)(0.049000,0.950000)(0.049500,0.950500)(0.050000,0.950800)	
};
\addlegendentry{Design using $\hat{\rho}_{\rm chen}$ and $\hat{r}_{\rm chen}$ \cite{chen-10}};
\addplot[color=cyan,line width=1.5pt,mark size=1.5pt,dotted] coordinates
{
(0.001000,0.293500)(0.001500,0.336800)(0.002000,0.368300)(0.002500,0.396100)(0.003000,0.419400)(0.003500,0.438600)(0.004000,0.457000)(0.004500,0.470800)(0.005000,0.482300)(0.005500,0.493300)(0.006000,0.505600)(0.006500,0.515800)(0.007000,0.525400)(0.007500,0.534400)(0.008000,0.543100)(0.008500,0.550900)(0.009000,0.558900)(0.009500,0.567900)(0.010000,0.574500)(0.010500,0.581100)(0.011000,0.588000)(0.011500,0.593900)(0.012000,0.600500)(0.012500,0.605500)(0.013000,0.610200)(0.013500,0.614200)(0.014000,0.619000)(0.014500,0.624500)(0.015000,0.628600)(0.015500,0.633300)(0.016000,0.638500)(0.016500,0.642800)(0.017000,0.646900)(0.017500,0.651800)(0.018000,0.656300)(0.018500,0.660300)(0.019000,0.665200)(0.019500,0.669400)(0.020000,0.672700)(0.020500,0.676900)(0.021000,0.680500)(0.021500,0.684500)(0.022000,0.688000)(0.022500,0.691300)(0.023000,0.693800)(0.023500,0.696000)(0.024000,0.699100)(0.024500,0.702300)(0.025000,0.705000)(0.025500,0.708600)(0.026000,0.711900)(0.026500,0.715100)(0.027000,0.717900)(0.027500,0.719600)(0.028000,0.722100)(0.028500,0.723400)(0.029000,0.725700)(0.029500,0.728200)(0.030000,0.730100)(0.030500,0.732400)(0.031000,0.735200)(0.031500,0.736800)(0.032000,0.739000)(0.032500,0.741300)(0.033000,0.743600)(0.033500,0.746500)(0.034000,0.747900)(0.034500,0.749300)(0.035000,0.751400)(0.035500,0.753100)(0.036000,0.755700)(0.036500,0.757200)(0.037000,0.758700)(0.037500,0.760800)(0.038000,0.762600)(0.038500,0.764600)(0.039000,0.766900)(0.039500,0.769300)(0.040000,0.771500)(0.040500,0.773100)(0.041000,0.774400)(0.041500,0.775700)(0.042000,0.777700)(0.042500,0.780400)(0.043000,0.781400)(0.043500,0.783800)(0.044000,0.785500)(0.044500,0.787200)(0.045000,0.788300)(0.045500,0.790700)(0.046000,0.791900)(0.046500,0.793400)(0.047000,0.795100)(0.047500,0.796600)(0.048000,0.798700)(0.048500,0.799600)(0.049000,0.801000)(0.049500,0.802500)(0.050000,0.803200)	
};
\addlegendentry{Design without regularization $\hat{\rho}=0$}
%\draw (axis cs:0.02,0.1) ellipse [x radius=0.2cm,y radius=0.3cm];
%\node[blue,below] at (axis cs:0.02,0.22){{\small $a=0.75$}};	
	  \end{axis}
   \end{tikzpicture}
	   \end{center}
	   \caption{ROC curves of ANMF-RSCM designs for $a=0.9$, ${\bf p}={\bf a}(\theta)$ with $\theta=20^{o}$, $N=30$, $n=60$: Gaussian setting}
\label{fig:gaussian_low}
\end{figure}
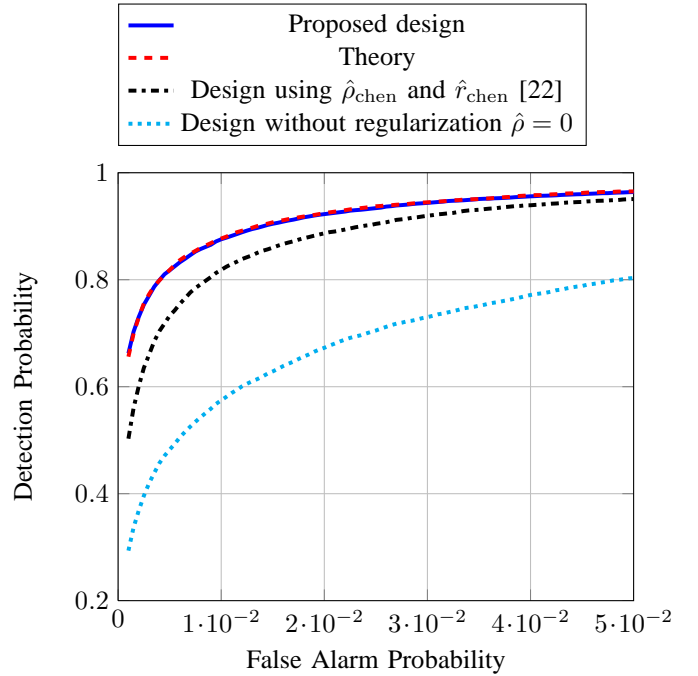

\subsection{Non-Gaussian clutter}
In a second experiment, we proceed  investigating the performance of the proposed design in the case where a non-Gaussian clutter is considered. In particular, we consider the case where the clutter is drawn from a $K$-distribution with zero mean, covariance ${\bf C}_N$, and shape $\nu=0.5$ \cite{esa-12}. The covariance matrix has the same form as in \eqref{eq:CN}  with $b=0.96\jmath$ but with $N=30$ and $n=60$.

 Similar to the Gaussian clutter case, we consider for the sake of comparison the concurrent design based on the regularization parameter derived in the work of  Olilla and Tyler in \cite[Equation (19)]{ollila-tyler}. We denote by $\hat{\rho}_{\rm ollila}$ and $\hat{r}_{\rm ollila}$ the corresponding regularization coefficient and threshold. Note that, according to our theoretical analysis, the threshold $\hat{r}_{\rm ollila}$ should be set to $\hat{r}_{\rm ollila}=\hat{\sigma}_{N,RTE}(\hat{\rho}_{\rm ollila})\sqrt{-2\log\eta}$ in order to satisfy the required false alarm probability. The results are depicted in Figure \ref{fig:comp_olilla}.
 We note that for all SNR ranges, the proposed method achieves a gain over the design based on the regularization coefficient proposed by Olilla et al. We also observe  that, similar to the first experiment,  the gain increases as $a$ grows but with a lower slope\footnote{Note that we do not compare with the zero-regularization case as in the first experiment, since, contrary to the Gaussian case, we do not have in our disposal theoretical results allowing the tuning of the threshold to the value that achieves the required false alarm probability.}.  

In a last experiment, we investigate the impacts of $a$  and the distribution shape $\nu$.  Figure \ref{fig:effect_nu} represents the detection probability with respect to $a$ when the false alarm probability is fixed to $0.05$. We note that for small values of $a$,  higher detection probabilities are achieved when the distribution of the clutter is heavy-tailed (small $\nu$), whereas  the opposite occurs for large values of $a$. In order to explain this change in behavior, we must recall that heavy-tailed clutters (small $\nu$) are characterized by a higher number of occurrences of $\tau$ in the vicinity of zero and at the same time more frequent realizations of  large values of $\tau$. If $a$ is small, the improvement in detection performances  achieved by heavy-tailed clutters is attributed to the artificial increase in SNR over realizations of small values of $\tau$. As $a$ increases, the power of the signal of interest is high enough so that the effect of realizations with large values of $\tau$ becomes  dominant. The latter, which are more frequent for small values of $\nu$, are  characterized by high levels of noises, thereby entailing a degradation of detection performances. 
\begin{figure}
 \begin{center}
   \begin{tikzpicture}[scale=1,font=\normalsize]
\tikzset{dashdot/.style={dash pattern=on 2pt off 2pt on 6pt off 2pt}}
     \begin{axis}[
 xmin=0,
 ymin=0.1,
 xmax=0.4,
 ymax=1,
 grid=major,
      legend style={ at={(0,0)},
      anchor= south west,
      at={(axis cs:0,1.05)}},
      xlabel={False Alarm Probability},
      ylabel={Detection Probability},
      ]
\addplot[color=blue,line width=1.5pt,mark size=1.5pt] coordinates
{
(0.001000,0.518900)(0.006000,0.612200)(0.011000,0.649200)(0.016000,0.672000)(0.021000,0.691700)(0.026000,0.705600)(0.031000,0.717400)(0.036000,0.729700)(0.041000,0.741100)(0.046000,0.750600)(0.051000,0.757900)(0.056000,0.764800)(0.061000,0.771000)(0.066000,0.776600)(0.071000,0.781700)(0.076000,0.786900)(0.081000,0.792500)(0.086000,0.797200)(0.091000,0.801300)(0.096000,0.805900)(0.101000,0.809900)(0.106000,0.814900)(0.111000,0.819100)(0.116000,0.822300)(0.121000,0.825200)(0.126000,0.828300)(0.131000,0.830400)(0.136000,0.833100)(0.141000,0.836000)(0.146000,0.839400)(0.151000,0.842200)(0.156000,0.845200)(0.161000,0.848400)(0.166000,0.851100)(0.171000,0.853300)(0.176000,0.855800)(0.181000,0.857500)(0.186000,0.860200)(0.191000,0.862200)(0.196000,0.865800)(0.201000,0.868300)(0.206000,0.869900)(0.211000,0.871300)(0.216000,0.873500)(0.221000,0.875000)(0.226000,0.876400)(0.231000,0.879300)(0.236000,0.881700)(0.241000,0.883400)(0.246000,0.885100)(0.251000,0.886800)(0.256000,0.888200)(0.261000,0.889800)(0.266000,0.891100)(0.271000,0.892500)(0.276000,0.894300)(0.281000,0.895500)(0.286000,0.897200)(0.291000,0.898000)(0.296000,0.899200)(0.301000,0.901100)(0.306000,0.902100)(0.311000,0.903400)(0.316000,0.905600)(0.321000,0.906700)(0.326000,0.908600)(0.331000,0.909800)(0.336000,0.911500)(0.341000,0.914000)(0.346000,0.914900)(0.351000,0.915900)(0.356000,0.916800)(0.361000,0.918100)(0.366000,0.919100)(0.371000,0.920000)(0.376000,0.921800)(0.381000,0.923300)(0.386000,0.924300)(0.391000,0.925700)(0.396000,0.926100)
};
\addlegendentry{Proposed design };
\addplot[color=black,mark size=1.5pt,dashdot,line width=1.5pt] coordinates{
(0.001000,0.518900)(0.006000,0.594400)(0.011000,0.626100)(0.016000,0.645300)(0.021000,0.661000)(0.026000,0.676100)(0.031000,0.688500)(0.036000,0.698100)(0.041000,0.706900)(0.046000,0.714700)(0.051000,0.722300)(0.056000,0.729900)(0.061000,0.736200)(0.066000,0.741100)(0.071000,0.746800)(0.076000,0.751800)(0.081000,0.756000)(0.086000,0.760900)(0.091000,0.766400)(0.096000,0.770600)(0.101000,0.774500)(0.106000,0.779000)(0.111000,0.782800)(0.116000,0.785700)(0.121000,0.789300)(0.126000,0.792500)(0.131000,0.795900)(0.136000,0.800400)(0.141000,0.802800)(0.146000,0.806500)(0.151000,0.809000)(0.156000,0.812800)(0.161000,0.814900)(0.166000,0.816900)(0.171000,0.818500)(0.176000,0.821100)(0.181000,0.823300)(0.186000,0.826200)(0.191000,0.828500)(0.196000,0.831700)(0.201000,0.833100)(0.206000,0.835400)(0.211000,0.837200)(0.216000,0.839600)(0.221000,0.842700)(0.226000,0.844400)(0.231000,0.846900)(0.236000,0.848800)(0.241000,0.850600)(0.246000,0.852600)(0.251000,0.854700)(0.256000,0.857100)(0.261000,0.858800)(0.266000,0.860600)(0.271000,0.862200)(0.276000,0.863300)(0.281000,0.865000)(0.286000,0.866500)(0.291000,0.868200)(0.296000,0.870100)(0.301000,0.871400)(0.306000,0.872500)(0.311000,0.874100)(0.316000,0.875500)(0.321000,0.876800)(0.326000,0.878300)(0.331000,0.880000)(0.336000,0.881100)(0.341000,0.882400)(0.346000,0.884000)(0.351000,0.886700)(0.356000,0.888600)(0.361000,0.889900)(0.366000,0.891500)(0.371000,0.892200)(0.376000,0.893700)(0.381000,0.894900)(0.386000,0.896400)(0.391000,0.898000)(0.396000,0.899300)
};
\addlegendentry{Design using $\hat{\rho}_{\rm olilla}$, $\hat{r}_{\rm olilla}$ \cite{ollila-tyler}};
\draw (axis cs:0.2,0.85) ellipse [x radius=0.2cm,y radius=0.4cm];
\node[blue,below] at (axis cs:0.2,0.97){{\small $a=0.5$}};
\addplot[color=blue,line width=1.5pt,mark size=1.5pt] coordinates
{
(0.001000,0.286400)(0.006000,0.357600)(0.011000,0.389200)(0.016000,0.413300)(0.021000,0.429700)(0.026000,0.445100)(0.031000,0.457400)(0.036000,0.470800)(0.041000,0.481000)(0.046000,0.492600)(0.051000,0.503200)(0.056000,0.510000)(0.061000,0.518400)(0.066000,0.526900)(0.071000,0.534100)(0.076000,0.541300)(0.081000,0.549000)(0.086000,0.556700)(0.091000,0.564700)(0.096000,0.572300)(0.101000,0.578400)(0.106000,0.585000)(0.111000,0.590900)(0.116000,0.596400)(0.121000,0.600300)(0.126000,0.605200)(0.131000,0.609700)(0.136000,0.614400)(0.141000,0.619800)(0.146000,0.623200)(0.151000,0.627400)(0.156000,0.632000)(0.161000,0.637000)(0.166000,0.640600)(0.171000,0.643800)(0.176000,0.648300)(0.181000,0.652700)(0.186000,0.657000)(0.191000,0.661000)(0.196000,0.665700)(0.201000,0.669000)(0.206000,0.672500)(0.211000,0.676100)(0.216000,0.679600)(0.221000,0.682800)(0.226000,0.686200)(0.231000,0.689500)(0.236000,0.693200)(0.241000,0.696400)(0.246000,0.699100)(0.251000,0.702800)(0.256000,0.705500)(0.261000,0.708600)(0.266000,0.711000)(0.271000,0.715000)(0.276000,0.718500)(0.281000,0.721600)(0.286000,0.724700)(0.291000,0.727500)(0.296000,0.730700)(0.301000,0.733500)(0.306000,0.735800)(0.311000,0.739200)(0.316000,0.741800)(0.321000,0.745800)(0.326000,0.749600)(0.331000,0.753200)(0.336000,0.756000)(0.341000,0.759200)(0.346000,0.760800)(0.351000,0.763200)(0.356000,0.766000)(0.361000,0.769100)(0.366000,0.772900)(0.371000,0.776400)(0.376000,0.778700)(0.381000,0.781000)(0.386000,0.783300)(0.391000,0.786200)(0.396000,0.789200)
};
\addplot[color=black,mark size=1.5pt,dashdot,line width=1.5pt] coordinates
{
(0.001000,0.296000)(0.006000,0.352800)(0.011000,0.378500)(0.016000,0.397600)(0.021000,0.410300)(0.026000,0.423400)(0.031000,0.435800)(0.036000,0.445200)(0.041000,0.454500)(0.046000,0.462600)(0.051000,0.469600)(0.056000,0.478400)(0.061000,0.485500)(0.066000,0.492700)(0.071000,0.499900)(0.076000,0.507200)(0.081000,0.512000)(0.086000,0.518200)(0.091000,0.523800)(0.096000,0.530400)(0.101000,0.536100)(0.106000,0.541300)(0.111000,0.547400)(0.116000,0.551600)(0.121000,0.556300)(0.126000,0.561400)(0.131000,0.566300)(0.136000,0.570400)(0.141000,0.575900)(0.146000,0.581600)(0.151000,0.584500)(0.156000,0.588400)(0.161000,0.592500)(0.166000,0.596300)(0.171000,0.600000)(0.176000,0.604100)(0.181000,0.608800)(0.186000,0.613800)(0.191000,0.617500)(0.196000,0.621200)(0.201000,0.625800)(0.206000,0.629400)(0.211000,0.633600)(0.216000,0.636900)(0.221000,0.640600)(0.226000,0.643400)(0.231000,0.647500)(0.236000,0.650900)(0.241000,0.653700)(0.246000,0.656600)(0.251000,0.659800)(0.256000,0.663300)(0.261000,0.667000)(0.266000,0.669800)(0.271000,0.673700)(0.276000,0.676800)(0.281000,0.680000)(0.286000,0.683300)(0.291000,0.685900)(0.296000,0.689300)(0.301000,0.692500)(0.306000,0.695500)(0.311000,0.698800)(0.316000,0.701800)(0.321000,0.704400)(0.326000,0.706300)(0.331000,0.709500)(0.336000,0.712600)(0.341000,0.715800)(0.346000,0.718400)(0.351000,0.721900)(0.356000,0.723500)(0.361000,0.726800)(0.366000,0.730200)(0.371000,0.732700)(0.376000,0.735400)(0.381000,0.738400)(0.386000,0.740600)(0.391000,0.743400)(0.396000,0.746000)
};
\draw (axis cs:0.2,0.65) ellipse [x radius=0.2cm,y radius=0.4cm];
\node[blue,below] at (axis cs:0.2,0.58){{\small $a=0.25$}};
\addplot[color=blue,line width=1.5pt,mark size=1.5pt] coordinates
{
(0.001000,0.118300)(0.006000,0.154700)(0.011000,0.175400)(0.016000,0.191300)(0.021000,0.204900)(0.026000,0.216100)(0.031000,0.224200)(0.036000,0.234500)(0.041000,0.243300)(0.046000,0.251200)(0.051000,0.259900)(0.056000,0.267800)(0.061000,0.275100)(0.066000,0.284000)(0.071000,0.291600)(0.076000,0.299400)(0.081000,0.307300)(0.086000,0.313600)(0.091000,0.321200)(0.096000,0.327500)(0.101000,0.333600)(0.106000,0.340100)(0.111000,0.348000)(0.116000,0.353300)(0.121000,0.358600)(0.126000,0.363900)(0.131000,0.370300)(0.136000,0.376500)(0.141000,0.381000)(0.146000,0.385800)(0.151000,0.390400)(0.156000,0.395900)(0.161000,0.400800)(0.166000,0.405300)(0.171000,0.410600)(0.176000,0.415600)(0.181000,0.419300)(0.186000,0.423600)(0.191000,0.427700)(0.196000,0.432500)(0.201000,0.438200)(0.206000,0.443400)(0.211000,0.448600)(0.216000,0.453200)(0.221000,0.458800)(0.226000,0.462900)(0.231000,0.466900)(0.236000,0.471300)(0.241000,0.475900)(0.246000,0.480600)(0.251000,0.485200)(0.256000,0.489500)(0.261000,0.495600)(0.266000,0.499200)(0.271000,0.504600)(0.276000,0.509500)(0.281000,0.513800)(0.286000,0.516400)(0.291000,0.520200)(0.296000,0.523700)(0.301000,0.528100)(0.306000,0.532300)(0.311000,0.536700)(0.316000,0.541700)(0.321000,0.545900)(0.326000,0.550500)(0.331000,0.555300)(0.336000,0.558500)(0.341000,0.563000)(0.346000,0.567200)(0.351000,0.571300)(0.356000,0.575800)(0.361000,0.578900)(0.366000,0.583200)(0.371000,0.587300)(0.376000,0.591300)(0.381000,0.594100)(0.386000,0.597800)(0.391000,0.601400)(0.396000,0.605800)
};
\addplot[color=black,mark size=1.5pt,dashdot,line width=1.5pt] coordinates
{
(0.001000,0.124700)(0.006000,0.159200)(0.011000,0.177800)(0.016000,0.189100)(0.021000,0.200500)(0.026000,0.212800)(0.031000,0.222800)(0.036000,0.231300)(0.041000,0.239300)(0.046000,0.249800)(0.051000,0.257100)(0.056000,0.263800)(0.061000,0.270000)(0.066000,0.275100)(0.071000,0.281000)(0.076000,0.287500)(0.081000,0.292200)(0.086000,0.298200)(0.091000,0.303400)(0.096000,0.308700)(0.101000,0.313800)(0.106000,0.319600)(0.111000,0.325700)(0.116000,0.331000)(0.121000,0.335800)(0.126000,0.341700)(0.131000,0.346700)(0.136000,0.352800)(0.141000,0.357500)(0.146000,0.362200)(0.151000,0.367100)(0.156000,0.370100)(0.161000,0.373800)(0.166000,0.379900)(0.171000,0.384500)(0.176000,0.388900)(0.181000,0.392600)(0.186000,0.397700)(0.191000,0.401700)(0.196000,0.406000)(0.201000,0.409800)(0.206000,0.412500)(0.211000,0.416900)(0.216000,0.421000)(0.221000,0.425100)(0.226000,0.429100)(0.231000,0.433200)(0.236000,0.437000)(0.241000,0.440500)(0.246000,0.444300)(0.251000,0.448100)(0.256000,0.451900)(0.261000,0.455600)(0.266000,0.458500)(0.271000,0.462900)(0.276000,0.467000)(0.281000,0.470400)(0.286000,0.474000)(0.291000,0.477700)(0.296000,0.481900)(0.301000,0.485300)(0.306000,0.488700)(0.311000,0.492400)(0.316000,0.495700)(0.321000,0.499900)(0.326000,0.502900)(0.331000,0.508100)(0.336000,0.512200)(0.341000,0.515800)(0.346000,0.520300)(0.351000,0.523400)(0.356000,0.526200)(0.361000,0.529400)(0.366000,0.532100)(0.371000,0.536100)(0.376000,0.540200)(0.381000,0.543700)(0.386000,0.546100)(0.391000,0.549800)(0.396000,0.554100)
};
\draw (axis cs:0.3,0.5) ellipse [x radius=0.2cm,y radius=0.4cm];
\node[blue,below] at (axis cs:0.3,0.42){{\small $a=0.1$}};
\end{axis}
 \end{tikzpicture}
\end{center}
\centering\caption{ROC curves of ANMF-RTE designs for $a=0.1,0.25,0.5$, ${\bf p}={\bf a}(\theta)$ with $\theta=20^{o}$, $N=30$, $n=60$: K distributed clutter setting}
\label{fig:comp_olilla}
\end{figure}
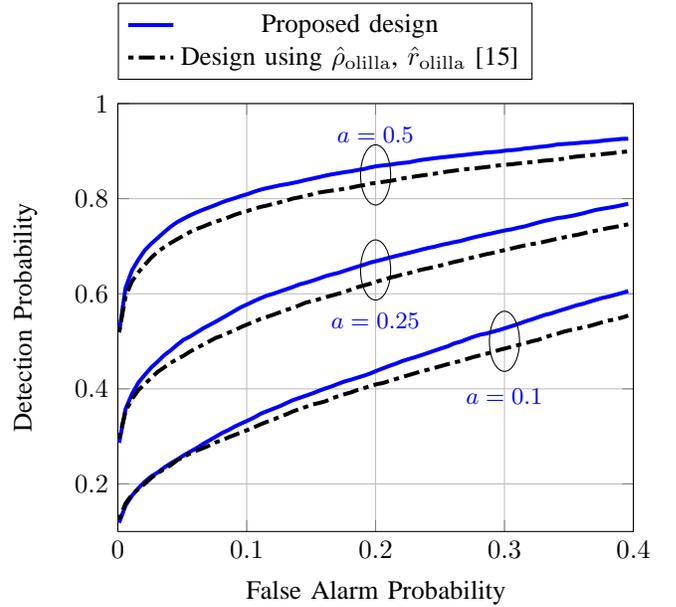

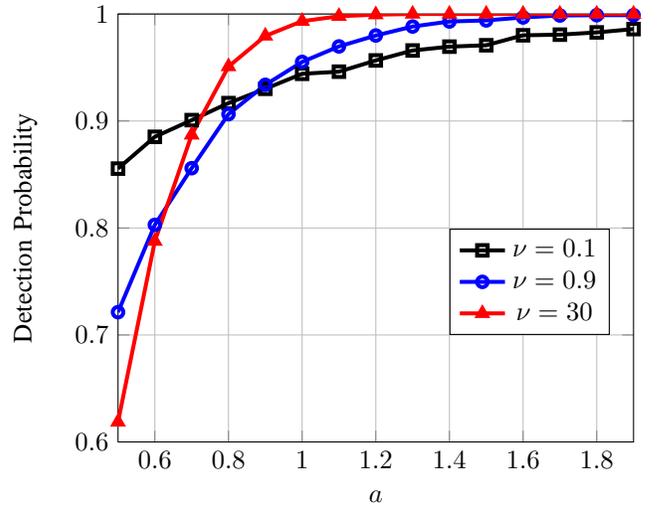
\begin{figure}
 \begin{center}
   \begin{tikzpicture}[scale=1,font=\normalsize]
\tikzset{dashdot/.style={dash pattern=on 2pt off 2pt on 6pt off 2pt}}
     \begin{axis}[
 xmin=0.5,
 ymin=0.6,
 xmax=1.9,
 ymax=1,
 grid=major,
      legend style={ at={(0,0)},
      anchor= south west,
      at={(axis cs:1.4,0.7)}},
      xlabel={$a$},
      ylabel={Detection Probability},
      ]
\addplot[color=black,line width=1.5pt,mark size=2pt,mark=square] coordinates
{
(0.500000,0.855500)(0.600000,0.885200)(0.700000,0.900800)(0.800000,0.916700)(0.900000,0.930100)(1.000000,0.944100)(1.100000,0.946100)(1.200000,0.956600)(1.300000,0.965900)(1.400000,0.969500)(1.500000,0.970800)(1.600000,0.980100)(1.700000,0.980700)(1.800000,0.982800)(1.900000,0.985900)(2.000000,0.990600)(2.100000,0.992300)(2.200000,0.990900)(2.300000,0.992400)(2.400000,0.994000)(2.500000,0.994700)(2.600000,0.995300)(2.700000,0.997300)(2.800000,0.996600)(2.900000,0.997200)(3.000000,0.997600)(3.100000,0.997800)(3.200000,0.997900)(3.300000,0.998700)(3.400000,0.998700)(3.500000,0.998600)(3.600000,0.998600)(3.700000,0.999400)(3.800000,0.999500)(3.900000,0.999700)(4.000000,0.999400)(4.100000,0.999500)(4.200000,0.999600)(4.300000,0.999700)(4.400000,0.999500)(4.500000,0.999600)(4.600000,0.999900)(4.700000,1.000000)(4.800000,0.999900)(4.900000,0.999900)(5.000000,0.999900)
};
\addlegendentry{$\nu=0.1$}

\addplot[color=blue,line width=1.5pt,mark size=2pt,mark=o] coordinates
{
(0.500000,0.721300)(0.600000,0.802900)(0.700000,0.855900)(0.800000,0.906400)(0.900000,0.933800)(1.000000,0.955200)(1.100000,0.969700)(1.200000,0.979900)(1.300000,0.988300)(1.400000,0.993000)(1.500000,0.994000)(1.600000,0.996800)(1.700000,0.998600)(1.800000,0.998900)(1.900000,0.998900)(2.000000,0.999500)(2.100000,0.999800)(2.200000,0.999900)(2.300000,0.999800)(2.400000,0.999900)(2.500000,1.000000)(2.600000,1.000000)(2.700000,1.000000)(2.800000,1.000000)(2.900000,1.000000)(3.000000,1.000000)(3.100000,1.000000)(3.200000,1.000000)(3.300000,1.000000)(3.400000,1.000000)(3.500000,1.000000)(3.600000,1.000000)(3.700000,1.000000)(3.800000,1.000000)(3.900000,1.000000)(4.000000,1.000000)(4.100000,1.000000)(4.200000,1.000000)(4.300000,1.000000)(4.400000,1.000000)(4.500000,1.000000)(4.600000,1.000000)(4.700000,1.000000)(4.800000,1.000000)(4.900000,1.000000)(5.000000,1.000000)
};
\addlegendentry{$\nu=0.9$}
\addplot[color=red,line width=1.5pt,mark size=2pt,mark=triangle*] coordinates
{
(0.500000,0.618700)(0.600000,0.787700)(0.700000,0.887000)(0.800000,0.950900)(0.900000,0.979500)(1.000000,0.993300)(1.100000,0.997900)(1.200000,0.999400)(1.300000,1.000000)(1.400000,1.000000)(1.500000,1.000000)(1.600000,1.000000)(1.700000,1.000000)(1.800000,1.000000)(1.900000,1.000000)(2.000000,1.000000)(2.100000,1.000000)(2.200000,1.000000)(2.300000,1.000000)(2.400000,1.000000)(2.500000,1.000000)(2.600000,1.000000)(2.700000,1.000000)(2.800000,1.000000)(2.900000,1.000000)(3.000000,1.000000)(3.100000,1.000000)(3.200000,1.000000)(3.300000,1.000000)(3.400000,1.000000)(3.500000,1.000000)(3.600000,1.000000)(3.700000,1.000000)(3.800000,1.000000)(3.900000,1.000000)(4.000000,1.000000)(4.100000,1.000000)(4.200000,1.000000)(4.300000,1.000000)(4.400000,1.000000)(4.500000,1.000000)(4.600000,1.000000)(4.700000,1.000000)(4.800000,1.000000)(4.900000,1.000000)(5.000000,1.000000)
};
\addlegendentry{$\nu=30$}

\end{axis}
\end{tikzpicture}
\end{center}
\caption{Detection probability with respect to $a$, ${\bf p}={\bf a}(\theta)$ with $\theta=20^o$, $N=30$, $n=60$, $\nu=0.1,0.9,30$: K-distributed clutter, false alarm probability =$5\%$.}
\label{fig:effect_nu}
\end{figure}

\section{Conclusion}
%The estimation of the clutter covariance matrix is a fundamental step in adaptive radar detection. Of interest  are the regularized sample covariance matrix estimator (RSCM) and the regularized Tyler estimator (RTE) which are known to be well-suited for low sample support scenarios.  
In this paper, we address the setting of the regularization parameter when the RSCM or the RTE are used in the ANMF detector statistic as a replacement of the unknown covariance matrix, thereby yielding the schemes ANMF-RSCM and ANMF-RTE. One major bottleneck toward determining the regularization parameter that optimizes the performances of the ANMF detector, is linked to the difficulty to clearly characterize the distribution  of the ANMF statistics under the cases of presence or absence of a signal of interest ($H_1$ and $H_0$). In order to deal with this issue, we considered   the regime under which the number of samples and their dimensions grow large simultaneously. Based on tools from random matrix theory along with recent asymptotic results on the behaviour of the RTE, we derived the asymptotic distribution of the ANMF detector under hypothesis $H_0$ and $H_1$. The obtained results have allowed us to propose an optimal design of the regularization parameter that maximizes the detection probability while keeping fixed the false alarm probability through an appropriate tuning of the threshold value. Simulations results  clearly illustrated the gain of our method over previously proposed empirical settings of the regularization coefficient.
One major advantage of our approach is that, contrary to first intuitions, it leads to simple closed-form expressions that can be easily implemented  in practice. This is quite surprising given that the handling of the classical regime where $n$ grows to infinity with $N$ fixed has been shown to be delicate. As a matter of fact, it has thus far been considered only for the non-regularized Tyler estimator where  intricate expressions in the form of integrals were obtained \cite{pascal-icassp15}. Building the bridge between both approaches is an open  question that deserves investigation. 
\section{Acknowledgments}
Couillet's work is jointly supported by the French ANR DIONISOS project (ANR-12-MONU-OOO3) and the GDR ISIS--GRETSI ``Jeunes Chercheurs'' Project. Pascal's work has been partially supported by the ICODE institute, research project of the Idex Paris-Saclay.
 
\appendices
\section{Proof of Theorem \ref{th:detection}}
The proof of Theorem \ref{th:detection} consists of two steps. First, we study the asymptotic behaviour of the detection probability for fixed $\rho$. Then, by a similar argument to the one considered in \cite{couillet-kammoun-14}, we establish the uniformity of the result over the considered set of $\rho$. 
Assume that the received signal vector ${\bf y}$ is given by:
$$
{\bf y}=\frac{a}{\sqrt{N}}{\bf p} +{\bf x}
$$
with $\|{\bf p}\|^2=N$
and let us write $\sqrt{N}\widehat{T}_N^{RSCM}(\rho)$ as:
$$
\sqrt{N}\widehat{T}_N^{RSCM}(\rho)=\sqrt{N}\frac{{\left|\frac{1}{\sqrt{N}}{\bf y}^*\widehat{\bf R}_N^{-1}(\rho)\frac{\bf p}{\sqrt{N}}\right|}}{\sqrt{\frac{1}{N}{\bf y}^*\widehat{\bf R}_N^{-1}(\rho){\bf y}}\sqrt{\frac{{\bf p}^*\widehat{\bf R}_N^{-1}(\rho){\bf p}}{N}}}.
$$
A close inspection of the expression of $\sqrt{N}\widehat{T}_N^{RSCM}(\rho)$ reveals that the fluctuations will be governed by the numerator ${\bf y}^*\widehat{\bf R}_N^{-1}(\rho)\frac{\bf p}{\sqrt{N}}$ since, from classical results of random matrix theory, we know that  quantities in the denominator exhibit a deterministic behaviour, being well-approximated by some deterministic quantities. In effect,
\begin{equation}
\frac{1}{N}{\bf p}^*\widehat{\bf R}_N^{-1}(\rho){\bf p}- \frac{1}{\rho N} {\bf p}^*{\bf Q}_N(\rho){\bf p}\asto 0,
\label{eq:convergence_1}
\end{equation}
while:
\begin{equation}
\label{eq:convergence_2}
\frac{1}{N}{\bf y}^* \widehat{\bf R}_N^{-1}(\rho){\bf y}- \frac{1}{N\rho}\tr {\bf C}_N{\bf Q}_N(\rho)\asto 0.
\end{equation}
The first convergence \eqref{eq:convergence_1} follows from Theorem 1.1 of \cite{hachem-bilinear-13} whereas the second one is obtained by  observing that, because of the low-SNR hypothesis:
$$
\frac{1}{N}{\bf y}^* \widehat{\bf R}_N^{-1}(\rho){\bf y}-\frac{1}{N}{\bf x}^{*}\widehat{\bf R}_N^{-1}(\rho){\bf x}\asto 0
$$
and then using the well-known convergence result \cite{HAC07}:
$$
\frac{1}{N}{\bf x}^{*}\widehat{\bf R}_N^{-1}(\rho){\bf x}-\frac{1}{N\rho}\tr {\bf C}_N{\bf Q}_N(\rho)\asto 0.
$$
We will now deal with the fluctuations of the numerator. We have:
$$
\sqrt{N}\left|\frac{1}{N}{\bf y}^*\widehat{\bf R}_N^{-1}{\bf p}\right| =
\left|\frac{a{\bf p}^*}{N}\widehat{\bf R}_N^{-1}{{\bf p}}+{\bf x}^*\widehat{\bf R}_N^{-1}\frac{\bf p}{\sqrt{N}}\right|.
$$
Arguing in a similar way to that in \eqref{eq:convergence_1}, we  know that the quantity $\frac{1}{N}{a{\bf p}^*}\widehat{\bf R}_N^{-1}{{\bf p}}$ does not fluctuate and converges to:
$$
\frac{1}{N}{a{\bf p}^*}\widehat{\bf R}_N^{-1}{{\bf p}}-\frac{a}{N\rho} {\bf p}^*{\bf Q}_N(\rho){\bf p}\asto 0,
$$
while, from \cite{couillet-kammoun-14}:
\begin{align*}
&\left[\frac{1}{\sqrt{N}}\Re({\bf x}^*\widehat{\bf R}_N^{-1}{\bf p}),\frac{1}{\sqrt{N}}\Im({\bf x}^*\widehat{\bf R}_N^{-1}{\bf p})\right]^{\mbox{\tiny T}}\\
&-\sqrt{\frac{1}{2\rho^2N}\frac{{\bf p}^*{\bf C}_N{\bf Q}_N^2(\rho){\bf p}}{\left(1-cm(-\rho)^2(1-\rho)^2\frac{1}{N}\tr {\bf C}_N^2{\bf Q}_N^2(\rho)\right)}} Z^{'}=o_p(1)
\end{align*}
for some $Z^{'}\sim\mathcal{N}(0,{\bf I}_2)$.

Let ${\bf r}=\left[\frac{1}{\sqrt{N}}\Re({\bf x}^*\widehat{\bf R}_N^{-1}{\bf p}),\frac{1}{\sqrt{N}}\Im({\bf x}^*\widehat{\bf R}_N^{-1}{\bf p})\right]^{\mbox{\tiny T}}$. Denote by $\Upsilon_N$ and $\omega_N$ the quantities:
\begin{align*}
\Upsilon_N&=\sqrt{\frac{1}{2\rho^2N}\frac{{\bf p}^*{\bf C}_N{\bf Q}_N^2(\rho){\bf p}}{\left(1-cm(-\rho)^2(1-\rho)^2\frac{1}{N}\tr {\bf C}_N^2{\bf Q}_N^2(\rho)\right)}}\\
\omega_N&=\frac{a}{N\rho}{\bf p}^*{\bf Q}_N(\rho){\bf p}
\end{align*}
Recall the following distance between probability distributions:
$$
\beta\left(\mathbb{P},\tilde{\mathbb{P}}\right)=\sup\left\{\int f d\mathbb{P}-\int f d\tilde{\mathbb{P}}, \|f\|_{BL}\leq 1\right\}
$$
where $\|f\|_{BL}=\|f\|_{Lip}+\|f\|_{\infty}$, $\|f\|_{Lip}$ being the Lipschitz norm and $\|.\|_{\infty}$, the supremum norm \cite{dudley}.
 Assume for the moment that $\lim\sup \Upsilon_N<\infty$ and $\lim\sup \frac{a}{N\rho}{\bf p}^*{\bf Q}_N(\rho){\bf p}<\infty$. The proof for these statements will be provided later. Then, from Theorem 11.7.1 in \cite{dudley},
$$
\beta\left(\mathcal{L}\left({\bf r},\frac{1}{N}{a{\bf p}^*}\widehat{\bf R}_N^{-1}{{\bf p}}\right),\mathcal{L}\left(\Upsilon_N Z^{'},\omega_N\right)\right)\to 0.
$$
where $\mathcal{L}(X)$ stands for the probability distribution of $X$. This in particular establishes that the random variable $\left({\bf r},\frac{1}{N}{a{\bf p}^*}\widehat{\bf R}_N^{-1}{{\bf p}}\right)$ converges uniformly in distribution to $\left(\Upsilon_N Z^{'},\omega_N\right)$. From the uniform continuous mapping Theorem in \cite[Theorem 1]{kasy}, we thus prove that 
%Using the continuous mapping theorem in \cite[Theorem 9.3.7]{dudley}, we prove that  
${\sqrt{N}}\left|\frac{1}{N}{\bf y}^*\widehat{\bf R}_N^{-1}{\bf p}\right|$ behaves asymptotically as a Rice random variable with location  $\frac{a}{N\rho} {\bf p}^*{\bf Q}_N(\rho){\bf p}$ and scale $\sqrt{\frac{1}{2\rho^2N}\frac{{\bf p}^*{\bf C}_N{\bf Q}_N^2(\rho){\bf p}}{\left(1-cm(-\rho)^2(1-\rho)^2\frac{1}{N}\tr {\bf C}_N^2{\bf Q}_N^2(\rho)\right)}}$.
%Then, we can show that for every bounded continuous function $f:\mathbb{R}^2\times\mathbb{R}\to{\bf R}$, 
%$$
%\mathbb{E}\left[f({\bf r}, \frac{a{\bf p}^*\widehat{\bf R}_N^{-1}{\bf p}}{N})\right]-\mathbb{E}\left[f\left(\Upsilon_N Z^{'},\omega_N\right)\right]\to 0.
%$$
%The proof can be performed by contradiction. Assume that there exists some bounded continuous function $f^{*}$ and an extracted subsequence such that:
%$$
%\mathbb{E}\left[f^{*}({\bf r}_{(N)}, \frac{a{\bf p}^*\widehat{\bf R}_{(N)}^{-1}{\bf p}}{(N)})\right]-\mathbb{E}\left[f^{*}\left(\Upsilon_{(N)} Z_{(N)}^{'},\omega_{(N)}\right)\right]\to l\neq 0.
%$$
%To this end, we need first to check that $\lim\sup\Upsilon_N<\infty$ and $\lim\sup \omega_N <\infty$. Assume for the moment that these two statements hold. Then, the sequence $\left(\Upsilon_N Z^{'},\omega_N\right)$ is tight and one can extract a convergence subsequence, say $\left(\Upsilon_{(m)} Z^{'},\omega_{(m)}\right)$ that converges to $\left(\Upsilon_{*},\omega_{*}\right)$. 
%Using the continuous mapping theorem, we thus get that ${\sqrt{N}}\left|\frac{1}{N}{\bf y}^*\widehat{\bf R}_N^{-1}{\bf p}\right|$ behaves asymptotically as a Rice random variable with location  $\frac{a}{N\rho} {\bf p}^*{\bf Q}_N(\rho){\bf p}$ and scale $\sqrt{\frac{1}{2\rho^2N}\frac{{\bf p}^*{\bf C}_N{\bf Q}_N^2(\rho){\bf p}}{\left(1-cm(-\rho)^2(1-\rho)^2\frac{1}{N}\tr {\bf C}_N^2{\bf Q}_N^2(\rho)\right)}}$.
Using this result along with Slutsky Lemma, we conclude that   under $H_1$, ${\widehat{T}_N^{RSCM}(\rho)}$ is also asymptotically equivalent to a Rice random variate but with location $\frac{a}{\sqrt{N}}\frac{\sqrt{{\bf p}^*{\bf Q}_N(\rho){\bf p}}}{\sqrt{\frac{1}{N}\tr{\bf C}_N{\bf Q}_N(\rho)}}$ and scale $\sigma_{N,SCM}$. We therefore get, for a fixed $\rho$,
$$
\left|\mathbb{P}\left[\widehat{T}_N^{SCM}> \frac{r}{\sqrt{N}}|H_1\right] -Q_1\left(g_{SCM}({\bf p}),\frac{r}{\sigma_{N,SCM}}\right)\right| \asto 0
$$ 
The generalization of this result to uniform convergence across $\rho\in\mathcal{R}_{\kappa}^{SCM}$ can be derived along the same steps as in \cite{couillet-kammoun-14}.
We now provide details about the control of the $\lim\sup\Upsilon_N$ and $\lim\sup\omega_N$. The fact that $\lim\sup\omega_N <\infty$ follows directly from  the last item in Assumption \ref{ass:gaussian}, while the control of $\lim\sup \Upsilon_N<\infty$ requires one to check that:
$$
\lim\inf 1-cm(-\rho)^2(1-\rho)^2\frac{1}{N}\tr {\bf C}_N^2{\bf Q}_N^2(\rho).
$$ 
The proof hinges on the observation that this term naturally appears when computing the derivative of  $m(z)$ with respect to $z$ at $z=-\rho$. Simple calculations reveal that:
\begin{align*}
m'(z)&=\left(-z+\frac{c(1-\rho)}{N}\tr {\bf C}_N{\bf Q}_N(z)\right)^{-2}\\
&\times\left(1-m(z)^2(1-\rho)^2\frac{1}{N}\tr {\bf C}_N^2{\bf Q}_N^2(z)\right)^{-1}.
\end{align*}
It suffices thus to show that $m'(-\rho)$ is bounded. As $m$ is a Stieltjes transform of some positive probability measure $\mu$,  it can be written as:
$$
m'(-\rho)=\int\frac{\mu(dx)}{(x+\rho)^2}\leq \frac{1}{\kappa^2}
$$
which ends the proof.

\label{app:detection}
\section{Proof of proposition \ref{prop:frho}}
For ease of notation, we denote by $f(\rho)$ and $\hat{f}(\rho)$, the quantities ${f}_{SCM}(\rho)$ and $\hat{f}_{SCM}(\rho)$.
It is easy to see that $\hat{f}(\rho)$ and $f(\rho)$ converges to an undetermined form as $\rho\uparrow 1$. 
Set $\hat{f}(\rho)\triangleq \frac{\hat{h}(\rho)}{\hat{g}(\rho)}$ with $\hat{g}$ and $\hat{h}$ being given by:
\begin{align*}
\hat{g}(\rho)&=(1-\rho)\left({\bf p}^*\widehat{\bf R}_N^{-1}(\rho){\bf p}\right)^2(1-\rho)\left(1-c+\frac{c}{N}\rho\tr\widehat{\bf R}_N^{-1}(\rho)\right)^2\\
\hat{h}(\rho)&={\bf p}^*\widehat{\bf R}_N^{-1}(\rho){\bf p}-\rho{\bf p}^*\widehat{\bf R}_N^{-2}(\rho){\bf p}
\end{align*}
The handling of the values of $\rho$ approaching $1$ can be performed using the l'Hopital's rule. 

The idea of the proof is to treat seperately the values of $\rho$ in the interval $\left[\kappa,1-\kappa\right]$ and those in $\left[1-\kappa,1\right]$ for some $\kappa$ small enough. In order to allow for a setting of $\kappa$ that is independent from $N$, we need to prove that:
\begin{equation}
\lim_{\rho\uparrow 1}\lim\sup_{N}\left|\hat{f}(\rho)-\frac{h_N^{'}(1)}{g_N^{'}(1)}\right|= 0.
\label{eq:necessary}
\end{equation}
To this end, a uniform variant of the l'Hopital's rule is essential. This variant is stated in the following Lemma:
\begin{lemma}
Let $f_N(\rho)=\frac{h_N(\rho)}{g_N(\rho)}$ with $h_N$ and $g_N$ being defined  in the interval $\rho\in\left[0,1\right]$. Assume that $h_N(1)=g_N(1)=0$ while $\lim\inf_N \left.\frac{dg_N}{d\rho}\right|_{\rho=1} > 0$, $\lim\sup_N \left.\frac{dg_N}{d\rho}\right|_{\rho=1}<+\infty$ and $\lim\sup_N \left.\frac{dh_N}{d\rho}\right|_{\rho=1}<+\infty$. Assume also that the second derivatives of $h_N$ and $g_N$  are uniformly bounded in $N$, that is:
\begin{align*}
&\sup_{\rho\in\left[0,1\right]} \limsup_N \left|h_N^{''}(\rho)\right| <+\infty\\
&\sup_{\rho\in\left[0,1\right]} \limsup_N \left|g_N^{''}(\rho)\right| <+\infty
\end{align*}
 Then,
$$
\lim_{\rho\to 1} \lim\sup_N\left|\frac{h_N(\rho)}{g_N(\rho)}- \frac{h_N^{'}(1)}{g_N^{'}(1)}\right|\to 0.
$$
\label{lemma:intermediate}
\end{lemma}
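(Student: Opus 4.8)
The plan is to reduce the statement to an elementary estimate by means of a second-order Taylor expansion of $h_N$ and $g_N$ about $\rho=1$, carried out uniformly in $N$. First, since $h_N(1)=g_N(1)=0$, Taylor's theorem with Lagrange remainder produces, for each $\rho\in[0,1]$ and each $N$, intermediate points $\xi_N,\zeta_N\in(\rho,1)\subset[0,1]$ with
\begin{align*}
h_N(\rho)&=h_N'(1)(\rho-1)+\tfrac12 h_N''(\xi_N)(\rho-1)^2,\\
g_N(\rho)&=g_N'(1)(\rho-1)+\tfrac12 g_N''(\zeta_N)(\rho-1)^2.
\end{align*}
Dividing numerator and denominator of $f_N(\rho)$ by $(\rho-1)\neq0$, subtracting $h_N'(1)/g_N'(1)$, and clearing the common denominator, a direct computation yields
\begin{align*}
&f_N(\rho)-\frac{h_N'(1)}{g_N'(1)}\\
&=\frac{(\rho-1)\left[\tfrac12 h_N''(\xi_N)g_N'(1)-\tfrac12 h_N'(1)g_N''(\zeta_N)\right]}{g_N'(1)\left[g_N'(1)+\tfrac12 g_N''(\zeta_N)(\rho-1)\right]}.
\end{align*}
This identity isolates the whole discrepancy into a factor $(\rho-1)$ times bounded quantities, which is exactly what is needed to send $\rho\to1$ uniformly.

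Next I would bound each piece uniformly in $N$. The hypotheses furnish constants $b>0$ and $C_g,C_h,M_g,M_h<\infty$ with $\liminf_N g_N'(1)\ge b$, $\limsup_N|g_N'(1)|\le C_g$, $\limsup_N|h_N'(1)|\le C_h$, and the second derivatives bounded by $M_h,M_g$. The numerator bracket is then controlled by $K\triangleq \tfrac12 M_h C_g+\tfrac12 C_h M_g$. For the denominator, the outer factor satisfies $g_N'(1)\ge b$ for all large $N$, while the inner bracket obeys $g_N'(1)+\tfrac12 g_N''(\zeta_N)(\rho-1)\ge b-\tfrac12 M_g|\rho-1|\ge b/2$ as soon as $|\rho-1|\le b/M_g$, a smallness restriction that is \emph{independent of $N$}. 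Consequently, for every such $\rho$,
$$
\limsup_N\left|f_N(\rho)-\frac{h_N'(1)}{g_N'(1)}\right|\le \frac{2K}{b^2}\,|\rho-1|,
$$
and letting $\rho\to1$ gives the conclusion of Lemma~\ref{lemma:intermediate}.

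The part requiring the most care is the uniformity in $N$, not the algebra. The remainder terms $h_N''(\xi_N)$ and $g_N''(\zeta_N)$ are evaluated at the $N$-dependent points $\xi_N,\zeta_N$ delivered by Taylor's theorem, so the hypothesis on the second derivatives must be used as a genuine bound $\sup_{\rho\in[0,1]}|h_N''(\rho)|,\sup_{\rho\in[0,1]}|g_N''(\rho)|\le M$ valid for all large $N$; this is legitimate in our application, where $h_N$ and $g_N$ are smooth functionals of the resolvent. The second delicate point is keeping the denominator $g_N(\rho)/(\rho-1)$ bounded away from zero uniformly, which is precisely where $\liminf_N g_N'(1)>0$ combines with the $N$-independent threshold $|\rho-1|\le b/M_g$ to guarantee the lower bound $b/2$. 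Once these two uniformities are secured, the final estimate is immediate.
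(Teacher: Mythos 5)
Your proof is correct and follows essentially the same route as the paper's: a second-order Taylor expansion of $h_N$ and $g_N$ about $\rho=1$ with Lagrange remainder, an algebraic rearrangement isolating a factor of $(\rho-1)$, and uniform-in-$N$ bounds on the remaining quantities. If anything, your treatment is slightly more careful than the paper's on one point, namely the $N$-independent restriction $|\rho-1|\le b/M_g$ that keeps the denominator $g_N'(1)+\tfrac12 g_N''(\zeta_N)(\rho-1)$ bounded below by $b/2$.
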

\begin{proof}
The proof relies on the Talor expansion of $h_N$ and $g_N$ in the vicinity of $1$, which asserts that for any $\rho\in\left[0,1\right]$ there exist $\xi_1$ and $\xi_2$ satisfying:
\begin{align*}
h_N(\rho)&=h_N^{'}(1)(\rho-1)+(\rho-1)^2h_N^{''}(\xi_1)\\
g_N(\rho)&=g_N^{'}(1)(\rho-1)+(\rho-1)^2g_N^{''}(\xi_2)\\
\end{align*}
We therefore have,
\begin{align*}
&\lim\sup_N\left|\frac{h_N(\rho)}{g_N(\rho)}-\frac{h_N^{'}(1)}{g_N^{'}(1)}\right|=\lim\sup_N\left|\frac{h_N^{'}(1)+(\rho-1)h_N^{''}(\xi_1)}{g_N^{'}(1)+(\rho-1)g_N^{''}(\xi_2)}-\frac{h_N^{'}(1)}{g_N^{'}(1)}\right|\\
&=\left|\frac{(\rho-1)h_N^{''}(\xi_1)g_N^{'}(1)-(\rho-1)h_N^{'}(1)g_N^{''}(\xi_2)}{g_N^{'}(1)\left(g_N^{'}(1)+(\rho-1)g_N^{''}(\xi_2)\right)}\right|\\
&\leq |\rho -1|\frac{\limsup_N |h_N^{''}(\xi_1) g_N^{'}(1)|+ \limsup_N |h_N^{'}(\xi_1) g_N^{''}(\xi_2)|}{\lim\inf\left|g_N^{'}(1)\right|^2}
\end{align*}
Tending $\rho$ to $1$ establishes the desired result.
\end{proof}

% we obtain:
%\begin{equation}
%\lim_{\rho\uparrow 1}\hat{f}(\rho)-\frac{N}{{\bf p}^*\widehat{\bf R}_N{\bf p}}= 0.
%\label{eq:essential}
%\end{equation}
Obviously functions $\hat{h}(\rho)$ and $\hat{g}(\rho)$ satisfiy the assumptions of Lemma \ref{lemma:intermediate}. Applying  l'Hopital's rule and using the differentiation rules $\frac{d}{d\rho}\widehat{\bf R}_N^{-1}(\rho)=-\widehat{\bf R}_N^{-2}(\rho)\left(-\widehat{\bf R}_N+{\bf I}\right)$ and $\frac{d}{d\rho}\widehat{\bf R}_N^{-2}(\rho)=-2\widehat{\bf R}_N^{-3}(\rho)\left(-\widehat{\bf R}_N+{\bf I}\right)$, we finally prove:
\begin{equation}
\lim_{\rho\uparrow 1}\lim\sup_N\left|\hat{f}(\rho)-\frac{N}{{\bf p}^*\widehat{\bf R}_N{\bf p}}\right|= 0.
\label{eq:essential}
\end{equation}
Now, using the fact $\frac{1}{N}{\bf p}^*\widehat{\bf R}_N{\bf p}-\frac{1}{N}{\bf p}^*{\bf C}_N{\bf p}\asto 0$ in conjunction to the last item in Assumption \ref{ass:gaussian}, we get:
\begin{equation}
\lim_{\rho\uparrow 1}\lim\sup_N\left|\hat{f}(\rho)-\frac{N}{{\bf p}^*{\bf C}_N{\bf p}}\right|\asto 0.
\label{eq:1}
\end{equation}
On the other hand, a careful analysis of the behaviour of $f(\rho)$ near $1$ reveals similarly that:
\begin{equation}
\lim_{\rho\uparrow 1}\lim\sup_N\left|f(\rho)-\frac{N}{{\bf p}^*{\bf C}_N{\bf p}}\right|\to0
\label{eq:2}
\end{equation}
Combining \eqref{eq:1} with \eqref{eq:2}, we finally obtain:
$$
\lim_{\rho\uparrow 1}\lim\sup_N\left|\hat{f}(\rho)-f(\rho)\right|\to0
$$

%Let $\epsilon>0$. From the continuity of  $\rho\mapsto f(\rho)$ and $\rho\mapsto \hat{f}(\rho)$ on the set $\left[0,1\right]$, we know that there exists $\ell>0$ such that:
%$$
%\sup_{\rho\in\left[1-\ell,\ell\right]}\left|\hat{f}(\rho)-\hat{f}(1-\ell)\right|<\epsilon
%$$
%and 
%$$
%\sup_{\rho\in\left[1-\ell,\ell\right]}\left|f(\rho)-f(1-\ell)\right|<\epsilon
%$$
%Take $\rho\in \left[1-\ell,\ell\right]$. Decompose $\hat{f}(\rho)-f(\rho)$ as:
%\begin{align*}
%\left|\hat{f}(\rho)-f(\rho)\right|&=\left|\hat{f}(\rho)-\hat{f}(1-\ell)+\hat{f}(1-\ell)-f(1-\ell)\right.\\
%&\left.+f(1-\ell)-f(\rho)\right|\\
%&\leq \left|\hat{f}(\rho)-\hat{f}(1-\ell)\right|+\left|\hat{f}(1-\ell)-f(1-\ell)\right|\\
%&+\left|f(1-\ell)-f(\rho)\right|\\
%&\leq 2\epsilon +\left|\hat{f}(1-\ell)-f(1-\ell)\right|
%\end{align*}
%The last term in the right-hand side of the above inequality can be assumed to be less than $\epsilon$ for $N$ large enough, thereby yielding:
%$$
%\sup_{\rho\in\left[1-\ell,\ell\right]}\left|f(\rho)-\hat{f}(\rho)\right|\asto 0.
%$$
It then suffices to prove Proposition \ref{prop:frho} on $\mathcal{R}_{\kappa}\triangleq \left[\kappa,1-\ell\right]$.
To this end, we need to recall the following relations satisfied by $m_N(-\rho)$:
$$
m_N(-\rho)=\frac{1-c}{\rho}+\frac{c}{\rho}\frac{1}{N}\tr{\bf Q}_N(\rho)
$$
and
$$
m_N(-\rho)=\left(\rho+c(1-\rho)\frac{1}{N}\tr{\bf C}_N{\bf Q}_N(\rho)\right)^{-1}
$$
Combining these relations, we therefore get:
$$
\frac{1}{N}\tr{\bf C}_N{\bf Q}_N(\rho)=\frac{\rho\left(1-\frac{1}{N}\tr {\bf Q}_N(\rho)\right)}{(1-c)(1-\rho)(1-c+\frac{c}{N}\tr {\bf Q}_N(\rho))}
$$
The result thus follows by using Proposition \ref{prop:estimation_variance} and noticing, in the same way as in \cite{couillet-kammoun-14}, that:
$$
\sup_{\rho\in\left[\kappa,1-\ell\right]}\left|\frac{1}{N}\tr{\bf Q}_N-\frac{1}{\rho}\tr \widehat{\bf R}_N^{-1}(\rho)\right|\asto 0,
$$
$$
\sup_{\rho\in\left[\kappa,1-\ell\right]}\left|\frac{{\bf p}^*}{\sqrt{N}}{\bf Q}_N{\bf p}-\frac{1}{\rho\sqrt{N}}{\bf p}^*\widehat{\bf R}_N^{-1}(\rho){\bf p}\right|\asto 0
$$
and
\begin{align}
&\sup_{\rho\in\left[\kappa,1-\ell\right]}\left|\frac{\frac{1}{N}{\bf p}^*{\bf C}_N{\bf Q}_N^2(\rho){\bf p}}{1-cm(-\rho)^2(1-\rho)^2\frac{1}{N}\tr{\bf C}_N^2{\bf Q}_N^2(\rho)}\right.\\
&-\left.\frac{\frac{1}{N}\left({\bf p}^*\widehat{\bf R}_N^{-1}{\bf p}-\rho{\bf p}^*\widehat{\bf R}_N^{-2}{\bf p}\right)}{(1-\rho)\left(\frac{1-\rho}{c}+c\frac{1}{N}\tr\widehat{\bf R}_N^{-1}(\rho)\right)}\right|\asto 0.
\end{align}
\label{app:frho}
\bibliographystyle{IEEEbib}
\bibliography{IEEEabrv,IEEEconf,./tutorial_RMT.bib}

\end{document}